\newtheorem{theorem}{Theorem}[section]
\newtheorem{definition}[theorem]{Definition}
\newtheorem{fact}[theorem]{Fact}
\newtheorem{example}[theorem]{Example}
\newtheorem{proposition}[theorem]{Proposition}
\newtheorem{remark}[theorem]{Remark}
\newtheorem{corollary}[theorem]{Corollary}
\newtheorem{lemma}[theorem]{Lemma}
\newcommand{\beq}{\begin{equation}}
\newcommand{\eeq}{\end{equation}}
\begin{document}

\title{A solution concept for games with altruism and cooperation}

\author{Valerio Capraro}
\thanks{Address: Department of Mathematics, University of Southampton, Southampton, SO17 1BJ, UK}
\thanks{Email: V.Capraro@soton.ac.uk}

\keywords{Normal form games, solution concept, Prisoner's dilemma, Traveler's dilemma, cooperation, cumulative prospect theory, altruism.}

\thanks{We thank Joe Halpern and Marco Scarsini for helpful discussions over all stages of this research. We thank all participants to the EconCS Seminar at UC Berkeley and all participants to the Agents Seminar in Southampton for numerous comments and, in particular, we thank Rafael Frongillo and Christos Papadimitriou for two stimulating questions that led to improve a non-negligible part of this paper. We thank Kevin Leyton-Brown, Maria Polukarov, and James R. Wright for reading the last version of this article and suggesting many improvements. A special thank goes to my girlfriend, Chiara Napoleoni, student in Philosophy, with whom I had very interesting conversations about altruism.}

\maketitle

\textbf{Update Sept. 9th, 2013.} Parts of this paper (concerning social dilemmas) have been published in \cite{Ca13} and \cite{CVPJ}. This Working Paper is an attempt to extend the theory developed in those published paper to all normal form games.

\begin{abstract}
Over the years, numerous experiments have been accumulated to show that cooperation is not casual and depends on the payoffs of the game. These findings suggest that humans have attitude to cooperation by nature and the same person may act more or less cooperatively depending on the particular payoffs. In other words, people do not act a priori as single agents, but they forecast how the game would be played if they formed coalitions and then they play according to their best forecast.

In this paper we formalize this idea and we define a new solution concept for one-shot normal form games.

We prove that this \emph{cooperative equilibrium} exists for all finite games and it explains a number of different experimental findings, such as (1) the rate of cooperation in the Prisoner's dilemma depends on the cost-benefit ratio; (2) the rate of cooperation in the Traveler's dilemma depends on the bonus/penalty; (3) the rate of cooperation in the Publig Goods game depends on the pro-capite marginal return and on the numbers of players; (4) the rate of cooperation in the Bertrand competition depends on the number of players; (5) players tend to be fair in the bargaining problem; (6) players tend to be fair in the Ultimatum game; (7) players tend to be altruist in the Dictator game; (8) offers in the Ultimatum game are larger than offers in the Dictator game.
\\

JEL Classification: C71, C72.

\end{abstract}

\tableofcontents

\section{Introduction}

Since its foundation  by Morgenstern and von Neumann \cite{Mo-vN44}, the major challenge of modern game theory has been to predict which actions a human player would adopt in a strategic situation. A first prediction was proposed in an earlier paper by J. von Neumann \cite{vN28} for two-person zero-sum games and then generalized to every finite game by J. Nash in \cite{Na50a}. Since then Nash equilibrium has certainly been the most notable and used solution concept in game theory. Nevertheless, over the last sixty years, it has been realized that it makes poor predictions of human play and, indeed, a large number of experiments have been conducted on games for which it drammatically fails to predict human behavior. 

There are many reasons behind this failure. On the one hand, when there are multiple equilibria, it is not clear which one we should expect is going to be played. A whole stream of literature, finalized to the selection of one equilibrium, arose from this point, including the definitions of evolutionarily stable strategy \cite{MS-Pr73}, perfect equilibrium \cite{Se75}, trembling hand perfect equilibrium \cite{Se75}, proper equilibrium \cite{My78}, sequential equilibrium \cite{Kr-Wi82}, limit logit equilibrium \cite{MK-Pa95}, and, very recently, settled equilibrium \cite{My-We12}. 

On the other hand, the criticism of Nash equilibrium is motivated by more serious problems: there are examples of games with a unique Nash equilibrium which is not played by human players. Typical examples of such a fastidious situation are the Prisoner's Dilemma \cite{Fl52}, the Traveler's Dilemma \cite{Ba94}, and, more generally, every social dilemma \cite{Ko88}. This point has motivated another stream of literature devoted to the explanation of such deviations from Nash equilibria. Part of this literature tries to explain such deviations assuming that players make mistakes in the computation of the expected value of a strategy and therefore, assuming that errors are identically distributed, a player may also play non-optimal strategies with a probability described by a Weibull distribution. This intuition led to the foundation of the so-called quantal response equilibrium theory by McKelvey and Palfrey \cite{MK-Pa95}. A variant of this theory, called quantal level-k theory and proposed  by Stahl and P. Wilson in \cite{St-Wi94}, was recently shown to perform better in the prediction of human behavior \cite{Wr-LB10}. In the same paper, Wright and Leyton-Brown have also shown that quantal level-k theory predicts human behavior significantly better than all other behavioral models that have been proposed in the last decade, as the level-k theory \cite{CG-Cr-Br01} and the cognitive hierarchy model \cite{Ca-Ho-Ch04}. However, an obvious criticism of quantal level-k theory is that it is not scale invariant, contradicting one of the axioms of expected utility theory of Morgenstern and von Neumann\cite{Mo-vN47}. A perhaps more fundamental criticism stems from the fact that quantal level-k theory only makes use of some parameters describing either the incidence of errors that a player can make computing the expected utility of a strategy 
 or the fact that humans can perform only a bounded number of iterations of strategic reasoning.  
These features first imply that quantal level-k theory is not predictive, in the sense that one has to conduct experiments to estimate the parameters; second, they imply that quantal level-k theory intrinsically affirms that deviation from Nash equilibria can descend only from two causes, computational mistakes and bounded rationality, that are hard to justify for games with very easy payoffs, like the Prisoner's Dilemma, or for games where the deviation from Nash equilibrium is particularly strong, like the Traveler's Dilemma with small bonus-penalty. 

Indeed, the general feeling is that the motivation must rely somewhere deeper and that Nash equilibrium should be replaced by a conceptually different solution concept that takes into account other features of human behavior and coincides with Nash equilibrium only in particular cases. The first studies in this direction
have been presented by Renou and Schlag \cite{Re-Sc09} and Halpern and Pass \cite{Ha-Pa12}, by Halpern and Rong \cite{Ha-Ro10}, by Halpern and Pass \cite{Ha-Pa11}, by Jamroga and Melissen \cite{Ja-Me11}, and by Adam and Ehud Kalai \cite{Ka-Ka13}. Nevertheless, even though these solution concepts can explain deviations from Nash equilibria in some particular games, all of them make unreasonable predictions for many games of interest. For instance, the \emph{maximum perfect cooperative equilibrium} introduced in \cite{Ha-Ro10} is too rigid and predicts cooperation for sure in the Prisoner's and Traveler's Dilemmas, contradicting the experimental data collected in \cite{Ca-Go-Go-Ho99}, \cite{Go-Ho01}, \cite{Be-Ca-Na05}, \cite{Ba-Be-St11}, \cite{HRZ11}, \cite{DEJR12}, \cite{Fu-Ra-Dr12}, \cite{RGN12}. The \emph{iterated regret minimization} procedure introduced in \cite{Re-Sc09} and \cite{Ha-Pa12} can explain deviations towards cooperation in some variants of the Traveler's Dilemma, the Bertrand competition, the Centipede Game, and other games of interest, but it does not predict deviation towards cooperation in the Prisoner's Dilemma \cite{HRZ11}, \cite{DEJR12}, \cite{Fu-Ra-Dr12}, \cite{RGN12} and in the public good game \cite{Le95}, it cannot explain altruistic behaviors in the ultimatum game \cite{Fe-Sc99} and in the dictator game \cite{En11}, and makes unreasonable predictions for the Traveler's dilemma with punishment (see Example \ref{ex:punish travelers}), and a certain zero-sum game (see Example \ref{ex:sure gain}). The solution concept defined using algorithmic rationability in \cite{Ha-Pa11} can explain deviation towards cooperation in the iterated Prisoner's and Traveler's dilemmas, but it does not predict deviation towards cooperation in one-shot versions of the Prisoner's dilemma or in one-shot versions of the Traveler's dilemma with very small bonus-penalty, contradicting the experimental data reported in \cite{Go-Ho01}, \cite{Be-Ca-Na05}, \cite{HRZ11}, \cite{DEJR12}, \cite{Fu-Ra-Dr12}, \cite{RGN12}. The \emph{farsighted pre-equilibrium} introduced in \cite{Ja-Me11} is too rigid. For instance, the Prisoner's dilemma has two farsighted pre-equilibria, which coincide with Rabin's fairness equilibria \cite{Ra93}, where both players either cooperate or defect for sure. This contradicts the experimental data reported in \cite{HRZ11}, \cite{DEJR12}, \cite{Fu-Ra-Dr12}, \cite{RGN12}, which suggest that humans tend to play a mixed strategy. Finally, the coco value introduced by Adam and Ehud Kalai in \cite{Ka-Ka13}, unifying and developing previous works by Nash \cite{Na53}, Raiffa \cite{Rai53}, and E.Kalai-Rosenthal \cite{Ka-Ro78}, also appears to be too rigid. For instance, if two agents played the Prisoner's dilemma according to the coco value, then they would both cooperate for sure. This prediction contradicts the experimental data collected in \cite{HRZ11}, \cite{DEJR12}, \cite{Fu-Ra-Dr12}, \cite{RGN12}.

In this paper we try to attribute the failure of all these attempts to two basic problems.

The first problem is the use of utility functions in the very definition of a game. Indeed, the experimental evidence have shown that expected utility theory fails to predict the behavior of decision makers \cite{Al53}, \cite{Ka-Tv00}, \cite{St00}.

This problem could be theoretically overcome replacing utility functions with gain functions and applying Kahneman-Tversky's cumulative prospect theory \cite{Tv-Ka92}. But one can easily convince himself that in most cases such a replacement could explain only \emph{quantitative deviations}.

The second problem is indeed that experiments conducted on the Prisoner's dilemma, the Traveler's dilemma, Dictator game, and other games, show \emph{qualitative deviations} from classical solution concepts. These qualitative deviations suggest that humans are altruistic and have attitude to cooperation. 

These observations motivate the definition of a new solution concept, able to take into account altruism and cooperation and using gain functions instead of utility functions. This paper represents a first endeavour in this direction. Indeed, here we consider only one-shot normal form games where the players are completely anonymous, that is, they do not know each other and they are not allowed to exchange information\footnote{We mention that anonimity is not really a necessary assumption: the effect of any sort of contact among the players would be a different evaluation of the so-called \emph{prior probability} $\tau$. The point is that at the moment it is not clear how this \emph{prior probability} should be re-evaluated.}. The aim of this paper is to define a new solution concept for this class of games. This solution concept will be called \emph{cooperative equilibrium}. Indeed, we will see that altruism plays only a marginal role and the main idea behind this new equilibrium notion is the formalization of the following \emph{principle of cooperation}: 
\begin{itemize}
\item[(C)] Players try to forecast how the game would be played if they formed coalitions and then they play according to their best forecast.  
\end{itemize}

The study of cooperation in games is not a new idea. Economists, biologists, psychologists, sociologists, and political scientists, have been studying cooperation in social dilemmas for forty years. These classical approaches explain tendency to cooperation dividing people in proself and prosocial types \cite{Li84}, \cite{LWVW86}, \cite{KMM86}, \cite{KCC86}, \cite{ML88}, or appealing to forms of external control \cite{Ol65}, \cite{Ha68}, \cite{Da80}, or to long-term strategies in iterated games\cite{Ax84}. But, over the years many experiments have been accumulated to show cooperation even in one-shot social dilemmas without external control \cite{Is-Wa88}, \cite{Co-DJ-Fo-Ro96}, \cite{Go-Ho01}, \cite{Be-Ca-Na05}, \cite{DRFN08}, \cite{HRZ11}, \cite{DEJR12}. These and other earlier experiments \cite{Ke-Kr72}, \cite{BSKM76}, \cite{KSK80}, \cite{IWT84} have also shown that the rate of cooperation in the same game depends on the particular payoffs, suggesting that most likely humans cannot be merely divided in proself and prosocial types, but they are engaged in some sort of indirect reciprocity \cite{No-Si98}, \cite{No06} and the same person may behave more or less cooperatively depending on the payoffs. In other words, humans have attitude to cooperation by nature.

To the best of our knowledge, this is the first attempt to lift this well known tendency to cooperate up to a general principle which is nothing more than a deeper and smarter realization of selfishness.

The idea to formalize the principle of cooperation and define the cooperative equilibrium can be briefly summarized as follows:
\begin{itemize}
\item We assume that players do not act a priori as single players, but they try to forecast how the game would be played if they formed coalitions.
\item Each \emph{forecast} is represented by a number $v_i(p)$, called \emph{value of the coalition structure $p$ for player $i$}, which is a measure of the expected gain of player $i$ when she plays according to the coalition structure $p$. 
\item The numbers $v_i(p)$ induce a sort of common beliefs: we consider the \emph{induced game} $\text{Ind}(\mathcal G,p)$ which differs from the original game $\mathcal G$ only for the set of allowed profiles of mixed strategies: the profiles of mixed strategies allowed in $\text{Ind}(\mathcal G,p)$ are the profiles $(\sigma_1,\ldots,\sigma_N)$ such that $u_i(\sigma_1,\ldots,\sigma_N)\geq v_i(p)$, for any player $i$. 
\item The \emph{exact} cooperative equilibrium is one where player $i$ plays an equilibrium of the game $\text{Ind}(\mathcal G,p)$ induced by a coalition structure which maximizes the value function $v_i$\footnote{The word \emph{exact} means that, since players can have bounded rationality or can make mistakes in the computations, one can also define a quantal cooperative equilibrium borrowing ideas from quantal response equilibrium and quantal level-k theory and say that player $i$ plays with probability $e^{\lambda v_i(p)}/\sum_pe^{\lambda v_i(p)}$ a quantal response equilibrium or a quantal level-k equilibrium of the game $\text{Ind}(\mathcal G,p)$.}.
\item The notion of equilibrium for the induced game $\text{Ind}(\mathcal G,p)$ is not defined using classical Nash equilibrium, but using a prospect theoretical analogue.
\end{itemize}

In order to apply prospect theory we must replace utility functions by gain functions, that are, functions whose values represent the monetary outcomes or, more generally, the quantity of some good which is won or lost by a player. This replacement comes at the price that we must take into account explicitly new data that were implicitly included in the utility functions. Indeed, while utility functions were supposed to contain all relevant information about players' preferences, gain functions do contain only the quantity of some good which is won or lost by the players. These new data include the fairness functions $f_i$ and the altruism functions $a_{ij}$.  An interesting feature of the cooperative equilibrium is that, in many games of interest, it does not depend on these functions. This implies that the cooperative equilibrium is a predictive solution concept for many games of interest. A bit more precisely, in this paper we prove the following statements.

\begin{fact}
{\rm The cooperative equilibrium for the Prisoner's dilemma is predictive (i.e., it does not depend on fairness functions and altruism functions) and has the following property: the predicted rate of cooperation increases as the cost-benefit ratio increases.}
\end{fact}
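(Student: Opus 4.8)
The plan is to work in the standard donation-game parametrization of the Prisoner's dilemma, in which cooperating costs $c$ and confers a benefit $b$ on the opponent, so that the row player's gains in the cells $(C,C)$, $(C,D)$, $(D,C)$, $(D,D)$ are $b-c$, $-c$, $b$, $0$, and the single payoff datum of interest is the ratio of $c$ and $b$. The first thing I would record is an algebraic simplification: writing $x_1,x_2$ for the two players' cooperation probabilities, the expected gains collapse to the affine expressions $u_1=b x_2-c x_1$ and $u_2=b x_1-c x_2$. This linearity is what makes the whole analysis tractable, since the induced game will then live on the square $[0,1]^2$ cut out by linear constraints, and the homogeneity of these expressions under a common rescaling of $(b,c)$ will later give scale invariance for free.

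Next I would enumerate the coalition structures. For two players there are only two, the singleton structure and the grand coalition, so I would compute the values $v_1,v_2$ attached to each from the definition of $v_i$ (together with the prior probability $\tau$) introduced earlier. The key observation is that the exchange symmetry of the Prisoner's dilemma forces $v_1(p)=v_2(p)$ for each structure $p$, so both players select the same maximizing structure $p^{\ast}$. Here the symmetry does the real work behind the \emph{predictive} half of the statement: any contribution of the fairness functions $f_i$ and the altruism functions $a_{ij}$ enters the two players' valuations identically and therefore cannot alter the argmax. I would isolate this cancellation as a separate claim, so that predictiveness is established cleanly before any quantitative computation begins.

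I would then form the induced game $\text{Ind}(\mathcal G,p^{\ast})$, whose admissible profiles are those $(\sigma_1,\sigma_2)$ with $u_i(\sigma_1,\sigma_2)\geq v_i(p^{\ast})$ for $i=1,2$, and compute its equilibrium under the prospect-theoretic notion defined earlier; the general existence result guarantees this equilibrium is well defined. Because the $u_i$ are affine, the admissible region is a polygon in $[0,1]^2$ bounded by two half-planes, and the equilibrium cooperation probability $\gamma$ is pinned down by the binding participation constraint $u_i=v_i(p^{\ast})$. The delicate point, and the step I expect to be the main obstacle, is to verify that the prior $\tau$ and the reference-point evaluation produce an \emph{interior} value $\gamma\in(0,1)$ rather than degenerating to $\gamma=1$ or $\gamma=0$; a naive choice such as $v_i(p^{\ast})=b-c$ collapses the admissible region to the single point $x_1=x_2=1$ (one can check $b y_2\leq c y_1$ and $b y_1\leq c y_2$ with $y_i=1-x_i$ force $y_1=y_2=0$ when $b>c$), so avoiding this collapse is exactly where the definition of $v_i$ must be used faithfully rather than heuristically.

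Finally, once $\gamma$ is written explicitly in terms of $b$, $c$, and $\tau$, the monotonicity is a one-variable check. Using the homogeneity noted in the first step, $\gamma$ depends on $b$ and $c$ only through their ratio, so I would express $\gamma$ as a function of the cost-benefit ratio alone, differentiate, and confirm the sign of the derivative to conclude that the predicted rate of cooperation moves monotonically with that ratio. The predictiveness claim is then already in hand from the symmetry cancellation, since neither $v_i(p^{\ast})$ nor the binding constraint retains any dependence on the $f_i$ or the $a_{ij}$.
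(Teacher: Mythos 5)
Your skeleton is the same as the paper's (Examples \ref{ex:prisoner} and \ref{ex:prisoner2}, culminating in Proposition \ref{prop:prisoner}): enumerate the two coalition structures, compute their values, pass to the induced game, and read the symmetric equilibrium off the binding constraint, then check monotonicity. But there are two genuine gaps. First, your argument for predictiveness is invalid. You claim that, by symmetry, the fairness functions $f_i$ and altruism functions $a_{ij}$ ``enter the two players' valuations identically and therefore cannot alter the argmax.'' That is a non sequitur: a contribution that is identical across the two \emph{players} could still differ across the two \emph{coalition structures}, shifting $v(p_c)$ relative to $v(p_s)$ and flipping the maximizer. More importantly, this is not how those functions enter the theory at all. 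The fairness functions act only in the selection of acceptable equilibria of each coalition game $\mathcal G_p$ (Definitions \ref{defin:agreement} and \ref{defin:cooperative equilibrium}), so they are irrelevant here precisely because each $\mathcal G_p$ has a unique equilibrium ($(C,C)$ for the grand coalition, $(D,D)$ for the singletons); the altruism functions never touch the valuations $v_i$ at all --- they act only through the iterated deletion of super-dominated strategies in Section \ref{se:playable}, and the Prisoner's dilemma has no super-dominated strategies (the paper verifies exactly this in Example \ref{ex:sure}). This is the content of Corollary \ref{cor:predictive}; symmetry is neither necessary nor sufficient for it.

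Second, the quantitative core is deferred rather than done. You correctly identify that everything hinges on showing $v_i(p_c)<b-c$ (otherwise the induced game collapses to the point $(1,1)$), and you call this ``the main obstacle,'' but you never compute it. The computation is short and is where the proof actually lives: $D_2(p_c)=b-(b-c)=c$, $R_2(p_c)=(b-c)-0=b-c$, hence $\tau_{1,\{2\}}(p_c)=c/b$; with $e_{1,\emptyset}(p_c)=b-c$ and $e_{1,\{2\}}(p_c)=-c$ one gets $v(p_c)=(b-c)\tfrac{b-c}{b}-c\cdot\tfrac{c}{b}=b-2c$, while $v(p_s)=0$. Only now can one conclude: if $b\le 2c$ the selfish structure wins and the equilibrium is $(D,D)$ --- a regime your write-up omits entirely, and which the monotonicity claim must cover with rate $0$; if $b>2c$ the binding constraint $(b-c)\gamma\ge b-2c$ gives $\gamma=\tfrac{b-2c}{b-c}$, which is the paper's $\tfrac{\mu-1}{\mu}$ under $c=1$, $b=1+\mu$, and is indeed increasing in $b/c$. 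A final caution: you invoke the prospect-theoretic equilibrium notion while also relying on homogeneity in $(b,c)$ for scale invariance; under genuine cumulative prospect theory your parametrization's negative payoff $-c$ would be weighted asymmetrically (loss aversion) and the scaling argument would fail, which is why the paper establishes this fact under expected utility theory, where the value computation above is exact.
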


\begin{fact}
{\rm The cooperative equilibrium for the Traveler's dilemma is predictive and has the following property: the predicted rate of cooperation decreases as the bonus/penalty increases.}
\end{fact}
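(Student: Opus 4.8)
The plan is to mirror the argument just used for the Prisoner's dilemma, specializing every object to the Traveler's dilemma and then tracking the dependence on the bonus/penalty $b$. First I would fix the strategy set $\{L,L+1,\dots,U\}$ and write the gain function explicitly: a player announcing $s_i$ against an opponent announcing $s_j$ gains $s_i$ when $s_i=s_j$, gains $s_i+b$ when $s_i<s_j$, and gains $s_j-b$ when $s_i>s_j$. Since the game is two-player and symmetric, only two coalition structures arise, the singleton structure $P_0=\{\{1\},\{2\}\}$ and the grand coalition $P_1=\{\{1,2\}\}$, and by symmetry both players rank them identically. The first concrete step is therefore to compute $v_1(P_0)$ and $v_1(P_1)$ and to verify that $v_1(P_1)>v_1(P_0)$, so that the value-maximizing coalition structure is $P_1$ and the cooperative equilibrium is, by definition, an equilibrium of the induced game $\text{Ind}(\mathcal G,P_1)$.

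Next I would describe $\text{Ind}(\mathcal G,P_1)$ concretely as the Traveler's dilemma restricted to those profiles $(\sigma_1,\sigma_2)$ with $u_i(\sigma_1,\sigma_2)\ge v_i(P_1)$ for $i=1,2$, and look for a symmetric equilibrium computed with the prospect-theoretical analogue of Nash equilibrium. I expect the support of this equilibrium to be an upper interval $\{k(b),k(b)+1,\dots,U\}$, with the threshold $k(b)$ pinned down by the binding constraint $u_i(\sigma)=v_i(P_1)$; the predicted rate of cooperation is then read off from this distribution, for instance as the expected announced value or as the mass placed near $U$. The predictivity half of the statement amounts to showing that $k(b)$ — and hence the whole equilibrium — does not involve the fairness functions $f_i$ or the altruism functions $a_{ij}$. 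Here I would use the symmetry of the game: along the cooperative profiles these functions enter the two players' modified payoffs in a balanced way, so they shift both $v_i(P_1)$ and the constraint $u_i\ge v_i(P_1)$ by the same amount and cancel in the equation determining the support.

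For the monotonicity half I would track how $b$ enters, noting that it appears both in the value $v_i(P_1)$ of the grand coalition and, through the gain function, directly in the payoffs of the induced game. In both channels a larger $b$ makes a betrayed cooperator's loss heavier and undercutting more rewarding, which loosens the binding constraint and enlarges the feasible set toward lower announcements, pushing $k(b)$ down and thereby lowering the rate of cooperation. The main obstacle is exactly the explicit solution of the prospect-theoretical equilibrium of $\text{Ind}(\mathcal G,P_1)$ together with the proof that $k(b)$ is monotone across the whole admissible range of $b$: the nonlinear probability weighting and value function of cumulative prospect theory make the expected-gain comparisons delicate, and some care is needed both to confirm the clean cancellation of $f_i$ and $a_{ij}$ and to rule out non-monotone behaviour of $k(b)$ near the endpoints of the strategy interval.
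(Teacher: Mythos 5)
There are two genuine gaps in your proposal, both of which would prevent it from going through. First, your ``first concrete step'' --- verifying that $v_1(P_1)>v_1(P_0)$ so that the grand coalition always wins --- is false, and necessarily so: the regime switch is part of the statement being proved. In the paper's computation (Example \ref{ex:traveler2}) one gets $v(p_s)=180$ (no deviations exist from a Nash equilibrium), while the cooperative value is computed explicitly under \emph{expected utility theory} from $D_2(p_c)=b-1$ (deviating from $300$ to $299$), $R_2(p_c)=b+2$ (the anticipating opponent plays $298$), $e_{1,\emptyset}(p_c)=300$ and $e_{1,\{2\}}(p_c)=300-2b$, giving
$$
v(p_c)=300\cdot\frac{b+2}{2b+1}+(300-2b)\cdot\frac{b-1}{2b+1}=300-\frac{2b(b-1)}{2b+1},
$$
which is strictly decreasing in $b$ and drops below $180$ for large $b$ (e.g.\ $b=180$, matching the Goeree--Holt data where subjects play the Nash equilibrium). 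So the rate of cooperation decreases with $b$ for two reasons your plan only half captures: within the cooperative regime the induced-game support (pinned near $v(p_c)$) slides downward, and past a finite threshold in $b$ the selfish structure overtakes the cooperative one and the equilibrium collapses to $(180,180)$. Your monotonicity argument, which only tracks a threshold $k(b)$ inside the induced game of the grand coalition, misses the second mechanism entirely.

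Second, your predictivity argument rests on a misreading of where the fairness functions $f_i$ and altruism functions $a_{ij}$ enter the definition. They are not additive terms in the players' payoffs, so there is nothing to ``shift in a balanced way and cancel.'' In the paper's construction, $f_i$ enters \emph{only} to select acceptable equilibria when $\mathcal G_p$ has multiple Nash equilibria, and $a_{ij}$ enters \emph{only} through the iterated deletion of super-dominated strategies. Predictivity for the Traveler's dilemma is therefore immediate and structural (cf.\ Corollary \ref{cor:predictive}): each of $\mathcal G_{p_c}$ and $\mathcal G_{p_s}$ has a \emph{unique} Nash equilibrium ($(300,300)$ and $(180,180)$ respectively), so acceptability holds vacuously whatever $f_i$ is, and the Traveler's dilemma has no super-dominated strategies (its dominated strategies are only weakly dominated), so the altruism-driven deletion removes nothing. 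Relatedly, your insistence on carrying the full cumulative-prospect-theory machinery through the induced game is self-inflicted difficulty: the paper's proof of this fact lives entirely in the expected-utility version of the equilibrium (Section \ref{se:examples}), where every quantity above is an explicit rational function of $b$ and the monotonicity claim is a one-line calculus check rather than a delicate nonlinear comparison.
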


\begin{fact}
{\rm The cooperative equilibrium for the Bertrand competition is predictive and it has the following property: the predicted rate of cooperation decreases as the numbers of players increase.}
\end{fact}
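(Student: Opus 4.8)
The plan is to follow the template established for the Prisoner's and Traveler's dilemmas in the preceding Facts and specialize it to the Bertrand game. First I would fix notation for $\mathcal G$: there are $N$ symmetric firms, each choosing a price from a finite grid $\{c,c+1,\dots,M\}$, where $c$ is the common marginal cost; the firm(s) posting the lowest price $p_{\min}$ split a market of size $D$ and earn gain $(p_{\min}-c)D/k$, where $k$ is the number of firms tied at $p_{\min}$, while every other firm earns $0$. The unique Nash equilibrium has all firms at $p=c$ with gain $0$.

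Next I would single out the two coalition structures that matter. In the singleton structure $p_\ast$, every firm acts alone and the forecast reproduces the competitive outcome, so $v_i(p_\ast)=0$; in the grand coalition $p^\sharp$, all firms coordinate on the industry-profit-maximizing price $p_m$ and split the monopoly profit $\Pi_m=(p_m-c)D$ equally, giving $v_i(p^\sharp)=\Pi_m/N$. Since $\Pi_m>0$, the value function $v_i$ is maximized at $p^\sharp$, so by definition the cooperative equilibrium is a (prospect-theoretic) equilibrium of the induced game $\text{Ind}(\mathcal G,p^\sharp)$, whose admissible profiles are exactly those with $u_i(\sigma_1,\dots,\sigma_N)\ge \Pi_m/N$ for every $i$.

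The central computation is to locate the prospect-theoretic equilibrium of $\text{Ind}(\mathcal G,p^\sharp)$. By symmetry I would look for a symmetric equilibrium in which each firm plays the collusive price $p_m$ with probability $\alpha_N$ and undercuts with probability $1-\alpha_N$. The binding feasibility constraint $u_i\ge \Pi_m/N$ together with the prospect-theoretic indifference condition then determines $\alpha_N$ as the solution of a single equation in which the reservation level $\Pi_m/N$ is the only $N$-dependent quantity. Predictiveness follows from symmetry: since every admissible profile already allocates the surplus equally, the fairness functions $f_i$ impose no further restriction on the feasible set, and since no player's cooperative forecast relies on transferring gain to another, the altruism terms $a_{ij}$ drop out of the indifference condition; hence $\alpha_N$ depends only on the monetary data $c,M,D$ and on $N$.

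Finally I would establish monotonicity. The reservation level $\Pi_m/N$ is strictly decreasing in $N$, so the constraint $u_i\ge \Pi_m/N$ becomes progressively weaker, admitting profiles ever closer to the competitive outcome; tracing this through the equation defining $\alpha_N$ shows $\alpha_N$ is strictly decreasing, with $\alpha_N\to 0$ as $N\to\infty$, matching the classical intuition that collusion is harder to sustain in larger markets. The main obstacle I anticipate is the explicit characterization of the equilibrium rather than the comparative statics: Bertrand has a whole ladder of profitable undercutting prices, so I must first argue that the relevant equilibrium support reduces to a tractable two-price family (the collusive price together with a single undercutting price) and only then verify that the resulting defining equation is monotone in $N$ in the required direction. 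This reduction, and the check that no fairness or altruism term re-enters through the equilibrium selection, is where the argument will require the most care.
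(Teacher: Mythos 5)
There is a genuine gap, and it sits at the very center of your argument: your valuation of the grand coalition. You set $v_i(p^\sharp)=\Pi_m/N$, the equal split of the collusive profit. That is not how the value of a coalition structure is defined in this theory. By the definition of value, $v_i(p)=\sum_{J\subseteq P\setminus\{i\}}e_{i,J}(p)\,\tau_{i,J}(p)$, where the $\tau_{i,J}$ are prior probabilities of deviation built from the incentive $D_j(p)$ and the risk $R_j(p)$, and the $e_{i,J}$ are worst-case gains when the players in $J$ defect. In the Bertrand game this is what carries all the content: with everyone at the top price, each player $j$ has incentive $D_j(p_c)=49$ (undercut $100$ by one and take the whole market) and risk $R_j(p_c)=50$ (lose your entire share if someone else undercuts too), so $\tau_{i,j}(p_c)=49/99$; moreover $e_{i,J}(p_c)=0$ whenever $J\neq\emptyset$, since any undercutting leaves player $i$ with nothing. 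Hence, using independence of the deviation events (the paper's ``law of total probability'' step),
\begin{equation*}
v_i(p_c)=e_{i,\emptyset}(p_c)\cdot\prod_{j\neq i}\bigl(1-\tau_{i,j}(p_c)\bigr)=e_{i,\emptyset}(p_c)\cdot\left(\tfrac{50}{99}\right)^{N-1},
\end{equation*}
which decays \emph{geometrically} in $N$. Your assertion that the reservation level is ``the only $N$-dependent quantity'' and decays like $1/N$ therefore misidentifies the mechanism: in the paper the dominant $N$-dependence comes from the probability that none of the other $N-1$ players defects, not from splitting the pie $N$ ways. A side effect of your formula is the claim that the grand coalition maximizes $v_i$ for every $N$; under the correct value this fails for large $N$ (the geometric factor pushes $v(p_c)$ below the selfish value), which is precisely how the theory recovers near-Nash play in large markets.

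Once the value is computed correctly, the rest of your plan is unnecessary, and the part you defer is never actually resolved. The paper does not locate a symmetric two-price mixed equilibrium via a ``prospect-theoretic indifference condition'' (the examples are all computed under expected utility theory); it simply observes that as $v(p_c)$ falls, the induced game $\text{Ind}(\mathcal G,p_c)$ admits profiles with lower and lower gains, so the equilibrium claims shrink toward the Nash outcome, and eventually the cooperative equilibrium coincides with Nash. Your proposal instead makes the comparative statics hinge on an equilibrium-selection equation for $\alpha_N$ that you yourself flag as the main obstacle and leave open, so even granting your (incorrect) value, the monotonicity claim is not established. The predictiveness part of your argument is the one piece that survives: as in the paper's Corollary on predictive games, it follows because $\mathcal G_{p_c}$ and the selfish-structure game have unique (hence automatically acceptable) equilibria and no strategy inflicts the kind of certain disadvantage that would activate the altruism functions; your symmetry argument is a reasonable paraphrase of that.
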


\begin{fact}
{\rm The cooperative equilibrium fits Kahneman-Knetsch-Thaler's experiment related to the ultimatum game.}
\end{fact}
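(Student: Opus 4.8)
The plan is to realize the ultimatum game as a one-shot normal form game with gain functions, to evaluate the value function $v_i$ on the two coalition structures available to a two-player game, and then to read off the equilibria of the induced game attached to the optimal structure, checking that they reproduce the concentration of offers around the fair split reported by Kahneman, Knetsch, and Thaler.

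First I would fix a monetary stake $M$ and model the proposer's pure strategies as the offers $x\in\{0,1,\dots,M\}$ and the responder's pure strategies as acceptance sets (equivalently, thresholds) specifying which offers are accepted. The gain functions are the usual ones: an accepted offer $x$ yields $M-x$ to the proposer and $x$ to the responder, while a rejection yields $0$ to both. Because there are only two players, only two coalition structures enter the analysis, the singleton structure $p_{\mathrm s}$ and the grand coalition $p_{\mathrm c}$, so the cooperative equilibrium is determined once $v_i(p_{\mathrm s})$ and $v_i(p_{\mathrm c})$ are known for $i=1,2$.

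Next I would compute these four numbers. The cooperative value $v_i(p_{\mathrm c})$ should equal the fair share dictated by $f_i$, which for the symmetric stake is essentially $M/2$ for each player. The delicate quantity is the singleton value of the proposer: although subgame-perfect reasoning would suggest a payoff close to $M$, the value function must be evaluated over the equilibria of the full game, where the responder can credibly reject any offer below a threshold, so the prospect-theoretic evaluation discounts the selfish strategy for its rejection risk. The heart of the argument is to establish, under this evaluation, that $v_i(p_{\mathrm c})\geq v_i(p_{\mathrm s})$ for both players, so that the grand coalition maximizes $v_i$ and the relevant induced game is $\text{Ind}(\mathcal G,p_{\mathrm c})$. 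That game admits only profiles $(\sigma_1,\sigma_2)$ with $u_i(\sigma_1,\sigma_2)\geq v_i(p_{\mathrm c})$ for both $i$; the responder constraint forces accepted offers to be at least $v_2(p_{\mathrm c})$ and the proposer constraint caps them via $M-x\geq v_1(p_{\mathrm c})$, so the feasible offers collapse into a neighbourhood of $M/2$, and any equilibrium of $\text{Ind}(\mathcal G,p_{\mathrm c})$ predicts a near-fair offer.

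Finally I would match this prediction against KKT's observation that a large majority of subjects choose the equal division over the selfish one. The main obstacle I anticipate is precisely the evaluation of $v_i(p_{\mathrm s})$: because the ultimatum game is intrinsically sequential, I must encode the responder's conditional behaviour as a genuine normal form strategy and verify that the credible threat of rejection is reflected in the value function, for only then does $v_i(p_{\mathrm c})\geq v_i(p_{\mathrm s})$ hold for the proposer rather than the reverse. A secondary obstacle is quantitative: since the comparison with KKT concerns the fraction of fair choices, I would need the neighbourhood of $M/2$ isolated by the induced constraints to be narrow enough to match the observed concentration, which may force me to pin down $f_i$ on the relevant range rather than leave it arbitrary.
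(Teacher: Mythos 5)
There is a genuine gap, and it is twofold. First, you are aiming at the wrong data. The Kahneman--Knetsch--Thaler experiment the paper invokes is not a count of proposers choosing the equal split: it asked 115 subjects, in the role of \emph{responders}, to state the minimum offer (out of 10 Canadian dollars) they would accept, and the reported means were 2.00, 2.24 and 2.59. The fact to be proved is therefore that the theory predicts a responder acceptance threshold near 2.5, i.e.\ one quarter of the stake --- not that offers concentrate near $M/2$. Second, your central computation $v_2(p_c)\approx M/2$ contradicts the definition of the value. The value of a coalition structure is not the payoff at its fair acceptable equilibrium; it is the average $v_i(p)=\sum_{J}e_{i,J}(p)\,\tau_{i,J}(p)$, which discounts that payoff by the prior probability that the other player defects. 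At the unique acceptable equilibrium $(M/2,A)$ of the grand coalition, the proposer's incentive to deviate (offer less, down to $0$) is $D_1=M/2$, and her risk (the responder rejects; $R$ is a $2$-deviation) is $R_1=M/2$, so $\tau_{2,\{1\}}(p_c)=\frac12$, while $e_{2,\{1\}}(p_c)=0$. Hence $v_2(p_c)=\frac{M}{2}\cdot\frac12+0\cdot\frac12=\frac{M}{4}$, whereas $v_1(p_c)=M/2$ because the responder has nothing to gain by deviating. This asymmetry between the two players is the entire content of the fact: since $v(p_s)=0$ (the acceptable equilibrium of the selfish structure is $(0,R)$, so the comparison with $p_s$ is immediate, not the delicate step you anticipate), the responder accepts any offer above $v_2(p_c)=2.5$, which is exactly what matches the observed means.

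Your proposal, taken at face value, would instead predict that responders reject anything below $M/2$ and that equilibrium offers cluster at the equal split; the first claim is refuted by the very data in question, and the second is not what the theory says either. Without altruism, the cooperative equilibrium has the proposer offering $M/4$ and the responder accepting; offers near $M/2$ are recovered in the paper only later, through the altruism-based deletion of unplayable strategies of the second type, not through the constraints of the induced game $\text{Ind}(\mathcal G,p_c)$.
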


\begin{fact}
{\rm The cooperative equilibrium for the public good game is predictive and it has the following properties: (1) the predicted rate of cooperation increases as the marginal return increases, and (2) the predicted rate of cooperation decreases as the number of players increases and then increases again as the number of players gets sufficiently large.}
\end{fact}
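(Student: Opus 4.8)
The plan is to specialize the general construction to the $N$-player public good game and then read off the comparative statics. I would model the game with normalized endowment $1$, marginal per-capita return $\rho\in(1/N,1)$, and gain function $u_i(c_1,\dots,c_N)=(1-c_i)+\rho\sum_{j=1}^N c_j$, so that full defection is the unique Nash equilibrium while full cooperation is efficient, exactly the social-dilemma regime in which the cooperative equilibrium has bite. First I would enumerate the coalition structures $p$ and compute the value $v_i(p)$ for each from the general definition. Because the game is fully symmetric, I expect $v_i(p)$ to depend on $p$ only through the coalition containing $i$ and to be monotone in its size, so the crucial reduction is to show that the grand coalition is the argmax of $v_i$; the cooperative equilibrium is then supported on the induced game $\text{Ind}(\mathcal G,p)$ for the fully cooperative structure, paralleling the analysis already carried out for the Prisoner's dilemma, the Traveler's dilemma, and the Bertrand competition.

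The predictiveness claim (independence of the fairness functions $f_i$ and the altruism functions $a_{ij}$) I would obtain from symmetry: in the fully cooperative structure every player receives the same gain, so the fairness terms are simultaneously extremized and cannot tilt the comparison between coalition structures, while the altruism terms $a_{ij}$ enter the symmetric value $v_i$ in a way that is constant across the profiles allowed in the induced game. Concretely, I would verify that neither $f_i$ nor $a_{ij}$ appears in the binding constraint that pins down the equilibrium cooperation probability.

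Next I would compute the symmetric prospect-theoretic equilibrium of $\text{Ind}(\mathcal G,p)$. By symmetry each player cooperates with a common probability $p^*$, and since within the induced game each player still prefers to free-ride, I expect the constraint $u_i(\sigma)\ge v_i$ to bind; using $u_i(p)=1+p(\rho N-1)$ for the symmetric profile in which everyone cooperates independently with probability $p$, the equilibrium condition $u_i(p^*)=v_i$ yields the explicit rate
\[
p^*=\frac{v_i-1}{\rho N-1},
\]
where $v_i=v_i(\rho,N,\tau)$ is the grand-coalition value produced by the general formula and $\tau$ is the prior probability. Property (1) should then follow by differentiating $p^*$ with respect to $\rho$ and checking that the sign is positive on the admissible range, a short monotonicity computation once $v_i$ is written out.

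The genuinely delicate step, and the main obstacle, is property (2): the non-monotone dependence on $N$. Here the two $N$-effects pull against each other — the denominator $\rho N-1$ grows linearly and pushes $p^*$ down, while the grand-coalition value $v_i$, which tracks the efficient payoff $\rho N$, pushes the numerator up. I would treat $N$ as a continuous variable, compute $dp^*/dN$, and show that it is negative for small $N$ and positive once $N$ exceeds a threshold, so that $p^*$ is U-shaped; equivalently I would analyze the discrete differences $p^*(N+1)-p^*(N)$. The crux is controlling the exact $N$-dependence of $v_i$ coming from the prior $\tau$ and the prospect-theoretic evaluation, since both the existence and the location of the unique sign change hinge on it; this is where the estimates will have to be carried out carefully rather than settled by a one-line derivative sign.
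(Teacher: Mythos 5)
Your setup, your treatment of property (1), and your predictiveness claim all track the paper's own route (Example \ref{ex:public good}): one computes the value of the cooperative coalition explicitly, the binding constraint $g_i(\sigma)\ge v_i(p_c)$ in the induced game pins down the contribution rate, and monotonicity in the marginal return follows. Two small corrections there: predictiveness is not really a symmetry fact but the fact isolated in Corollary \ref{cor:predictive} --- each coalitional game $\mathcal G_p$ has a \emph{unique} equilibrium (full contribution for $p_c$, zero for $p_s$), so the fairness functions never enter, and there are no super-dominated strategies, so the altruism functions never enter. Concretely, with endowment normalized to $1$ and return $\alpha$, the paper's formulas give $D_j(p_c)=1-\alpha$, $R_j(p_c)=\alpha N-1$, hence $\tau_{i,j}(p_c)=\frac{1-\alpha}{\alpha(N-1)}$, $v_i(p_c)=\alpha(N+1)-1$, and equilibrium contribution rate $c^{*}=1-\frac{1-\alpha}{\alpha N-1}$; this is your $p^{*}=(v_i-1)/(\rho N-1)$, and it is increasing in $\alpha$, which settles (1).

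The genuine gap is in property (2), exactly at the step you flag as delicate. The non-monotonicity in $N$ is \emph{not present in the value formula at all}, so no analysis of $dp^{*}/dN$, continuous or discrete, can produce it: from the expression above, $\frac{\partial c^{*}}{\partial N}=\frac{\alpha(1-\alpha)}{(\alpha N-1)^{2}}>0$, so the sign of the derivative never changes (equivalently, $v_i(p_c)$ sits below the full-cooperation payoff $\alpha N$ by the constant deficit $1-\alpha$, independent of $N$). The paper's ``increases again for large $N$'' (Example \ref{ex:public goods}) is obtained not from the formula but from a regime switch imported from cumulative prospect theory: since $\tau_{i,j}(p_c)=\frac{1-\alpha}{\alpha(N-1)}\to 0$, for $N$ large every event ``player $j$ abandons the coalition'' has negligible probability, and by the prospect-theoretic principle that decision makers treat extremely unlikely events as impossible, players replace these probabilities by $0$; then $\tau_{i,\emptyset}=1$, the value $v_i(p_c)$ jumps to the full-cooperation payoff, and the induced game collapses to (essentially) full contribution. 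This is a discontinuous editing postulate, not a property of any smooth function of $N$; even a smooth CPT weighting function $w$ with $w(p)<1$ for $p<1$ cannot produce the jump, because the unedited probability that nobody deviates tends to $e^{-(1-\alpha)/\alpha}<1$ as $N\to\infty$. So the missing idea is precisely this small-probability editing step. Note also that without it the ``decreases'' half of (2) is equally out of reach: the paper asserts that the expected-utility version predicts decreasing contributions, but its own formula, as computed above, is monotone increasing in $N$, so the decrease cannot be recovered by your derivative computation either --- both halves of (2) live in the behavioral (CPT) layer of the theory, not in the closed-form value.
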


\begin{fact}
{\rm The cooperative equilibrium predicts the (50,50) solution in the Bargaining problem under \emph{natural} assumptions on the fairness functions.}
\end{fact}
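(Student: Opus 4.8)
The plan is to realize the bargaining problem as the Nash demand game: the two players simultaneously announce shares $x_1, x_2 \in [0,T]$ of a fixed total $T$, with gain functions $u_i(x_1,x_2) = x_i$ whenever $x_1 + x_2 \le T$ and $u_i(x_1,x_2)=0$ otherwise. The first step is to enumerate the relevant coalition structures. For two players there are only two partitions: the singleton structure $\{\{1\},\{2\}\}$, which models uncoordinated play, and the grand coalition $\{\{1,2\}\}$, which models the two players forecasting a jointly agreed division. Under independent play uncoordinated demands generically overshoot the frontier and collapse to the null outcome, so the singleton value $v_i$ is low; the real content lies in the grand coalition.

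Next I would compute $v_i(\{\{1,2\}\})$. By construction the grand coalition forecasts a Pareto-efficient division, i.e.\ a point of the frontier $\{(x_1,x_2): x_1+x_2=T\}$, and among these points it selects the one that is optimal for the fairness functions $f_i$. The \emph{natural assumption} I would impose is symmetry: each $f_i$ penalizes departures of the realized division from the equal split, so that the fairness-optimal point of the frontier is exactly $(T/2,T/2)$. Under this assumption one obtains $v_1(\{\{1,2\}\}) = v_2(\{\{1,2\}\}) = T/2$. I would then verify the inequality $v_i(\{\{1,2\}\}) \ge v_i(\{\{1\},\{2\}\})$ for each $i$, so that the grand coalition is the value-maximizing coalition structure for both players and is the one driving the cooperative equilibrium.

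Finally I would analyze the induced game $\text{Ind}(\mathcal{G},\{\{1,2\}\})$. By definition it restricts the admissible profiles to those satisfying $u_i \ge v_i(\{\{1,2\}\}) = T/2$ for both $i$. Since feasibility already forces $u_1 + u_2 \le T$, the two constraints $u_1 \ge T/2$ and $u_2 \ge T/2$ can hold simultaneously only when $u_1 = u_2 = T/2$. Hence the induced game admits a single feasible outcome, the equal split, and its prospect-theoretic equilibrium must be that outcome; this is precisely the cooperative equilibrium, giving the claimed $(50,50)$ prediction and showing that the altruism functions $a_{ij}$ play no role here.

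The hard part will be making the phrase ``natural assumptions on the fairness functions'' precise enough that the fairness-optimal point of the Pareto frontier is forced to be the equal split, since it is exactly symmetry of the $f_i$ that pins the prediction to the diagonal: an asymmetric fairness profile would tilt the forecast, hence $v_1,v_2$, and move the admissible point off $(T/2,T/2)$. A secondary but necessary check is that the constrained induced game is well-posed, namely that its feasible set is the single point $(T/2,T/2)$ rather than empty, and that the prospect-theoretic equilibrium notion indeed returns this point.
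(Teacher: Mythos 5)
Your skeleton matches the paper's: the same two coalition structures, the selfish one collapsing to value $0$ (in the paper this happens because the individually acceptable equilibria $(100,0)$ and $(0,100)$ recombine into the infeasible profile $(100,100)$), the grand coalition selecting a point of the frontier, and the induced game pinched to a single profile by feasibility together with the two constraints $g_i\geq T/2$. But there is a genuine gap exactly where you flag one: your ``natural assumption'' is that the fairness functions are minimized at the equal split, which assumes the conclusion. In the paper the natural assumption is only $f_1=f_2$ (the two players have the same perception of money); the selection of $(50,50)$ is then \emph{derived} from the strict convexity in the second variable that every fairness function satisfies by definition (see Definition \ref{defin:new}). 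Concretely, the disagreement of a frontier profile $(x,100-x)$ for the grand coalition is $f(100,x)+f(100,100-x)$, and strict convexity of $f(100,\cdot)$ gives
\[
2f(100,50)=2f\left(100,\tfrac{x+(100-x)}{2}\right)<f(100,x)+f(100,100-x)\qquad(x\neq 50),
\]
so $(50,50)$ is the unique acceptable equilibrium (Example \ref{ex:bargaining}). Without this Jensen-type step your argument has no content: that the fairness-optimal frontier point is the equal split is precisely what must be proved from the axioms, and it is the axiomatic convexity plus the symmetry $f_1=f_2$, not an ad hoc ``penalize departures from the equal split'' postulate, that does it.

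A second, smaller gap: you equate $v_i(p_c)$ with the payoff $T/2$ of the acceptable equilibrium, but in this theory the value of a coalition structure is a $\tau$-weighted average of worst-case gains (Definition \ref{defin:value}), and in the Prisoner's and Traveler's dilemmas it is strictly below the cooperative payoff. Here $v_i(p_c)=50$ holds only because $\text{Dev}_j(p_c)=\emptyset$: no unilateral deviation from $(50,50)$ can improve a player's gain (overshooting yields $0$, undershooting yields less), hence $D_j(p_c)=0$, the deviation probabilities vanish, and the value is undiscounted. The paper makes this check explicitly in the proof of Proposition \ref{prop:bargaining}, and you need it too: a discounted value $v_i(p_c)<T/2$ would destroy your ``single feasible point'' argument for the induced game. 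Your closing analysis of the induced game itself is fine and coincides with the paper's.
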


Roughly speaking, the \emph{natural} assumption is that the two players have the same perception of money. We believe that this assumption is natural, since it is predictable that a bargain between a very rich person and a very poor person can have a different solution.

\begin{fact}
{\rm The cooperative equilibrium explains the experimental data collected for the dictator game, via altruism.}
\end{fact}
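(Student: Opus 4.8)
The plan is to specialize the cooperative equilibrium to the two-player Dictator game and to show that the \emph{principle of cooperation} alone reproduces the selfish outcome, so that any observed positive giving must be attributed to the altruism functions. Normalizing the pie to $M$, let the dictator (player $1$) choose a transfer $x \in [0, M]$, let the recipient (player $2$) have a single trivial action, and set the gain functions $u_1(x) = M - x$ and $u_2(x) = x$. There are only two coalition structures, the singletons $p_0 = \{\{1\},\{2\}\}$ and the grand coalition $p_1 = \{\{1,2\}\}$. The first step is to compute the coalition values $v_i(p)$ in the altruism-free baseline and to observe that the total monetary gain $u_1 + u_2 = M$ is constant in $x$; hence forming the grand coalition cannot enlarge the cake and can only force the dictator to share it, so $v_1(p_1) \le v_1(p_0) = M$. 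Since player $1$ selects the coalition structure maximizing $v_1$, she chooses $p_0$, the induced game $\text{Ind}(\mathcal{G}, p_0)$ admits only the outcome $x = 0$, and the cooperation principle by itself predicts zero transfer, exactly the Nash prediction. This is the structural heart of the statement: there is simply no second active player with whom coordination could benefit the dictator.

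Second, having isolated cooperation as inert, I would introduce the altruism coefficient $a_{12}$ as the only remaining degree of freedom and make precise how it enters the dictator's prospect-theoretic evaluation, so that the effective objective she maximizes in the induced game is $\phi(M - x) + a_{12}\,\phi(x)$ over $x \in [0, M]$, where $\phi$ is the concave value transform supplied by the prospect-theoretic apparatus. The concavity of $\phi$ is precisely what turns an otherwise linear trade-off into one with an interior optimum: the first-order condition $a_{12}\,\phi'(x) = \phi'(M - x)$ has a unique solution $x^{*} \in (0, M)$ whenever $a_{12} > 0$, and a monotonicity check shows $x^{*}$ is increasing in $a_{12}$. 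Thus a strictly positive altruism coefficient forces strictly positive giving, with the transferred amount governed by the single parameter $a_{12}$.

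Third, I would match this to the meta-analysis of \cite{En11}. Allowing $a_{12}$ to vary across the population, individuals with $a_{12}$ at or below the boundary threshold transfer nothing, producing the empirical mass point at zero, while individuals with larger $a_{12}$ transfer graded positive amounts up to the equal split, producing a strictly positive mean transfer. This reproduces the qualitative signature of the Dictator game and attributes it to altruism rather than to strategic cooperation, which is the content of the statement, and which is also what distinguishes the Dictator game from the Ultimatum game, where the recipient's power to reject reinstates a genuine strategic role and hence a cooperation effect layered on top of altruism.

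The main obstacle I anticipate is not the optimization, which is elementary, but making ``explains the experimental data'' precise without rendering it vacuous. Because the interior prediction is entirely controlled by the free parameter $a_{12}$, I must guard against the objection that altruism can be tuned to fit anything. I would meet this by fixing the altruism functions from a single game-independent principle used throughout the paper, and then verifying that the same $a_{12}$ simultaneously fits the Dictator data and remains consistent with the larger offers predicted for the Ultimatum game. The genuine content of the statement is therefore this cross-game consistency, together with the structural fact established in the first step: that no portion of the positive transfer can be charged to cooperation, leaving altruism as the unique explanatory channel.
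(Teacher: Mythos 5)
Your first step is sound and consistent with the paper's machinery: in the Dictator game the responder has no deviation, the selfish structure gives the proposer value $M$ while the cooperative structure gives at most the fair split, so the altruism-free cooperative equilibrium is indeed the zero transfer. The genuine gap is in your second step. You model altruism as a scalar weight $a_{12}$ entering a prospect-theoretic objective $\phi(M-x)+a_{12}\,\phi(x)$ and derive an interior optimum by a first-order condition. That is a standard other-regarding-preferences model, but it is not the paper's formalism: in Definition \ref{defin:altruism function} the altruism function $a_{ij}(k,y,z)$ is an \emph{amount} (the offer player $i$ would make in the generalized dictator game $\text{Dict}(k,y,z)$), not a utility weight, and it enters the solution concept only through the iterated deletion of Section \ref{se:playable}. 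The paper's own argument (Example \ref{ex:dictator}) is: the proposer's strategies form a chain under super-domination; by Definition \ref{defin:unplayable second type} every strategy transferring less than $a_{\text{prop},\text{resp}}(1,10,0)$ is unplayable of the second type; hence the reduced game forces an offer at least equal to the altruism function's value, and the larger that value, the larger the offer. No optimization of a weighted objective occurs anywhere, so your interior-optimum computation proves a claim about a different model, not about the cooperative equilibrium as defined.

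Relatedly, your proposed guard against vacuousness inverts the paper's actual move. You want to fix $a_{12}$ from a game-independent principle and then verify it fits the Dictator data; the paper does the opposite: it \emph{defines} the altruism functions operationally from dictator-game behavior, openly accepts that the Dictator-game "explanation" is then true essentially by construction ("This happens just because we define the altruism in terms of human behavior in the dictator game"), and locates the falsifiable content in reusing those same functions in other games, e.g.\ to explain why Ultimatum offers exceed the cooperation-only prediction of $0.25$ (Example \ref{ex:ultimatum2}). So the cross-game consistency you identify as "the genuine content" is indeed the right instinct, but to prove the stated fact you must run it through the paper's deletion mechanism with the operational altruism functions, not through a calibrated utility weight.
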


This happens just because we define the altruism in terms of human behavior in the dictator game. To treat the dictator game as the quintessence of altruism is certainly not a new idea \cite{Ha-Kr00}, \cite{BEN11}, \cite{DFR11}.

\begin{fact}
{\rm The cooperative equilibrium explain the experimental data collected for the ultimatum game, via a combination of cooperation and altruism.}
\end{fact}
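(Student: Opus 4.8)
The plan is to model the ultimatum game as a one-shot normal form game with gain functions and then run the cooperative equilibrium construction, isolating the two forces named in the statement. First I would fix a total stake $M$, write the proposer's pure strategies as offers $x\in[0,M]$ and the responder's pure strategies as acceptance thresholds $t\in[0,M]$, and take monetary gain functions $g_1(x,t)=(M-x)\mathbf 1_{x\geq t}$ and $g_2(x,t)=x\,\mathbf 1_{x\geq t}$, so that a rejection ($x<t$) yields $(0,0)$. Two coalition structures matter: the all-singletons structure $p_0$, whose forecast reproduces the backward-induction outcome (the proposer expects to keep almost everything), and the grand coalition $p_1=\{\{1,2\}\}$, whose forecast is a fair division of $M$. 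I would compute the value functions $v_i(p_0)$ and $v_i(p_1)$ for both players, reducing the cooperative computation to the bargaining analysis already carried out, so that under the natural fairness assumptions $v_1(p_1)=v_2(p_1)=M/2$.

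Next I would determine which structure maximizes $v_1$ and pass to the induced game $\text{Ind}(\mathcal G,p)$ for the winning $p$. The key point is that the responder's ability to reject raises $v_2(p_0)$ relative to the dictator game: any profile in which the proposer keeps too much violates the floor $u_i(\sigma)\geq v_i(p)$ once the threatened rejection is accounted for, so the admissible profiles of $\text{Ind}(\mathcal G,p)$ already exclude the most lopsided offers. I would then solve for the prospect-theoretic equilibrium of this induced game, which pins down the purely strategic, i.e.\ cooperative, component of the offer.

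Finally I would layer in altruism. Writing the proposer's effective objective as her own gain together with $a_{12}$ weighted against the responder's gain, I would show that the equilibrium offer decomposes into a cooperative part coming from the coalition floor and the fear of rejection, and an altruistic part that coincides with the offer computed in the dictator-game Fact. Since the dictator offer is driven by $a_{12}$ alone while the ultimatum offer carries the additional cooperative term, comparing the two shows that the ultimatum offer strictly exceeds the dictator offer; this establishes both the fairness of ultimatum offers and the inequality reported as finding (8) in the abstract.

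The main obstacle will be the middle step: making the prospect-theoretic equilibrium of the induced game explicit. The responder's behavior is a weighted, reference-dependent acceptance rule rather than a clean threshold, so I expect the delicate part to be verifying that the coalition floor $v_2(p)$ is exactly the quantity that survives the prospect weighting, and that the resulting offer separates cleanly into its cooperative and altruistic summands with no cross term that would blur the comparison with the dictator game.
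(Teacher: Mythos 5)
Your outline (two coalition structures, an induced game, cooperation coming from the rejection threat, altruism layered on top, and a comparison with the dictator game) matches the paper's strategy, but the central computation is wrong: the ultimatum game does \emph{not} reduce to the bargaining analysis, and the values of the grand coalition are not symmetric. In the bargaining problem the paper gets $v_1(p_c)=v_2(p_c)=50$ precisely because $\text{Dev}_j(p_c)=\emptyset$ for both players; in the ultimatum game the proposer \emph{can} profitably deviate from the fair acceptable equilibrium $(M/2,A)$ by offering less, so $D_1(p_c)=M/2$, while the responder's threat of rejection gives $R_1(p_c)=M/2$, hence $\tau_{2,\{1\}}(p_c)=\tfrac12$ and $e_{2,\{1\}}(p_c)=0$, so that
$$
v_2(p_c)=\frac{M}{2}\cdot\frac12+0\cdot\frac12=\frac{M}{4},
\qquad\text{while}\qquad
v_1(p_c)=\frac{M}{2}
$$
(the responder has no profitable deviation from $(M/2,A)$). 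This asymmetry is the entire content of the paper's ``cooperation'' component: the responder accepts any offer above $M/4$ (matching Kahneman--Knetsch--Thaler's reported acceptance thresholds of $2.00$--$2.59$ out of $10$), and the proposer, absent altruism, offers exactly $0.25$ (matching the reported absence of offers below $0.2$). Your symmetric values $v_1(p_1)=v_2(p_1)=M/2$ would instead force a 50--50 split from cooperation alone, leaving nothing for altruism to explain, and they contradict your own (correct) observation that it is the rejection threat that props up $v_2$: under your claimed reduction to bargaining, $v_2$ would equal $M/2$ whether or not the responder could reject.

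The second gap is the altruism step. The paper never puts $a_{12}$ into the proposer's objective as a weight; altruism acts only through the iterated deletion of unplayable strategies. In the induced game $\text{Ind}(\mathcal G,p_c)$ the proposer's strategies form a super-domination chain $0.25<_1 0.26<_1 0.27<_1\cdots$, and the strategies below an altruism-determined level are unplayable of the second type; this is how offers get pushed from $0.25$ toward the observed $0.4$--$0.5$. Because altruism operates by deleting strategies above a floor rather than by adding a term to the payoff, the ultimatum offer is not a sum of a ``cooperative part'' plus ``the dictator offer'': it is the same deletion procedure run on strategies already clipped at the cooperative floor $0.25$, whereas in the dictator game the procedure runs with floor $0$. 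The comparison ``ultimatum offers exceed dictator offers'' follows from the difference of floors ($0.25$ versus $0$, because rejection is possible only in the ultimatum game), not from an additive decomposition; your no-cross-term claim has no counterpart in the paper and no reason to hold under its mechanism.
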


In particular, the observation that offers in the ultimatum game are larger then the offers in the dictator game is explained in terms of cooperation, which is generated by the fact that the responder has the power to reject proposer's offer.

Another case where the cooperative equilibrium is only descriptive is when the mistakes that players can make in the computations have a very strong influence on the result. A typical example is the following.

\begin{fact}
{\rm The \emph{quantal} cooperative equilibrium explains Goeree-Holt's experiment on the asymmetric matching pennies.}
\end{fact}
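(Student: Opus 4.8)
The plan is to treat this \textbf{Fact} as a verification rather than a structural theorem, and to carry it out in three stages: first reduce the coalition-theoretic analysis of asymmetric matching pennies to a purely non-cooperative one, then observe that the \emph{exact} cooperative equilibrium consequently coincides with Nash and fails to reproduce the data, and finally extract the correct qualitative prediction from the \emph{quantal} refinement. The whole point of listing this among the examples where the cooperative equilibrium is ``only descriptive'' is precisely that the explanatory work is done not by cooperation but by the quantal layer, so the proof should make that reduction transparent.

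First I would write down the Goeree--Holt treatments explicitly as gain functions on $\{T,B\}\times\{L,R\}$, holding the column player's payoffs fixed across treatments and letting only the row player's payoff in the cell $(T,L)$ vary through a baseline value $x=80$ and the two perturbations $x=320$ and $x=44$. A direct indifference computation gives that in the Nash equilibrium the column player plays $L$ with probability $q=40/x$, whereas the row player plays $T$ with probability $p=1/2$ \emph{independently of} $x$; this last feature is exactly the counterintuitive prediction that the experiment contradicts. Next I would analyze the coalition structures. The key observation is that matching pennies is strictly competitive, so the two players' gains move in opposite directions: for the nontrivial coalition structure $p$ the set of profiles $(\sigma_1,\sigma_2)$ with $u_i(\sigma_1,\sigma_2)\ge v_i(p)$ for both $i$ either collapses back to the singleton constraints or is empty, and hence $v_i$ is maximized at the finest coalition structure and $\text{Ind}(\mathcal G,p)$ coincides with $\mathcal G$. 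It follows that the exact cooperative equilibrium equals the Nash equilibrium, whose $x$-independent value $p=1/2$ fails the own-payoff effect.

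Having forced the exact theory into the non-cooperative regime, I would pass to the quantal version described in the footnote. Because the coalition layer is trivial here, the logit weights $e^{\lambda v_i(p)}/\sum_p e^{\lambda v_i(p)}$ degenerate and the quantal cooperative equilibrium reduces to a logit quantal response equilibrium of $\mathcal G$ itself. I would then write the fixed-point equations for the row player's probability of $T$ and the column player's probability of $L$, and show that for every fixed $\lambda>0$ the former is strictly increasing in the own payoff $x$ while the latter moves only mildly. Thus as $x$ increases from $44$ to $320$ the model predicts the row player shifting sharply toward $T$, which is precisely the direction and rough magnitude of the deviation reported by Goeree and Holt, and which Nash cannot capture.

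The main obstacle I expect is twofold. The delicate structural point is establishing cleanly that cooperation confers no advantage, i.e.\ that $\text{Ind}(\mathcal G,p)$ genuinely degenerates to $\mathcal G$, since this is what legitimately forces us into the quantal regime rather than the exact one; here the essentially zero-sum geometry of the attainable payoff region must be used with care, because the asymmetry in the $(T,L)$ cell breaks exact anti-symmetry and one must check that no feasible gain from the perturbed cell survives the mutual-rationality constraints. The remaining difficulty is the comparative statics of the logit fixed point, namely proving monotonicity of the row player's equilibrium probability in $x$, together with the honest concession that---unlike the predictive Facts above---this statement is merely descriptive, since the responsiveness parameter $\lambda$ must be estimated from the experimental data rather than derived from the game.
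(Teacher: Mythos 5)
Your high-level architecture (exact cooperative equilibrium reduces to Nash; the quantal layer carries all the explanatory weight) agrees with the paper, but the argument you give for the reduction is wrong in its mechanism, and you yourself flag it as the unresolved obstacle. The game is not ``essentially zero-sum'': the cell $(U,L)$ has joint gain $320+40=360$ against $120$ in every other cell, so the cooperative coalition $p_c=(\{1,2\})$ is far from vacuous --- its unique acceptable equilibrium is $(U,L)$. Moreover, your claim that $\mathrm{Ind}(\mathcal G,p_c)$ ``collapses to the singleton constraints or is empty'' is false here: since every payoff of either player is at least $40$, the induced game for $p_c$ is the \emph{whole} game. What actually forces the reduction in the paper is the value comparison, computed via the deviation/risk machinery: the column player's incentive to defect from $(U,L)$ is $D_2(p_c)=80-40=40$, and her risk is zero, because even if the row player anticipates the defection and plays $D$ she still gets $g_2(D,R)=40=g_2(U,L)$; hence $\tau_{1,\{2\}}(p_c)=1$ and the row player's value is the defection floor $e_{1,\{2\}}(p_c)=g_1(U,R)=40$. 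Conversely the row player has no incentive to leave $(U,L)$ (it is her maximum), so the column player's value is $g_2(U,L)=40$. Since $v_1(p_c)=v_2(p_c)=40$ lies strictly below the values of the selfish structure (the Nash payoffs $75$ and $60$), both players play according to $p_s$, and the exact cooperative equilibrium is the Nash equilibrium. In the paper's definition the players select the coalition structure maximizing $v_i$, so it is this inequality --- not any degeneracy of the induced game --- that does the work; the argument you propose to fill the gap would not go through.

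On the quantal layer your route is viable but genuinely different from the paper's. The paper instantiates the quantal cooperative equilibrium with quantal level-$k$, not logit QRE: the row (``vertical'') player, mistakenly treating the column player as level-$0$ and hence uniform over $\{L,R\}$, compares $\tfrac12\cdot320+\tfrac12\cdot40=180$ against $\tfrac12\cdot40+\tfrac12\cdot80=60$ and deviates strongly to $U$; the column player, modeled as level-$2$, best responds to that deviation by playing $R$. That two-line story is the paper's entire argument and directly reproduces the reported modal behavior. Your logit fixed-point approach can also deliver the qualitative own-payoff effect --- monotonicity $dp/dx>0$ does follow from the implicit function theorem applied to $p=\sigma(\lambda(xq-40))$, $q=\sigma(\lambda(40-80p))$, since the feedback term enters with a positive sign --- and it handles the $x=44$ treatment uniformly, which the paper does not discuss. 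But you leave that lemma unproven, you should not assert that the column player ``moves only mildly'' (both in the data and in the fixed point she shifts substantially toward $R$ once $p>1/2$), and the known quantitative shortfall of plain logit QRE on this game (the reason Goeree, Holt, and Palfrey later invoked risk aversion) means your claim about ``rough magnitude'' needs hedging. Your closing concession that the statement is descriptive rather than predictive, because $\lambda$ must be estimated, matches the paper's framing exactly.
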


The structure of the paper is as follows. In Section \ref{se:explicit form}, we define the so-called games in explicit form (see Definition \ref{defin:new}), where the word \emph{explicit} really emphasize the fact that we have to take into account explicitly new data (altruism functions and fairness functions). In Section \ref{se:informal} we describe informally the idea through a simple example that allows to motivate all main definitions of the theory. In Section \ref{se:cooperative equilibrium eut} we define the cooperative equilibrium for games in explicit form under expected utility theory, that is, without using cumulative prospect theory, and without using the altruism functions (see Definition \ref{defin:exact cooperative equilibrium eut}). The reason of this choice is that in most cases cumulative prospect theory can change predictions only quantitatively and not qualitatively and that, in most cases, altruism functions do not play any active role. Indeed, we compute the cooperative equilibrium (under expected utility theory and without using the altruism functions) for the Prisoner's Dilemma (see Examples \ref{ex:prisoner} and \ref{ex:prisoner2}), Traveler's Dilemma (see Examples \ref{ex:traveler} and \ref{ex:traveler2}), Nash bargaining problem (see Example \ref{ex:bargaining} and \ref{ex:bargaining2}), Bertrand competition (see Example \ref{ex:bertrand}), public goods game (see Example \ref{ex:public good}), the ultimatum game (see Example \ref{ex:ultimatum}), and a specific game of particular interest since iterated regret minimization theory fails to predict human behavior, whereas the cooperative equilibrium does (see Example \ref{ex:punish travelers}). We make a comparison between the predictions of the cooperative equilibrium and the experimental data and we show that they are always close. In Section \ref{se:towards cpt} we discuss a few examples where the replacement of expected utility theory by cumulative prospect theory starts playing an active role (see Examples \ref{ex:public goods} and \ref{ex:generalized coordination game 2}). Here it starts the ideal second part of the paper, devoted to the definition of the cooperative equilibrium for games in explicit form, using cumulative prospect theory and taking into account altruism. Before doing that, we take a short section, namely Section \ref{se:cpt}, to give a brief introduction to cumulative prospect theory. The definition of the cooperative equilibrium under cumulative prospect theory and taking into account altruism takes Sections \ref{se:playable} and \ref{se:cooperative equilibrium cpt}: in the former we define a procedure of iterated deletion of strategies using the altruism functions and we apply it to explain the experimental data collected for the dictator game (see Example \ref{ex:dictator}) and the ultimatum game (see Example \ref{ex:ultimatum2}); in the latter we repeat the construction done in Section \ref{se:cooperative equilibrium eut}, this time under cumulative prospect theory instead of expected utility theory. Theorem \ref{th:existence} shows that all finite games have a cooperative equilibrium. Part of Section \ref{se:playable} may be of intrinsic interest, since it contains the definition of super-dominated strategies\footnote{Joseph Halpern communicated to the author that he and Rafael Pass have independently introduced super-dominated strategies (under the name \emph{minimax dominated strategies}) in \cite{Ha-Pa13}.} (see Definition \ref{defin:sure gain}) and their application to solve a problem left open in \cite{Ha-Pa12} (see Example \ref{ex:more iterations}). Section \ref{se:conclusions} states a few important problems that should be addressed in future researches.\\

\section{Utility functions vs Gain functions: games in explicit form}\label{se:explicit form}

As mentioned in the Introduction, a major innovation that we propose is the use of \emph{gain functions} instead of utility functions. In this section we first elaborate on the reasons behind this choice and then we investigate the theoretical consequences of such a choice. First recall the classical definition of a game in normal form.

\begin{definition}\label{defin:game general}
{\rm A finite game in strategic or normal form is given by the following data:
\begin{itemize}
\item a finite set of players $P=\{1,2,\ldots,N\}$;
\item for each player $i\in P$, a finite set of strategies $S_i$;
\item for each player $i\in P$, a preference relation $\leq_i$ on $S:=S_1\times\ldots\times S_N$.
\end{itemize} }
\end{definition}

It is frequently convenient (and very often included in the definition of a game) to specify the players' preferences by giving real-valued \emph{utility functions} $u_i:S\to\mathbb R$ that represent them. The definition and the use of utility functions relies in Morgenstern and von Neumann's expected utility theory \cite{Mo-vN47}, where, to avoid problems such as risk aversion, they assumed that players' utility functions contain all relevant information about the players' preferences over strategy profiles. In this way, Nash was then able to formalize Bernoulli's principle that each player attempts to maximize her expected utility \cite{Be738} given that the other players attempt to do the same. The use of utility functions can certainly make the theory much easier, but it is problematic, since it has been observed that humans constantly violate the principles of expected utility theory. The very first of such examples was found by M. Allais in \cite{Al53} and many others are known nowadays (see, for instance, \cite{Ka-Tv00} and \cite{St00} for a large set of examples). For the sake of completeness, we briefly describe one of these experiments (see \cite{Ka-Tv79}, Problems 3 and 4). In this experiment 95 persons were asked to choose between:\\ 

\begin{itemize}
\item[L1.] A lottery where there is a probability of 0.80 to win 4000 and 0.20 to win nothing,
\item[L2.] A certain gain of 3000.\\
\end{itemize}

An expected utility maximizer would choose the lottery L1. However, Kahneman and Tversky reported that 80 per cent of the individuals chose the certain gain. The same 95 persons were then asked to choose between:\\ 

\begin{itemize}
\item[L1'.] A lottery with a 0.20 chance of winning 4000 and 0.80 of winning nothing,
\item[L2'.] A lottery with a 0.25 chance of winning 3000 and 0.75 of winning nothing. \\
\end{itemize} 

This time 65 per cent of the subjects chose the lottery L1', which is also the lottery maximizing expected utility. These two results contradict the so-called \emph{substitution axiom} in expected utility theory and show how people can behave as expected utility maximizers or not depending on the particular situation they are facing. 

An even more dramatic observation is that the evidence suggests that decision makers weight probabilities in a non-linear manner, whereas expected utility theory postulates that they weight probabilities linearly. Consider, for instance, the following example from \cite{Ka-Tv79}, p.283. Suppose that one is compelled to play Russian roulette. One would be willing to
pay much more to reduce the number of bullets from one to zero than from four to three.
However, in each case, the reduction in probability of a bullet ring is 1/6 and so, under
expected utility theory, the decision maker should be willing to pay the same amount. One possible explanation is that decision makers do not weight probabilities in a linear manner as postulated by expected utility theory.

These problems have been now overcome in decision theory thanks to the celebrated prospect theory \cite{Ka-Tv79} and cumulative prospect theory \cite{Tv-Ka92}. One of the very basic principles of (cumulative) prospect theory is that decision makers think in terms of gains and losses rather than in terms of their net assets; in other words, they think in term of gain functions rather than in terms of utility functions. This forces us to replace utility functions by gain functions. This replacement comes at a price: while utility functions were supposed to contain all relevant information about the players' preferences, gain functions do not contain such information. They must be taken into account separately. As we will remind in Section \ref{se:cpt}, risk aversion is taken into account by cumulative prospect theory. Among the remaining relevant information there are (at least) two deserving particular attention:
\begin{itemize}
\item[] \textbf{Altruism.} A player may prefer to renounce to part of her gain in order to favor another player.
\item[] \textbf{Perception of gains.} Two different players may have different perceptions about the same amount of gain.
\end{itemize}

To define formally a game in terms of gain functions, we introduce a unit of measurement $\mathfrak g$ (tipically one dollar, one euro ...) and postulate that to every action profile $s\in S$ and to every player $i\in P$ is associated a quantity $g_i(s)$ of $\mathfrak g$ which is lost or won by player $i$ when the strategy profile $s$ is played. We assume that the unit of measurement (e.g., the currency) is common to all players. The losses are expressed by negative integers and the wins by positive integers, so that $g_i(s)=2$ will mean, for instance, that, if the strategy profile $s$ is played, then player $i$ wins two units of the good $\mathfrak{g}$; analogously, $g_i(s)=-3$ will mean that, if the strategy profile $s$ is played, then player $i$ loses three units of the good $\mathfrak{g}$. 

Using the unit of measurement, we can take into account altruism and perception of gains as follows. \\ 

\textbf{Definition of the altruism functions.} We define a notion of altruism operationally, that is, the altruism functions can be theoretically computed running a pre-experiment. Consider a \emph{general dictator game} as follows. A proposer has an endowment of $y\in\mathbb N$ units of $\mathfrak g$ and a responder has got already $z\in\mathbb Z$ units of $\mathfrak g$. Let $k>0$, the proposer chooses $x\in\{0,1,\ldots,y\}$, to transfer to the responder, who gets $\lfloor kx\rfloor$, that is, the largest integer smaller than or equal to $kx$. In other words, we define the two player game $\text{Dict}(k,y,z)$ where the strategy set of the first player is $S_1=\{0,1,\ldots,y\}$ and the strategy set of the second player contains only one strategy, that we call A. The gain functions are
$$
g_1(x,A)=y-x\qquad\text{and}\qquad
g_2(x,A)=z+\lfloor kx\rfloor.
$$

\begin{definition}\label{defin:altruism function}
{\rm The altruism function $a_{ij}$ is the function $a_{ij}:\mathbb R^+\times\mathbb N\times \mathbb Z\to\mathbb N$ such that $a_{ij}(k,y,z)$ would be the offer of player $i$ to player $j$ if $i$ were the proposer and $j$ were the responder in $\text{Dict}(k,y,z)$.}
\end{definition}

\textbf{Definition of the fairness functions.} To capture perception of money, we assume that to each player $i\in P$ is associated a function $f_i:\{(x,y)\in\mathbb R^2 : x\geq y\}\to[0,\infty)$ whose role is to quantify how much player $i$ disappreciates to renounce to a gain of $x$ and accept a gain of $y$. The following are then natural requirements:
\begin{itemize}
\item $f_i$ is continuous,
\item if $x>y$, then $f_i(x,y)>0$,
\item if $x=y$, then $f_i(x,y)=0$,
\item for any fixed $x>0$, the function $f_i(x,\cdot)$ is strictly decreasing and strictly convex for positive $y$'s and strictly concave for negative $y$'s,
\item for any fixed $y$, the function $f_i(\cdot,y)$ is strictly increasing and strictly concave for positive $x$'s and strictly convex for negative $x$'s. 
\end{itemize}

The last two properties formalize the well-known \emph{diminishing sensitivity principle} \cite{Ka-Tv79}: the same difference of gains (resp. losses) is perceived smaller if the gains (resp. losses) are higher. Indeed, one possible way to define the functions $f_i$ is to use Kahneman-Tversky's value function $v$ and set $f_i(x,y)=v(x)-v(y)$. The problem of this definition is that it does not take into account that different players may have different perception of the same amount of money (think of the perception of 100 dollars of a very rich person and a very poor person).\\

Therefore, we are led to study the following object.
\begin{definition}\label{defin:new}
{\rm A finite game in explicit form $\mathcal G=\mathcal G(P,S,\mathfrak g, g,a,f)$ is given by the following data:
\begin{itemize}
\item a finite set of players $P=\{1,2,\ldots,N\}$;
\item for each player $i\in P$, a finite set of strategies $S_{i}$;
\item a \emph{good} $\mathfrak g$, which plays the role of a unit of measurement;
\item for each player $i\in P$, a function $g_{i}:S_{1}\times\ldots\times S_{N}\to\mathbb Z$, called \emph{gain function};
\item for each pair of players $(i,j)$, $i\neq j$, an \emph{altruism function} $a_{ij}$;
\item For each player $i\in P$, a \emph{fairness function} $f_i:\{(x,y)\in\mathbb R^2 : x\geq y\}\to\mathbb R$ verifying the properties above.
\end{itemize} }
\end{definition}

The terminology \emph{explicit} puts in evidence the fact that we must take into account explicitly all parameters that are usually considered implicit in the definition of utility functions. We are not saying that there are only three such parameters (altruism functions, fairness functions, and risk aversion) and this is indeed the first of a long series of points of the theory deserving more attention in future researches. In particular, there is some evidence that \emph{badness parameters} can play an important role in some games. We shall elaborate on this in Section \ref{se:conclusions}.

The purpose of the paper is to define a solution concept for games in explicit form taking into account altruism and cooperation and using cumulative prospect theory instead of expected utility theory. Nevertheless, we will see that
\begin{itemize}
\item in most cases the use of cumulative prospect theory instead of expected utility theory can change predictions only quantitatively and not qualitatively;
\item in most cases the altruism functions do not play any active role, since there are no players having a strategy which give a certain disadvantage to other players.
\end{itemize}

Consequently, we prefer to introduce the cooperative equilibrium in two steps. In the first one we keep expected utility theory and we do not use the altruism functions. The aim of the first step is only to formalize the principle of cooperation. We show that already this cooperative equilibrium under expected utility theory and without altruism can explain experimental data satisfactorily well. In Section \ref{se:towards cpt} we discuss some examples where the cooperative equilibrium under expected utility theory does not perform well because of the use of expected utility theory and because we did not take into account altruism and so we move towards the definition of the cooperative equilibrium under cumulative prospect theory and taking into account altruism.

\section{An informal sketch of the definition}\label{se:informal}

In this section we describe the cooperative equilibrium (under expected utility theory and without taking into account altruism) starting from an example. The idea is indeed very simple, even though the complete formalization requires a number of preliminary definitions that will be given in the next section.

Consider the following variant of the Traveler's dilemma. Two players have the same strategy set $S_1=S_2=\{180,181,\ldots,300\}$. The gain functions are

$$
g_1(x,y)=\left\{
  \begin{array}{lll}
    x+5, & \hbox{if $x<y$} \\
    x, & \hbox{if $x=y$}\\
    y-5, & \hbox{if $x>y$,}\\
  \end{array}
\right. \qquad\text{and}\qquad
g_2(x,y)=\left\{
  \begin{array}{lll}
    y+5, & \hbox{if $x>y$} \\
    y, & \hbox{if $x=y$}\\
    x-5, & \hbox{if $x<y$.}\\
  \end{array}
\right.
$$

The usual backward induction implies that $(180,180)$ is the unique Nash equilibrium. Nevertheless, numerous experimental studies reject this prediction and show that humans play significantly larger strategies. 

In the cooperative equilibrium, we formalize the idea that players forecast how the game would be played if they formed coalitions and then they play according to their best forecast.

Let us try to describe how this idea will be formalized. In a two-player game, as the Traveler's dilemma, there are only two possible coalition structures, the selfish coalition structure $p_s=(\{1\},\{2\})$ and the cooperative coalition structure $p_c=(\{1,2\})$. Let us analyze them:
\begin{itemize}
\item If agents play according to the selfish coalition structure, then \emph{by definition} they do not have any incentive to cooperate and therefore they would play the Nash equilibrium $(180,180)$. A Nash equilibrium is, by definition, stable, in the sense that no players have any incentives to change strategy. Consequently, both players would get $180$ for sure. In this case we say that the value of the selfish coalition structure is $180$ and we write $v(p_s)=180$.
\item Now, let us analyze the cooperative coalition structure $p_c$. The largest gain for each of the two agents, if they play together, is to get $300$, that is attained by the profile of strategies $(300,300)$. Nevertheless, each player knows that the other player may defect and play a smaller strategy and so the value of the cooperative coalition is not $300$, but we have to take into account possible deviations. Let us look at the problem from the point of view of player 1. The other player, player 2, may deviate and play the strategy 299 or the strategy 298, or the strategy 297, or the strategy 296, or the strategy 295 (indeed, all these strategies give at least the same gain as the strategy 300, if the first player is believed to play the strategy 300). In this case, the best that player 2 can obtain is $304$ (if she plays 299 and the first player plays 300) and so we say that the incentive to deviate from the coalition is $304-300=4$. We denote this number by $D_2(p_c)$.  Now, if player 2 decides to deviate from the coalition, she or he incurs in a \emph{risk} due to the fact that also player 1 can deviate from the coalition either to follow selfish interest or because player 1 is clever enough to understand that player 2 can deviate from the coalition and then player 1 decides to anticipate this move. The maximal risk that player 2 incurs trying to achieve her maximal gain is then attained when player 2 deviates to 299 and player 1 anticipates this deviation and play 298. In this case, player 2 would gain $g_2(298,299)=293$. So we say that the \emph{risk in deviating from the coalition structure $p_c$} is $R_2(p_c)=300-293=7$. We now interpret the number
$$
\tau_{1,\{2\}}(p_c)=\frac{D_2(p_c)}{D_2(p_c)+R_2(p_c)}=\frac{4}{11},
$$
as a sort of \emph{prior probability} that player 1 assigns to the event ``\emph{player 2 abandons the coalition structure $p_c$}''.  Consequently, we obtain also a number
$$
\tau_{1,\emptyset}(p_c)=1-\tau_{1,\{2\}}(p_c),
$$
which is interpreted as a \emph{prior probability} that player 1 assigns to the event ``\emph{nobody abandons the coalition structure $p_c$}''.

This probability measure will be now used to weight the numbers $e_{1,\emptyset}(p_c)$, representing the infimum of gains that player $1$ receives if nobody abandons the coalition, and $e_{1,\{2\}}(p_c)$, representing the infimum of gains that player $1$ receives if the second player abandons the coalition. Therefore, one has
$$
e_{1,\emptyset}(p_c)=300\qquad\text{and}\qquad e_{1,\{2\}}(p_c)=290,
$$
where the second number comes from the fact that the worst that can happen for player $1$ if the second player abandons the coalition and the first players does not abandon the coalition is in correspondence of the profile of strategies $(300,295)$ which gives a gain $290$ to the first player. Taking the average we obtain the value of the cooperative coalition for player 1
$$
v_1(p_c)=300\cdot\frac{7}{11}+290\cdot\frac{4}{11}\sim296.35.
$$
\end{itemize}

By symmetry one has $v_2(p_s)=v_1(p_s)=:v(p_s)=180$ and $v_2(p_c)=v_1(p_c)=:v(p_c)=296.35$. So one has $v(p_s)<v(p_c)$ and then the cooperative equilibrium predicts that the agents \emph{play according} to the cooperative coalition structure, since it gives a better forecast. The meaning of the word \emph{play according to $p_c$} has to be clarified. Indeed, since the profile $(300,300)$ is not stable, we cannot expect that the players play for sure the strategy 300. What we do is to interpret the values $v_i(p_c)$ as a sort of common beliefs: players simply keep only the profiles of strategies $\sigma=(\sigma_1,\sigma_2)$ such that $g_1(\sigma)\geq v_1(p_c)$ and $g_2(\sigma)\geq v_2(p_c)$. Computing the Nash equilibrium in this \emph{induced game} will give the cooperative equilibrium of the game that, in this case, is a mixed strategy which is \emph{supported between} 296 and 297. Observe that this is very close to the experimental data. Indeed, the one-shot version of this game was experimented by Goeree and Holt who reported that 80 per cent of subjects played a strategy between 290 and 300 with an average of 295 (see \cite{Go-Ho01}).\\

The purpose of the next section is to formalize the idea that we have just described. Indeed, even though the idea is very simple and in many relevant cases computations can be easily performed by hand (cf. Section \ref{se:examples}), the correct formalization requires the whole section \ref{se:cooperative equilibrium eut} because of the following technical problems:
\begin{itemize}
\item In the particular example that we have just described, the cooperative coalition structure leads to a one-player game with a unique Nash equilibrium, which is $(300,300)$. In general this will not happen and we should take into account that one Nash equilibrium can be less fair than another. For instance, the cooperative coalition structure in Nash bargaining problem leads to a one-player game with many Nash equilibria, but intuitively only the (50,50) solution is fair. 
\item The definition of deviation and risk is intuitively very simple, but the general mathematical formalization is not straightforward.
\end{itemize}

\section{The cooperative equilibrium under expected utility theory}\label{se:cooperative equilibrium eut}

Let $\mathcal G=\mathcal G(P,S,\mathfrak g, g,a,f)$ be a finite\footnote{It is well known that the study of infinite games can be very subtle. For instance, there is large consensus that, at least when the strategy sets do not have a natural structure of a standard Borel space, one must allow also purely finitely additive probability measures as mixed strategies, leading to the problem that even the mixed extension of the utility functions is not uniquely defined \cite{Ma97},\cite{St05},\cite{Ca-Mo12},\cite{Ca-Sc12}. In this first stage of the research we want to avoid all these technical issues and we focuse our attention only to finite games.} game in explicit form. As usual, to make notation lighter, we denote $S_{-i}$ the cartesian product of all the $S_j$'s but $S_i$. Let $\mathcal P(X)$ be the set of probability measures on the finite set $X$. If $\sigma=(\sigma_1,\ldots,\sigma_N)\in\mathcal P(S_1)\times\ldots\times\mathcal P(S_N)$, we denote by $\sigma_{-i}$ the $(N-1)$-dimensional vector of measures $(\sigma_1,\ldots,\sigma_{i-1},\sigma_{i+1},\ldots,\sigma_N)$ and, as usual in expected utility theory, we set 
$$
g_j(\sigma_i,\sigma_{-i})=g_j(\sigma):=\sum_{(s_1,\ldots,s_N)\in S}g_j(s_1,\ldots,s_N)\sigma_1(s_1)\cdot\ldots\cdot\sigma_N(s_N).
$$

Conversely, if $\sigma_i\in\mathcal P(S_i)$, for all $i\in P$, the notation $g_j(\sigma_i,\sigma_{-i})$ simply stands for the number $g_j(\sigma_1,\ldots,\sigma_N)$. 



The main idea behind our definition is the principle of cooperation, that is, players try to forecast how the game would be played if they formed coalitions and then they play according to their best forecast. Borrowing a well known terminology from the literature on coalition formation (cf. \cite{Ra08}), we give the following definition.

\begin{definition}\label{defin:coalition structure}
{\rm A coalition structure is a partition $p=(p_1,\ldots,p_k)$ of the player set $P$; that is, the $p_\alpha$'s are subsets of $P$ such that $p_\alpha\cap p_\beta=\emptyset$, for all $\alpha\neq\beta$, and $\bigcup p_\alpha=P$.}
\end{definition}

As mentioned in the Introduction, the idea is that each player $i\in P$ assigns a value to each coalition structure $p$ and then plays according to the coalition structure with highest value. As described in Section \ref{se:informal}, the idea to define the value of a coalition structure $p$ for player $i$ is to take an average of the following kind. Suppose that for all $J\subseteq P\setminus\{i\}$ we have defined a number $\tau_{i,J}(p)$ describing the probability that players in $J$ abandon the coalition structure $p$ and a number $e_{i,J}(p)$ describing the infimum of possible gains of player $i$ when players in $J$ abandon the coalition structure $p$. Then we (would) define
\begin{align}\label{eq:value}
v_i(p)=\sum_{J\subseteq P\setminus\{i\}}e_{i,J}(p)\tau_{i,J}(p).
\end{align}

Our aim is to give a reasonable definition for the numbers $e_{i,J}(p)$ and $\tau_{i,J}(p)$ under the assumption that players do not know each other and are not allowed to exchange information. Of course, this is only a real restriction of the theory: if the players know each other and/or are allowed to exchange information, this will reflect on the computation of the probability $\tau_{i,J}(p)$. 

Before defining the numbers $e_{i,J}(p)$ and $\tau_{i,J}(p)$, we need to understand what kind of strategies agree with the coalition structure $p$. Indeed, as mentioned in Section \ref{se:informal}, if $p\neq(\{1\},\ldots,\{N\})$ is not the selfish coalition structure, some profiles of strategies might not be \emph{acceptable} by the players in the same coalition because they do not share the gain in a fair way among the players belonging to the same coalition $p_\alpha$. We can define a notion of fairness making use of the fairness functions $f_i$. First observe that the hypothesis of working with gain functions expressed using the same unit of measurement for all players allows us to sum the gains of different players and, consequently, we can say that a coalition structure $p=(p_1,\ldots,p_k)$ generates a game with $k$ players as follows. The players are the sets $p_\alpha$ in the partition, the pure strategy set of $p_\alpha$ is $\prod_{i\in p_\alpha}S_i$, and the gain function of player $p_\alpha$ is 
\begin{align}\label{eq:grouping}
g_{p_\alpha}(s_1,\ldots,s_N)=\sum_{i\in p_\alpha}g_i(s_1,\ldots,s_N)
\end{align}

This game, that we denote by $\mathcal G_p$, has a non-empty set of Nash equilibria\footnote{If $p=(P)$ is the grand coalition, then $\mathcal G_p$ is a one-player game, whose Nash equilibria are all probability measures supported on the set of strategies maximizing the gain function.} that we denote by $\text{Nash}(\mathcal G_p)$. Since the players in the same $p_\alpha$ are ideally cooperating, not all Nash equilibria are acceptable, but only the ones that distribute the gain of the coalition $p_\alpha$ as fairly as possible among the players belonging to $p_\alpha$. 

To define the subset of \emph{fair} of \emph{acceptable} equilibria, fix $i\in P$ and consider the restricted function $\overline g_i=g_i|_{\text{Nash}(\mathcal G_p)}:\text{Nash}(\mathcal G_p)\to\mathbb R$. Since $\text{Nash}(\mathcal G_p)$ is compact and $g_i$ is continuous, we can find $\overline\sigma_i\in\text{Nash}(\mathcal G_p)$ maximizing $\overline g_i$.

\begin{definition}\label{defin:agreement}
{\rm The disagreement in playing the profile of strategy $\sigma\in\text{Nash}(\mathcal G_p)$ for the coalition $p_\alpha$ is the number
$$
\text{Dis}_{p_\alpha}(\sigma)=\sum_{i\in p_\alpha}f_i(g_i(\overline \sigma_i),g_i(\sigma))
$$ }
\end{definition}

Recalling that the number $f_i(x,y)$ represents how much player $i$ disappreciates to renounce to a gain of $x$ and accept a gain of $y\leq x$, we obtain that, in to order to have a fair distribution of the gain among the players in the coalition $p_\alpha$, the disagreement $\text{Dis}_{p_\alpha}$ must be minimized. 

\begin{definition}\label{defin:cooperative equilibrium}
{\rm The Nash equilibrium $\sigma\in\text{Nash}(\mathcal G_p)$ is \emph{acceptable} or \emph{fair} for the coalition $p_\alpha$, if $\sigma$ minimizes $\text{Dis}_{p_\alpha}(\sigma)$.
}
\end{definition}

Since the set of Nash equilibria of a finite game is compact and since the functions $f_i$ are continuous, it follows that the set of acceptable equilibria is non-empty and compact.

Let us say explicitly that this is the unique point where we use the functions $f_i$. It follows, that, for a game $\mathcal G$ such that every game $\mathcal G_p$ has a unique Nash equilibrium, the cooperative equilibrium does not depend on the functions $f_i$.

The importance of the hypotheses about strict convexity in the second variable and strict concavity in the first variable of the functions $f_i$ should be now clear and is however described in the first of the following series of examples.

\begin{example}\label{ex:bargaining}
{\rm Consider a finite version of Nash's bargaining problem \cite{Na50b} where two persons have the same strategy set $S_1=S_2=S=\{0,1,\ldots,100\}$ and the gain functions are as follows:

$$
g_1(x,y)=\left\{
  \begin{array}{lll}
    x, & \hbox{if $x+y\leq100$} \\
    0, & \hbox{if $x+y>100$,}\\
  \end{array}
\right. \qquad\text{and}\qquad
g_2(x,y)=\left\{
  \begin{array}{lll}
    y, & \hbox{if $x+y\leq100$} \\
    0, & \hbox{if $x+y>100.$}\\
  \end{array}
\right.
$$
As well known, this game has attracted attention from game theorists since, despite having many pure Nash equilibria, only one is intuitively \emph{natural}. Indeed, many papers have been devoted to select this natural equilibrium adding axioms (see \cite{Na50b}, \cite{Ka-Sm75}, and \cite{Ka77}) or using different solution concepts (see \cite{Ha-Ro10} and \cite{Ha-Pa12}).

Assume that the two players have the same perception of money, that is $f_1=f_2=:f$.
Consider the cooperative coalition $p_c=(\{1,2\})$ describing cooperation between the two players. The game $\mathcal G_{p_c}$ is a one-player game whose Nash equilibria are all pairs $(x,100-x)$, $x\in S_1$, and all probability measures on $S_1\times S_2$ supported on such pairs of strategies. Despite having all these Nash equilibria, the unique acceptable equilibrium for the game coalition is $(50,50)$. Indeed, one has
\begin{align*}
\text{Dis}_{p_c}\left(50,50\right)
&=f\left(100,50\right)+f\left(100,50\right)\\
&=f\left(100,\frac12\cdot100+\frac12\cdot0\right)+f\left(100,\frac12\cdot100+\frac12\cdot0\right)\\
&<\frac12f(100,100)+\frac12f(100,0)+\frac12f(100,100)+\frac12f(100,0)\\
&=f(100,100)+f(100,0)\\
&=\text{Dis}_{p_c}(100,0).
\end{align*}
Analogously, one gets $\text{Dis}_{p_c}\left(50,50\right)<\text{Dis}_{p_c}((x,100-x))$, for all $x\in\{0,1,\ldots,100\}$, $x\neq50$. Consequently, $\left(50,50\right)$ is the unique acceptable equilibrium for the cooperative coalition $p_c$.

Now let $p_s=(\{1\},\{2\})$ be the selfish coalition structure. Then the unique acceptable equilibrium for player 1 is $(100,0)$ and the unique acceptable Nash equilibria for player 2 is $(0,100)$. 
}
\end{example}

\begin{example}\label{ex:prisoner}
{\rm As second example, we consider the Prisoner's Dilemma. As well known, this famous game was originally introduced by Flood in \cite{Fl52}, where he reported on a series of experiments, one of which, now known as Prisoner's Dilemma, was conducted in 1950. Even though Flood's report is seriously questionable, as also observed by Nash himself (cf. \cite{Fl52}, pp. 24-25), it probably represents the first evidence that humans tend to cooperate in the Prisoner's Dilemma. This evidence has been confirmed in \cite{Co-DJ-Fo-Ro96}, where the authors observed a non-negligible percentage of cooperation even in one-shot version of the Prisoner's dilemma. 

Here we consider a parametrized version of the Prisoner's Dilemma, as follows. Two persons have the same strategy set $S_1=S_2=\{\text{C,D}\}$, where C stands for \emph{cooperate} and D stands for \emph{defect}. Let $\mu>0$, denote by $\mathcal G^{(\mu)}$ the game described by the following gains:
$$
  \begin{array}{ccc}
     & \text{C} & \text{D} \\
    \text{C} & 1+\mu,1+\mu & 0,2+\mu \\
    \text{D} & 2+\mu,0 & 1,1 \\
  \end{array}
$$
Therefore, the parameter $\mu$ plays the role of a reward for cooperating.
The intuition, motivated by similar experiments conducted on the Traveler's Dilemma (cf. Example \ref{ex:traveler}) or on the repeated Prisoner's dilemma \cite{DRFN08}, suggests that humans should play the selfish strategy $\text{D}$ for very small values of $\mu$ and tend to cooperate for very large values of $\mu$. This intuition is in fact so natural that Fudenberg, Rand, and Dreber, motivated by experimental results on the repeated Prisoner's dilemma, asked ``\emph{How do the strategies used vary with the gains to cooperation?}'' (cf. \cite{Fu-Ra-Dr12}, p.727, Question 4). We will propose an answer to this question (for one-shot Prisoner's dilemma) in Example \ref{ex:prisoner2}, where we will show that the cooperative equilibrium predicts a rate of cooperation depending on the particular gains and that such equilibrium is computable by a very simple formula (cf. Proposition \ref{prop:prisoner}). For now, let us just compute the acceptable Nash equilibria for the two partitions of $P=\{1,2\}$. Let $p_c=(\{1,2\})$ be the cooperative coalition structure, describing cooperation between the players. In this case we obtain a one-player game with gains:
$$
g_{p_c}(\text{C,C})=2+2\mu\qquad g_{p_c}(\text{C,D})=2+\mu\qquad g_{p_c}(\text{D,C})=2+\mu\qquad g_{p_c}(\text{D,D})=2.
$$
whose unique Nash equilibrium (i.e., the profile of strategies maximizing the payoff) is the \emph{cooperative} profile of strategies $(\text{C,C})$. Uniqueness implies that this equilibrium must be acceptable independently of the $f_i$'s. On the other hand, the selfish coalition structure $p_s=(\{1\},\{2\})$ generates the original game, whose unique equilibrium is, as well known, the \emph{defecting} profile of strategies $(\text{D,D})$. Also in this case, uniqueness implies that this equilibrium must be acceptable.
}
\end{example}

\begin{example}\label{ex:traveler}
{\rm Finally, we consider the Traveler's Dilemma. This game was introduced by Basu in \cite{Ba94} with the purpose to construct a game where Nash equilibrium makes unreasonable predictions. Basu's intuition was indeed confirmed by experiments on both one-shot and repeated treatments \cite{Ca-Go-Go-Ho99}, \cite{Go-Ho01}, \cite{Be-Ca-Na05}, \cite{Ba-Be-St11}. Fix a parameter $b\in\{2,3,\ldots,180\}$, two players have the same strategy set $S_1=S_2=\{180,181,\ldots,300\}$ and payoffs:
$$
g_1(x,y)=\left\{
  \begin{array}{lll}
    x+b, & \hbox{if $x<y$} \\
    x, & \hbox{if $x=y$}\\
    y-b, & \hbox{if $x>y$,}\\
  \end{array}
\right. \qquad\text{and}\qquad
g_2(x,y)=\left\{
  \begin{array}{lll}
    y+b, & \hbox{if $x>y$} \\
    y, & \hbox{if $x=y$}\\
    x-b, & \hbox{if $x<y$.}\\
  \end{array}
\right.
$$
This game has a unique Nash equilibrium, which is $(180,180)$. Nevertheless, it has been observed that humans tend to cooperate (i.e. play strategies close to $(300,300)$) for small values of $b$ and tend to be selfish (i.e., play strategies close to the Nash equilibrium $(180,180)$) for large values of $b$. This is indeed what the cooperative equilibrium predicts, as we will see in Example \ref{ex:traveler2}. For now, let us just compute the sets of acceptable equilibria for all partitions of $P=\{1,2\}$. Let $p_c=(\{1,2\})$ be the cooperative coalition structure, describing cooperation between the players. In this case we obtain a one-player game whose unique Nash equilibrium is attained by the \emph{cooperative} profile of strategies $(300,300)$. Uniqueness implies that this equilibrium must be acceptable. On the other hand, the selfish coalition structure $p_s=(\{1\},\{2\})$ gives rise to the unique Nash equilibrium of the game, which is $(180,180)$. Also in this case, uniqueness implies that this equilibrium must be acceptable.
}
\end{example}

Coming back to the description of the theory, we have gotten, for all partitions $p$ of the player set $P$ and for all sets $p_\alpha$ of the partition, a (compact) set of acceptable equilibria $\text{Acc}_{p_\alpha}(\mathcal G_p)$ for the coalition $p_\alpha$ inside the coalition structure $p$. Now we define the numbers $e_{i,J}(p)$ and $\tau_{J}(p)$.\\

\textbf{Definition of the numbers $\tau_{i,J}(p)$.} We recall that the number $\tau_{i,J}(p)$ represents the probability that players i assigns to the event ``\emph{players in $J$ abandon the coalition structure $p$}''. Consequently, it is enough to define the numbers $\tau_{i,J}(p)$ when $J=\{j\}$ contains only one element. The other numbers can be indeed reconstructed assuming that the events ``\emph{player $j$ deviates from $p$}'' and ``\emph{player $k$ deviates from $p$}'' are independent. This assumption is natural in this context where players are not allowed to exchange information. 

Therefore, fix $j\in P$, with $j\neq i$. The definition of $\tau_{i,j}(p)$ is intuitively very simple. It will be a ratio
$$
\tau_{i,j}(p)=\frac{D_j(p)}{D_j(p)+R_j(p)},
$$
where:
\begin{itemize}
\item the number $D_j(p)$ represents the incentive for player $j$ to abandon the coalition structure $p$, that is, the maximal gain that player $j$ can get leaving the coalition;
\item the number $R_j(p)$ represents the risk that player $j$ takes leaving the coalition structure $p$, that is, the maximal loss that player $j$ can incur trying to achieve her maximal gain, assuming that also other players can abandon the coalition either to follow selfish interests or to anticipate player $j$'s defection.
\end{itemize}

To make this intuition formal, first define
\begin{align}\label{eq:maximal}
\widetilde M(p_\alpha,p):=\left\{\sigma\in\text{Acc}_{p_\alpha}(\mathcal G_p) : g_{p_\alpha}(\sigma) \text{ is maximal}\right\}.
\end{align}

The idea is indeed that players in the same coalition try to achieve their maximal joint gain but, doing that, there might be some conflicts among coalitions. Therefore, we are interested to look at the strategy profiles that can be constructed putting together pieces of strategies in the various $\widetilde M(p_\alpha,p)$. 

To this end, let us fix a piece of notation. For a given player $j$, let $\pi_j:\mathcal P(S_1)\times\ldots\times\mathcal P(S_N)\to\mathcal P(S_j)$ be the canonical projection. We may \emph{reconstruct} an element $\sigma\in\mathcal P(S_1)\times\ldots\times\mathcal P(S_N)$, through its projections and we write formally $\sigma=\bigotimes_{j=1}^N\pi_j(\sigma)$. Set
\begin{align}
M(p_\alpha,p):=\left\{\bigotimes_{i\in p_\alpha}\pi_i(\sigma) : \sigma\in\widetilde M(p_\alpha,p)\right\},
\end{align}
and then
\begin{align}
M(p)=\bigotimes_{\alpha=1}^k M(p_\alpha,p).
\end{align}

In words, $M(p)$ is the set of strategy profiles that can be constructed putting together pieces of acceptable equilibria maximizing the joint gain of each coalition.

\begin{remark}\label{rem:barycenter}
{\rm It is worth mentioning that in many relevant cases all sets $\widetilde M(p_\alpha,p)$ contain only one element and the computations get very simple and \emph{unambiguous}. However, in some cases, as in the route choice game, this set may contain multiple and theoretically even infinite elements. From a mathematical point of view, this is not a problem, since we need only compactness of the sets $\widetilde M(p_\alpha,p)$ and these sets are indeed compact. However, in some cases there might be a natural way to restrict the sets $\widetilde M(p_\alpha,p)$, leading to a computationally lighter and intuitively more natural definition. For instance, in games with \emph{particular symmetries}, as the basic route choice game\footnote{There are $2N$ players, each of which has to decide the route to go to work between two \emph{equivalent} routes.}, players are tipically indifferent among all pure Nash equilibria maximizing their gains and, therefore, it is natural to restrict the set $\widetilde M(p_\alpha,p)$ and take only its barycenter, which is, in this case, the uniform measure\footnote{It was reported in \cite{RKDG09} that players tend to play uniformly in the basic route choice game.}. 
Theoretically, this construction may be extended to every game, since $\widetilde M(p_\alpha,p)$ is always compact and so it has a barycenter (see \cite{Ha-Va89}, Sec. 3.b). But we do not think that the assumption that players in the same $p_\alpha$ are indifferent among all the acceptable strategies which maximize their joint gain is very general and it would not probably make sense in very asymmetric games. How to restrict the sets $\widetilde M(p_\alpha,p)$ is another point of the theory that deserves particular attention in a future research.}
\end{remark}

\begin{definition}\label{defin:k-deviation}
{\rm Let $\sigma\in\mathcal P(S_1)\times\ldots\times\mathcal P(S_N)$ be a profile of mixed strategies and $\sigma_k'\in\mathcal P(S_k)$. We say that $\sigma_k'$ is a $k$-deviation from $\sigma$ if $g_k(\sigma'_k,\sigma_{-k})\geq g_k(\sigma)$. }
\end{definition}

Now we can finally move towards the definition of incentive and risk. We recall, that we have fixed a coalition structure $p$ and two players $i,j\in P$, with $i\neq j$ and we want to define the incentive and risk for player $j$ to abandon the coalition structure. Let $\text{Dev}_j(p)$ denote the set strategies of player $j$ that are $j$-deviation from at least one strategy in $M(p)$.

\begin{definition}
{\rm The incentive for player $j$ to deviate from the coalition structure $p$ is
\begin{align}
D_j(p):=\max\left\{g_j(\sigma_j',\sigma_{-j})-g_j(\sigma) : (\sigma,\sigma_j')\in\text{Dev}_j(p)\right\}.
\end{align} }
\end{definition}
Observe that $D_j(p)$ is attained since the set $\text{Dev}_j(p)$ is compact.

If $D_j(p)=0$, then $j$ does not gain anything by leaving the coalition and therefore $j$ does not have any incentives to abandon the coalition structure $p$. If it is the case, we simply define $\tau_{i,j}(p)=0$.

Consider now the more interesting case $D_j(p)>0$, where player $j$ has an actual incentive to deviate from the coalition structure $p$. If $j$ decides to leave $p$, it may happen that she loses part of her gain if other players decide to abandon $p$ either to follow selfish interests or to answer player $j$'s defection. To quantify this risk, we first introduce some notation. Let $(\sigma,\sigma_j')\in\text{Dev}_j(p)$ such that $D_j(p)$ is attained. Call $T(\sigma,\sigma_j')$ the set of $\sigma_{-j}'\in\bigotimes_{i\neq j}\mathcal P(S_i)$ such that
\begin{itemize}
\item  $g_j(\sigma)-g_j(\sigma_j',\sigma_{-j}')>0$,
\item there is $k\in P\setminus\{j\}$ such that $\pi_k(\sigma_{-j}')$ is a $k$-deviation from either $\sigma$ or $(\sigma_{-j},\sigma_j')$.
\end{itemize}
Thus we quantify the risk by
\begin{align}\label{eq:risk}
R_j(p):=\sup\left\{g_j(\sigma)-g_j(\sigma_j',\sigma_{-j}')\right\},
\end{align}
where the supremum is taken over all
\begin{enumerate}
\item[(A)] $(\sigma,\sigma_j')\in\text{Dev}_j(p)$ such that $D_j(p)$ is attained,
\item[(B)] $\sigma_{-j}'\in T(\sigma,\sigma_j')$.
\end{enumerate}
The requirement (A) is motivated by the fact that if player $j$ believes that she can leave the coalition structure $p$ to follow selfish interests, then she must take into account that also other players may deviate from $p$ either to follow selfish interests or because they are clever enough to anticipate player j's defection. This can obstruct player $j$'s deviation, if another player's deviation causes a loss to player $j$. 

\begin{definition}\label{defin:intrinsic probability}
{\rm The \emph{prior} probability that player $j$ deviates from the coalition structure $p$ is
$$
\tau_{i,j}(p):=\frac{D_j(p)}{D_j(p)+R_j(p)}.
$$}
\end{definition}

The terminology \emph{prior} wants to clarify the fact that the event ``\emph{player j abandons the coalition}'' is not measureable in any absolute and meaningful sense. The \emph{prior probability} is a sort of measure a priori of this event knowing only mathematically measurable information, as monetary incentive and monetary risk.

\begin{remark}\label{rem:zero risk}
{\rm If the set $T(\sigma,\sigma_j')$ is empty for all $(\sigma,\sigma_j')\in\text{Dev}_j(p)$, then the supremum defining the risk $R_j(p)$ is equal to zero. Consequently, the prior probability that player $j$ abandons the coalition structure $p$ is equal to 1. This is coherent with the intuition that if $T(\sigma,\sigma_j')=\emptyset$, then there is no way to obstruct player $j$'s defection.}
\end{remark}

As said before, we can now compute all remaining probabilities $\tau_{i,J}(p)$ assuming that the events ``\emph{player $j$ deviates from $p$}'' and ``\emph{player $k$ deviates from $p$}'' are independent. In particular, $\tau_{i,\emptyset}(p)$ will represent the probability that none of the players other than $i$ deviates from the coalition structure.\\

\textbf{Definition of the numbers $e_{i,J}(p)$.}  We recall that the numbers $e_{i,J}(p)$ represent the infimum of gains of player $i$ when the players in $J$ decide to deviate from the coalition structure $p$. Therefore, the definition of these numbers is very straightforward. Let $J\subseteq P\setminus\{i\}$, we first define the set

$$
\text{Dev}_J(p):=\left\{(\sigma,\sigma_J')\in\bigotimes_{\alpha=1}^k M(p_\alpha,p)\times\bigotimes_{j\in J}\mathcal P(S_j) : \exists j\in J : g_j(\pi_j(\sigma_J'),\sigma_{-j})\geq g_j(\sigma)\right\}.
$$

Then we define

$$
e_{i,J}(p):=\inf\{g_i(\sigma'_J,\sigma_{-J}) : (\sigma,\sigma_J')\in\text{Dev}_J(p)\}.
$$

\begin{definition}\label{defin:value}
{\rm The value of the coalition structure $p$ for player $i$ is
\begin{align}
v_i(p)=\sum_{J\subseteq P\setminus\{i\}}e_{i,J}(p)\tau_{i,J}(p).
\end{align} }
\end{definition}
We stress that at this first stage of the research we cannot say that this formula is eventually the right way to compute the value of a coalition structure. It just seems a fairly natural way and, as we will show in Section \ref{se:examples}, it meets experimental data satisfactorily well. However, it is likely that a future research, possibly supported by suitable experiments, will suggest to use of a different formula. For instance, we will describe in Example \ref{ex:generalised coordination game} that it is possible that the deviation $D_j(p)$ should be computed taking into account not only deviation to achieve higher gains, but also to get a safe gain.

Now, in an exact theory, player $i$ is assumed to have unbounded rationality and is assumed not to make mistakes in the computations and so, using the principle of cooperation, she will \emph{play according} to some $p$ which maximises the value function $v_i$. It remains to understand the meaning of \emph{playing according with a coalition structure $p$}. Indeed, we cannot expect that player $i$ will play surely according to an acceptable Nash equilibrium of $\mathcal G_p$, since she knows that other players may deviate from the coalition. What we can do is to use the numbers $v_i(p)$ to define a sort of beliefs.

\begin{definition}
{\rm Let $A\subseteq\mathcal P(S_1)\times\ldots\times\mathcal P(S_N)$. The subgame induced by $A$ is the \emph{game} whose set of mixed strategies of player $i$ is the closed convex hull in $\mathcal P({S_i})$ of the projection set $\pi_i(A)$.}
\end{definition}

Therefore, a subgame induced by a set $A$ is not, strictly speaking, a game, since in general the set of mixed strategies of player $i$ cannot be described as the convex hull of a set of pure strategies which is a subset of $S_i$. In the induced game only particular mixed strategies are allowed, which, as said earlier, correspond to some sort of beliefs. Observe that, since the set of allowed mixed strategies is convex and compact, we can formally find a Nash equilibrium of an induced game. Indeed, Nash's proof of existence of equilibria does not really use the fact that the utility functions are defined on $\mathcal P(S_1)\times\ldots\times\mathcal P(S_N)$, but only that they are defined on a convex and compact subset of $\mathcal P(S_1)\times\ldots\times\mathcal P(S_N)$.

Let $\text{Ind}(\mathcal G,p)$ be the subgame induced by the set of strategies $\sigma\in\mathcal P(S_1)\times\ldots\times\mathcal P(S_N)$ such that $g_i(\sigma)\geq v_i(p)$, for all $i\in P$.  Observe that the induced game is not empty, since $v_i(p)$ is a convex combinations of infima of values attained by the gain function $g_i$.

\begin{definition}{\bf (Exact cooperative equilibrium)}\label{defin:exact cooperative equilibrium eut}
{\rm An exact cooperative equilibrium is one where player $i$ plays a Nash equilibrium of the subgame $\text{Ind}(\mathcal G,p)$ where $p$ maximizes $v_i(p)$ \footnote{Observe that this is well defined also in case of multiple $p$'s maximizing $v_i(p)$, since the induced games $\text{Ind}(\mathcal G,p)$ and $\text{Ind}(\mathcal G,p')$ are the same, if $p,p'$ are both maximizers. }. }
\end{definition}

One could define a quantal cooperative equilibrium, declaring that player $i$ plays with probability $e^{\lambda v_i(p)}/\sum_pe^{\lambda v_i(p)}$ according to the quantal response equilibrium or the quantal level-k theory applied to $\text{Ind}(\mathcal G,p)$. At this first stage of the research, we are not interesting in such refinements, that could be useful in future and deeper analysis (cf. Examples \ref{ex:generalised coordination game} and \ref{ex:asymmetric matching pennies}).

\section{Examples and experimental evidence}\label{se:examples}

In this section we apply the cooperative equilibrium (under expected utility theory and without using altruism) to some well known games. The results we obtain are encouraging, since the predictions of the cooperative equilibrium are always satisfactorily close to the experimental data. We present also two examples where the cooperative equilibrium makes new predictions, completely different from all standard theories. These new predictions are partially supported by experimental data, but we do not have enough precise data to say that they are strongly confirmed.

\begin{example}\label{ex:traveler2}
{\rm Let $\mathcal G^{(b)}$ be the parametrized Traveler's Dilemma in Example \ref{ex:traveler} with bonus-penalty equal to $b$.  Let $p_c=(\{1,2\})$ be the cooperative coalition. We recall that in Example \ref{ex:traveler} we have shown that the profile of strategies $(300,300)$ is the unique acceptable equilibrium for $p_c$. To compute the values of $p_c$, let $i=1$ (the case $i=2$ is the same, by symmetry). One has $D_2(p_c)=b-1$, corresponding to the strategy profile $(300,299)$. Corresponding to this deviation of player 2, which is the unique deviation maximizing player 2's gain, the best deviation for player 1 is to play the strategy $298$, which gives $g_2(300,300)-g_2(298,299)=2+b$. Therefore, $R_2(p_c)=2+b$. Consequently, we have
$$
\tau_{1,\{2\}}(p_c)=\frac{b-1}{2b+1}\qquad\text{and}\qquad\tau_{1,\emptyset}(p_c)=\frac{b+2}{2b+1}.
$$

Now, $e_{1,\{2\}}=300-2b$, corresponding to the profile of strategy $(300,300-b)$, and $e_{1,\emptyset}(p_c)=300$. Consequently, setting $v_1(p_c)=v_2(p_c)=:v(p_c)$, we have
$$
v_(p_c)=300\cdot\frac{b+2}{2b+1}+(300-2b)\cdot\frac{b-1}{2b+1}.
$$
On the other hand, the selfish coalition structure $p_s=(\{1\},\{2\})$ has value 
$$
v_1(p_s)=v_2(p_s)=180,
$$
since there are no possible deviations from a Nash equilibrium. 
Therefore, for small values of $b$, one has $v(p_c)>v(p_s)$ and the cooperative equilibrium predicts that agents play according to the cooperative coalition; for large values of $b$, one has $v(p_s)>v(p_c)$ and then the cooperative equilibrium predicts that agents play the Nash equilibrium. Moreover, the rate of cooperation depends on $b$: the larger is $b$, the smaller is the rate of cooperation predicted. We are aware of only two experimental studies devoted to one-shot Traveler's dilemma. In this cases, the predictions are even quantitatively close.

\begin{itemize}
\item For $b=2$ and $S_1=S_2=\{2,3,\ldots,100\}$, it has been reported in \cite{Be-Ca-Na05} that most of subjects (38 out of 45) chose a number between $90$ and $100$ and the strategy which had the highest payoff was $s=97$. In our case, we obtain
$$
v(p_c)=100\cdot\frac{b+2}{2b+1}+96\cdot\frac{b-1}{2b+1}=99.2.
$$
Consequently, the cooperative equilibrium is supported \emph{near} $99$.
\item For $b=5$ and $S_1=S_2=\{180,181,\ldots,300\}$, it has been reported in \cite{Go-Ho01} that about 80 per cent of the subjects submitted a strategy between 290 and 300, with an average of 295. In our case, we obtain
$$
v(p_c)=300\cdot\frac{b+2}{2b+1}+290\cdot\frac{b-1}{2b+1}=296.35.
$$
Consequently, the cooperative equilibrium is supported \emph{between} $296$ and $297$, which is very close to the experimental data.
\item For $b=180$ and $S_1=S_2=\{180,181,\ldots,300\}$, it was reported in \cite{Go-Ho01} that about 80 per cent of the subjects played the Nash equilibrium 180. In our case, one easily sees that
$$
v(p_c)<v(p_s)
$$
Consequently, the cooperative equilibrium reduced to Nash equilibrium and predicts the solution $(180,180)$. So the cooperative equilibrium coincides with what most subjects played.
\end{itemize}
}

\begin{example}\label{ex:prisoner2}
{\rm We consider the parametrized Prisoner's dilemma as in Example \ref{ex:prisoner}. Observe that all known solution concepts predict either defection for sure or cooperation for sure. Nevertheless, the data collected on the conceptually similar parametrized Traveler's dilemma suggest that human behavior in the parametrized Prisoner's dilemma should depend on the parameter. This intuition is partially supported by the results presented in \cite{DRFN08}, where the authors reported on experiments conducted on the repeated Prisoner's dilemma with punishment and observed that subjects tend to cooperate more when the cost of cooperating is smaller. Motivated by these experimental data, Fudenberg, Rand, and Dreber indeed asked ``\emph{How do the strategies used vary with the gains to cooperation?}'' (cf. \cite{Fu-Ra-Dr12}, p.727, Question 4).

We now show that in fact cooperative equilibrium predicts a rate of cooperation which depends on the particular gains. 

\begin{proposition}\label{prop:prisoner}
The unique cooperative equilibrium of the parametrized Prisoner's dilemma $\mathcal G^{(\mu)}$ is:
\begin{itemize}
\item $(D,D)$ if $\mu\leq1$,
\item $\left(\frac{\mu-1}{\mu}C+\frac{1}{\mu}D,\frac{\mu-1}{\mu}C+\frac{1}{\mu}D\right)$.
\end{itemize}
In particular, the cooperative equilibrium of $\mathcal G^{(\mu)}$ verifies the following appealing property:
\begin{enumerate}
\item It predicts defection for $\mu=0$,
\item It moves continuously and monotonically from defection to cooperation, as $\mu$ increases,
\item It converges to cooperation as $\mu\to\infty$.
\end{enumerate}
\end{proposition}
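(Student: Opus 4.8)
The plan is to use that a two-player game has only two coalition structures, the selfish $p_s=(\{1\},\{2\})$ and the cooperative $p_c=(\{1,2\})$, so everything reduces to computing $v(p_s)$ and $v(p_c)$ (the game is symmetric, so $v_1=v_2=:v$), comparing them, and then solving the resulting induced game. By Example~\ref{ex:prisoner} the unique acceptable equilibrium of $\mathcal G_{p_s}$ is $(D,D)$ and that of $\mathcal G_{p_c}$ is $(C,C)$. Since $(D,D)$ is already a Nash equilibrium, no player has a profitable deviation, so $D_j(p_s)=0$ and $\tau_{i,\emptyset}(p_s)=1$, giving immediately $v(p_s)=g_i(D,D)=1$. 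This is the easy half.

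For $p_c$ the first task is to compute, as seen by player $1$, the deviation and risk of player $2$ from the acceptable profile $(C,C)$. Switching $C\to D$ raises player $2$'s gain from $1+\mu$ to $2+\mu$, so $D_2(p_c)=1$; the only obstruction is for player $1$ to answer with $D$ (a legitimate $1$-deviation from $(C,C)$ since $g_1(D,C)=2+\mu\ge 1+\mu$), which drags player $2$ to $g_2(D,D)=1$ and inflicts a loss of $\mu$, and one checks no mixed answer does better, so $R_2(p_c)=\mu$. Hence $\tau_{1,\{2\}}(p_c)=\tfrac{1}{1+\mu}$, $\tau_{1,\emptyset}(p_c)=\tfrac{\mu}{1+\mu}$, and with $e_{1,\emptyset}(p_c)=1+\mu$, $e_{1,\{2\}}(p_c)=g_1(C,D)=0$ one gets
\begin{align*}
v(p_c)=(1+\mu)\cdot\frac{\mu}{1+\mu}+0\cdot\frac{1}{1+\mu}=\mu.
\end{align*}
Therefore $v(p_c)>v(p_s)$ precisely when $\mu>1$, with equality at $\mu=1$.

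The equilibrium is then read off the induced game, and the clean device is to write the payoffs bilinearly: letting $x,y\in[0,1]$ be the probabilities that players $1,2$ play $C$, a direct expansion yields $g_1(\sigma)=1-x+(1+\mu)y$ and $g_2(\sigma)=1-y+(1+\mu)x$. For $\mu\le 1$ the maximizing structure is $p_s$ (with the two induced games coinciding at $\mu=1$, consistently with the footnote to Definition~\ref{defin:exact cooperative equilibrium eut}), the relevant threshold is $v=1$, and the region $\{g_1\ge 1,\,g_2\ge 1\}$ projects onto all of $[0,1]$ on each axis; hence $\text{Ind}(\mathcal G,p_s)$ is the original game, whose unique Nash equilibrium is $(D,D)$. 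For $\mu>1$ the threshold is $v=\mu$, and the crux is to show that $\{g_1\ge\mu,\,g_2\ge\mu\}$ projects onto $x\in[\tfrac{\mu-1}{\mu},1]$ and $y\in[\tfrac{\mu-1}{\mu},1]$, the lower endpoint $\tfrac{\mu-1}{\mu}$ coming out of the nonemptiness condition for the admissible $y$-interval. Passing to the closed convex hulls of these projections decouples $\text{Ind}(\mathcal G,p_c)$ into a product of the two intervals; since $g_1$ is strictly decreasing in $x$ and $g_2$ in $y$, each player's unique best response is the lower endpoint, giving the unique Nash equilibrium $\bigl(\tfrac{\mu-1}{\mu}C+\tfrac1\mu D,\ \tfrac{\mu-1}{\mu}C+\tfrac1\mu D\bigr)$.

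The main obstacle I expect is this projection step: one must verify that the coupled, linearly-constrained feasible set really projects to $[\tfrac{\mu-1}{\mu},1]$ on each axis, and --- the conceptually important point --- that the induced-game prescription of taking the convex hull of the projection replaces the coupled region by the full product of intervals, on which the best-response analysis becomes immediate. Once this is in place, properties (1)--(3) are automatic: the equilibrium is $(D,D)$ for $\mu\le 1$ (in particular at $\mu=0$); the cooperation probability $\tfrac{\mu-1}{\mu}=1-\tfrac1\mu$ is continuous and strictly increasing in $\mu$ for $\mu>1$ and equals $0$ (i.e.\ $D$) at $\mu=1$, so the equilibrium moves continuously and monotonically from defection to cooperation; and it tends to $1$ as $\mu\to\infty$.
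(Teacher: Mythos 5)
Your proposal is correct, and your coalition-value computations ($v(p_s)=1$, $D_2(p_c)=1$, $R_2(p_c)=\mu$, hence $v(p_c)=\mu$) coincide exactly with the paper's, as they must, since these numbers are forced by the definitions. Where you genuinely depart from the paper is in solving the induced game for $\mu>1$. The paper writes down the coupled quadratic constraints in $(\lambda_1,\lambda_2)$, asserts that ``by symmetry'' the Nash equilibrium of the induced game must be symmetric, and then takes the lowest $\lambda$ with $(\lambda,\lambda)$ feasible. You instead exploit the cancellation of the bilinear terms, $g_1(\sigma)=1-x+(1+\mu)y$ and $g_2(\sigma)=1-y+(1+\mu)x$, compute the projections of the feasible region $\{g_1\ge\mu,\,g_2\ge\mu\}$ onto each axis (correctly obtaining $[\tfrac{\mu-1}{\mu},1]$; the corner of the coupled region sits precisely at the symmetric point, which is why the two routes agree), apply the paper's actual definition of $\text{Ind}(\mathcal G,p)$ as the product of the closed convex hulls of these projections, and finish with a dominance argument: each $g_i$ is strictly decreasing in that player's own cooperation probability, so each player's unique best response is the lower endpoint of her interval, independently of the opponent. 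This buys two things the paper's proof does not deliver as written: it actually establishes the \emph{uniqueness} asserted in the proposition (the paper's symmetry appeal is not justified --- symmetric games can have asymmetric equilibria --- and its ``lowest feasible $\lambda$'' step tacitly uses the same monotonicity you make explicit), and it implements the induced-game construction literally (projections, then convex hulls, then equilibrium in the decoupled product), rather than working informally inside the coupled constraint set. Your handling of the $\mu\le1$ case and of the boundary $\mu=1$ (where both coalition structures maximize and the induced games coincide, consistently with the footnote to Definition~\ref{defin:exact cooperative equilibrium eut}) is also correct.
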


\begin{proof}
The cooperative coalition structure $p_c=(\{1,2\})$ gives rise to a one-player game whose unique Nash equilibrium is the cooperative profile $(C,C)$. The value of this coalition is, for both players,
$$
v_1(p_c)=v_2(p_c)=(1+\mu)\left(1-\frac{1}{1+\mu}\right)=\mu.
$$ 
The selfish partition $p_s=(\{1\},\{2\})$ gives rise to the classical Nash equilibrium $(D,D)$. The value of $p_s$ is then, for both players,
$$
v_1(p_s)=v_2(p_s)=1
$$
Therefore, for $\mu<1$ one has $v_1(p_c)=v_2(p_c)<v_1(p_s)=v_2(p_s)$ and therefore the cooperative equilibrium predicts defection. To compute the cooperative equilibrium for $\mu\geq1$, first we need to find all profiles of strategies $(\sigma_1,\sigma_2)$ such that

\begin{align}\label{eq:first inequality1}
\begin{cases} g_1(\sigma_1,\sigma_2)\geq\mu\\ g_2(\sigma_1,\sigma_2)\geq\mu\end{cases}
\end{align}
To this end, set $\sigma_1=\lambda_1a_1+(1-\lambda_1)b_1$ and $\sigma_2=\lambda_2a_2+(1-\lambda_2)b_2$. From Equation (\ref{eq:first inequality1}) one gets
\begin{align}\label{eq:second inequality1}
\begin{cases} \lambda_1\lambda_2(1+\mu)+(1-\lambda_1)\lambda_2(2+\mu)+(1-\lambda_1)(1-\lambda_2)\geq\mu\\ \lambda_1\lambda_2(1+\mu)+(1-\lambda_2)\lambda_1(2+\mu)+(1-\lambda_1)(1-\lambda_2)\geq\mu\end{cases}
\end{align}
To compute the Nash equilibrium restricted to the induced game defined by these strategies is very easy. Indeed, it is clear, by simmetry, that this Nash equilibrium must be symmetric and so it is enough to find the lowest $\lambda$ such that $(\lambda,\lambda)$ is a solution of (\ref{eq:second inequality1}). One easily finds $\lambda=\frac{\mu-1}{\mu}$, as claimed.
\end{proof}

As a specific example of a one-shot Prisoner's dilemma, we consider the one recently experimented using MTurk in \cite{DEJR12} with monetary outcomes (expressed in dollars) $T=0.20, R=0.15,P=0.05,S=0$. Fix $i=1$. Denote by $p_s$ the selfish coalition structure, where the players are supposed to act separately. Then $\mathcal G_{p_s}=\mathcal G$, whose unique Nash equilibrium is $(D,D)$. Since a Nash equilibrium has no deviations, then $D_2(p_c)=0$ and consequently $v(p_s)=0.05$. Now, let $p_c$ be the cooperative coalition structure, where the players are supposed to play together. The game $\mathcal G_{p_c}$ is a one-player game whose only Nash equilibrium is $(C,C)$. Now, $D_2(p_c)=0.05$, since the second player can get $0.20$ instead of $0.15$ if she defects and the first player cooperates, and $R_2(p_c)=0.10$, since the second player risks to get $0.05$ instead of $0.15$ if also the other player defects. Finally $e_{1,\emptyset}(p_c)=0.15$ and $e_{1,2}(p_c)=0$. Consequently, $v(p_c)=0.10$, that is larger than $v(p_s)$. So we need to compute the Nash equilibrium of $\text{Ind}(\mathcal G,p_c)$. By symmetry of the game, this is the same as finding the smallest $\lambda$ such that $0.15\lambda^2+0.2\lambda(1-\lambda)+0.05(1-\lambda)^2\geq0.1$, that is $\lambda=\frac12$. Consequenty, the cooperative equilibrium of this variant of the Prisoner's dilemma is $\frac12 C+\frac12 D$ for both players. Notice that in \cite{DEJR12} it has been reported that players cooperated with probability 58 per cent in one treatment and 65 per cent in another treatment and the over-cooperation in the second experiment was explained in terms of framing effect due to the different ways in which the same game were presented.
}

\end{example}

\end{example} 

\begin{example}\label{ex:bertrand}
{\rm Let us consider the Bertrand competition. Each of N players simultaneously chooses an integer between $2$ and $100$. The player who chooses the lowest number gets a dollar amount times the number she bids and the rest of the players get 0. Ties are split among all players who submit the corresponding bid. 

The unique Nash equilibrium of this game is to choose $2$. Nevertheless, it has been reported in \cite{Du-Gn00} that humans tend to choose larger numbers. It was also observed that the claims tend to get closer to the Nash equilibrium, when the number of players gets larger.

To compute the value of the cooperative coalition $p_c=(\{1,\ldots,N\})$ we observe that every player $j$ has incentive $D_j(p_c)=49$ and risk $R_j(p_c)=50$. We then obtain
\begin{itemize}
\item For $N=2$, $v_1(p_c)=v_2(p_c)=50\cdot\frac{50}{99}$,
\item For $N=4$, one has
$$
v_1(p_c)=\ldots=v_N(p_c)=50\cdot\left(1-3\cdot\frac{49}{99}+3\cdot\left(\frac{49}{99}\right)^2-\left(\frac{49}{99}\right)^3\right),
$$
\item and so forth.
\end{itemize}

In other words, using the law of  total probability, one can easily show that the value of the cooperative coalition converges to $0$ very quickly. Consequently, when $N$ increases, the value decreases and the cooperative equilibrium predicts smaller and smaller claims.  This matches qualitatively what reported in a repeated Bertrand competion in \cite{Du-Gn00}. 
}
\end{example}

\begin{example}\label{ex:ultimatum}
{\rm In this example we show that the cooperative equilibrium theory fits an experiment reported by Kahneman, Knetsch and Thaler in \cite{KKT86}. Consider the ultimatum game. A proposer and a responder bargain about the distribution of a surplus of fixed size that we suppose normalized to ten. The responder's share is denoted by $s$ and the proposer's share by $10-s$. The bargaining rules stipulate that the proposer offers a share $s\in[0,10]$ to the responder. The responder can accept or reject $s$. In case of acceptance the proposer receives a monetary payoff $10-s$, while the responder receives $s$. In case of a rejection both players receive a monetary return of zero. 

Kahneman, Knetsch and Thaler conducted the following experiment: 115 subjects, divided in three classes, were asked to say what would be the minimum offer (between $0$ and $10$ Canadian dollars) that they would accept, if they were responders. The mean answers were between 2.00, 2.24 and 2.59 (see \cite{KKT86}, Table 2). 

Now, cooperative equilibrium theory predicts that the responder would accept any offer larger than the value of the coalition structure with the largest value. So let us compute the value for the responder of the two coalition structures $p_s$ and $p_c$ assuming that the two players have the same perception of money.

Denote by $A$ and $R$ responder's actions \emph{accept} and \emph{reject}, respectively. As in Nash bargaining problem, we obtain that the cooperative coalition $p_c=(\{1,2\})$ leads to a one-player game $\mathcal G_{p_c}$ with the unique acceptable equilibrium $(5,A)$. Therefore, we have 

$$v_2(p_c)=\frac52$$ 

since the first player can abandon the coalition playing every $s<\frac52$, but she risks to lose everything if the second player rejects the offer (observe that $R$ is a $2$-deviation to the strategy $s=0$). On the other hand, of course, one has $v(p_s)=0$, corresponding to the equilibrium $(0,R)$.

Consequently, cooperative equilibrium theory predicts that the responder would accept any offer larger than 2.5 dollars, which fits the experimental data reported in \cite{KKT86}.

In a very recent and not yet published experiment, Wells and Rand \cite{We-Ra} reported that the average claim of 44 subjects was 10.7 out of 30 \emph{monetary units}. This corresponds to 35.6 per cent which is apparently quite larger than what cooperative equilibrium predicts. However, making the average between the (normalized) results in \cite{KKT86} and \cite{We-Ra} - 44 subjects claimed an average of 0.356, 43 subjects claimed an average of 0.259, 37 subjects claimed an average of 0.224, and 35 subjects claimed an average of 0.200 - one finds an average claim of 0.264, which is in fact very close to the prediction of the cooperative equilibrium, which is 0.25.  
}
\end{example}

\begin{remark}\label{rem:ultimatum game}
{\rm The cooperative equilibrium can predict well also other experimental data collected for the ultimatum game. 

Recall that the unique subgame perfect equilibrium of the ultimatum game is to offer $s=0$. Nevertheless, there are numerous experimental studies which reject this prediction and show that proposers almost always make substantially larger offers. Fehr and Schmidt \cite{Fe-Sc99} explained these observations making use of two parameters $\alpha_i,\beta_i$ for each player. Let us find out what happens using cooperative equilibrium. 

Concerning the selfish coalition $p_s=(\{1\},\{2\})$. One easily sees that 
$$v_1(p_s)=v_2(p_s)=0,$$ 
in correspondence to the subgame perfect equilibrium $(0,R)$. Concerning the cooperative coalition, we have
$$v_1(p_c)=\frac12,$$ 
since the second player has no incentive to abandon the coalition, and

$$v_2(p_c)=\frac14,$$ 

as shown in Example \ref{ex:ultimatum}. Consequently, the exact cooperative equilibrium predicts that the proposer offers $s=0.25$ and the responder accepts. This explains the fact that there are virtually no offer below $0.2$ and above $0.5$, which was observed in \cite{Fe-Sc99} making a comparison among experimental data collected in \cite{GSS82}, \cite{KKT86}, \cite{FHSS88}, \cite{RPOZ91}, \cite{Ca95}, \cite{HMcS96}, and \cite{Sl-Ro97}. 

So there are some data that can be explained by the cooperative equilibrium under expected utility theory and without altruism. Other data can be explained using altruism. For instance, it was observed that proposer's offer was very often higher than 0.25 and, in most of the cases, it was between $0.4$ and $0.5$ (cf. \cite{Fe-Sc99}, Table I). This stronger deviation towards cooperation is not predicted by the exact cooperative equilibrium without altruism and we will show in Example \ref{ex:ultimatum2} how the cooperative equilibrium with altruism can explain it.}
\end{remark}

We now discuss an example that we believe is relevant because it makes predictions that are significantly different from Nash equilibrium. Such predictions are partially confirmed by experimental data, but it would be important to conduct more precise experiments in order to see how humans behave in such a situation.

\begin{example}\label{ex:public good}
{\rm Let us consider the $N$-player public good game. There are $N$ players, each of which has to decide on her contribution level $x_i\in[0,y]$ to the public good. The monetary payoff of player $i$ is given by
$$
g_i(x_1,x_2,\ldots, x_N)=y-x_i+\alpha(x_1+x_2+\ldots+x_N),
$$

where $\frac1N<\alpha<1$ denotes the constant marginal return to the public good $X=x_1+x_2+\ldots+x_N$. Notice that the unique perfect equilibrium is to choose $x_i=0$. Nevertheless, this \emph{free ride hypothesis} has been rejected by numerous experimental studies (see, e.g., \cite{Ma-Am81}, \cite{Is-Wa88}, \cite{IWW94}, \cite{Le95}). In particular, it was explicitly reported in \cite{Is-Wa88} and \cite{IWW94} the intuitive fact that, for a fixed number of player, claims get larger as $\alpha$ get larger and the much less intuitive fact that, for a fixed $\alpha$, claims get larger when the number of players is large enough. We now show that the first property is predicted by the cooperative equilibrium and we anticipate that the second property is predicted by the cooperative equilibrium under cumulative prospect theory. 

\begin{proposition}\label{prop:public good 1}
Let $N$, the number of players, be fixed. Denote $v(p_c)$ and $v(p_s)$ respectively the value of the cooperative coalition structure $p_c=(\{1,\ldots,N\})$ and of the selfish coalition structure $p_s=(\{1\},\ldots,\{N\})$. Then the function $v(p_c)-v(p_s)$ has the following properties:
\begin{enumerate}
\item it is strictly increasing in the variable $\alpha$,
\item it is negative for $\alpha=\frac1N$,
\item it is positive for $\alpha=1$.
\end{enumerate}
\end{proposition}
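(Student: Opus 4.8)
The plan is to derive closed-form expressions for $v(p_c)$ and $v(p_s)$ and then read off all three properties from the resulting elementary function of $\alpha$. First I would compute $v(p_s)$: since the coefficient of $x_i$ in $g_i$ is $\alpha-1<0$, each $g_i$ is strictly decreasing in $x_i$, so the selfish game $\mathcal G_{p_s}=\mathcal G$ has the unique Nash equilibrium $x_i=0$ for all $i$, at which every player gains exactly $y$. A Nash equilibrium admits no deviations, so, as in the earlier examples, $v(p_s)=y$.

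Next I would analyse $p_c$. The grouped gain is $g_{p_c}(x)=Ny+(\alpha N-1)X$ with $X=\sum_i x_i$; since $\alpha>1/N$ this is strictly increasing in $X$, so $\mathcal G_{p_c}$ has the unique Nash equilibrium $x_i=y$ for all $i$, which is therefore acceptable independently of the fairness functions, gives each player $\alpha N y$, and yields $M(p_c)=\{(y,\dots,y)\}$. For a fixed player $j$, the map $x_j\mapsto g_j(x_j,y_{-j})$ is strictly decreasing, so the unique best deviation is $x_j=0$, giving $D_j(p_c)=y(1-\alpha)$. For the risk the worst case is when every other player also free-rides; here I must check that $x_k=0$ is a genuine $k$-deviation from the full-cooperation profile, which holds precisely because $g_k(0,y_{-k})-\alpha N y=y(1-\alpha)\geq 0$, and that this profile minimises $g_j$. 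This gives $R_j(p_c)=y(\alpha N-1)$ and hence
\[
\tau:=\tau_{i,j}(p_c)=\frac{y(1-\alpha)}{y(1-\alpha)+y(\alpha N-1)}=\frac{1-\alpha}{\alpha(N-1)},
\]
which one verifies lies in $[0,1]$ for $\alpha\in[1/N,1]$.

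For the numbers $e_{i,J}(p_c)$ I would use that $g_i$ is linear in the contributions, so its infimum over the deviation set $\text{Dev}_J(p_c)$ is attained when all players in $J$ contribute $0$ while $i$ and the remaining players contribute $y$; this gives $e_{i,J}(p_c)=\alpha(N-|J|)y$, and the admissibility constraint defining $\text{Dev}_J$ is satisfied for the same reason (using $\alpha<1$) as above. By independence and symmetry $\tau_{i,J}(p_c)=\tau^{|J|}(1-\tau)^{N-1-|J|}$, so summing over $J$ and applying the binomial identities $\sum_k\binom{N-1}{k}\tau^k(1-\tau)^{N-1-k}=1$ and $\sum_k k\binom{N-1}{k}\tau^k(1-\tau)^{N-1-k}=(N-1)\tau$ yields
\[
v(p_c)=\alpha y\bigl[N-(N-1)\tau\bigr]=y\bigl(\alpha(N+1)-1\bigr).
\]

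Finally, $v(p_c)-v(p_s)=y\bigl(\alpha(N+1)-2\bigr)$, whose derivative in $\alpha$ is $y(N+1)>0$ (property 1); this expression equals $y(1-N)/N<0$ at $\alpha=1/N$ (property 2, since $N\geq 2$) and $y(N-1)>0$ at $\alpha=1$ (property 3). The step demanding the most care is the risk computation: unlike the incentive, the definition of $R_j$ forces me to verify the two membership conditions for $T(\sigma,\sigma_j')$ — the strict loss $g_j(\sigma)-g_j(\sigma_j',\sigma_{-j}')>0$ and the existence of a legitimate simultaneous $k$-deviation — and to confirm that the all-free-ride profile actually attains the supremum rather than merely lying in the feasible set.
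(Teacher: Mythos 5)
Your proposal is correct, and it actually goes further than the paper's own proof. The paper computes everything explicitly only for $N=2$ — obtaining $v(p_c)=(3\alpha-1)y$, $v(p_s)=y$, hence $v(p_c)-v(p_s)=(3\alpha-2)y$ — and dismisses the general case as ``a long and tedious computation'' to be handled via the law of total probability. You carry out that general computation in closed form: your quantities $D_j(p_c)=y(1-\alpha)$, $R_j(p_c)=y(\alpha N-1)$, $\tau=\frac{1-\alpha}{\alpha(N-1)}$, and $e_{i,J}(p_c)=\alpha(N-|J|)y$ all specialize correctly to the paper's $N=2$ values, and your $\tau$ agrees with the formula the paper itself quotes later (Example \ref{ex:public goods}). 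The binomial-mean identity then collapses the sum over $J$ to $v(p_c)=\alpha y\bigl[N-(N-1)\tau\bigr]=y\bigl(\alpha(N+1)-1\bigr)$, so $v(p_c)-v(p_s)=y\bigl(\alpha(N+1)-2\bigr)$, from which all three claimed properties are immediate for every $N\geq 2$ (the paper's proof, by contrast, never even verifies properties (2) and (3) explicitly; it only notes monotonicity and the crossing point $\alpha=2/3$). You are also right to single out the risk term as the delicate step: verifying the two membership conditions for $T(\sigma,\sigma_j')$ and that the all-free-ride profile attains the supremum is exactly what legitimizes $R_j$, and your check that $x_k=0$ is a $k$-deviation (using $\alpha<1$) and that $g_j(0,\sigma_{-j}')$ is minimized at $\sigma_{-j}'=0$ (using linearity) closes it. The one caveat, shared equally by the paper, is that properties (2) and (3) evaluate the formula at the boundary values $\alpha=\tfrac1N$ and $\alpha=1$, which lie outside the stated parameter range $\tfrac1N<\alpha<1$ of the game; like the paper, you treat these as limiting values of the closed-form expression, which is the only sensible reading of the statement.
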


The proof of this proposition is a long and tedious computation. Here we report explicitly only the proof for $N=2$. How to treat the general case should then be clear (use the law of total probabilities).

\begin{proof}[Proof of Proposition \ref{prop:public good 1} with $N=2$.]
Let $p_c=(\{1,2\})$ be the cooperative coalition structure. The unique Nash equilibrium of the game $\mathcal G_{p_c}$ is $(y,y)$ and each of the two players gets $e_{1,\emptyset}(p_c)=2\alpha y$. Assume $i=1$ (the case $i=2$ is symmetric). Observe that $D_2(p_c)=y+\alpha y-2\alpha y=y-\alpha y$. Indeed, the best deviation for player $2$ is to play $x_2=0$, which gives a payoff of $y+\alpha y$, if $x_1=y$. The risk is $R_2(p_c)=2\alpha y-y$. Indeed, if also player $1$ abandons the coalition $p_c$ to play the selfish strategy $x_1=0$, player $2$ would get $y$ instead of $2\alpha y$. Consequently
$$
\tau_{1,\{2\}}(p_c)=\frac{y-\alpha y}{y-\alpha y+2\alpha y-y}=\frac{1-\alpha }{\alpha }.
$$
On the other hand, one has
$$
e_{1,\{2\}}(p_c)=\alpha y,
$$
corresponding to player 2's defection.
Therefore,
$$
v_1(p_c)=v_2(p_c)=2\alpha y\cdot\frac{2\alpha -1}{\alpha }+\alpha y\cdot\frac{1-\alpha}{\alpha}=(3\alpha-1)y.
$$
On the other hand, the selfish coalition $p_s=(\{1\},\{2\})$ has value $y$, corresponding to the equilibrium $(0,0)$. Consequently, the function $v(p_c)-v(p_s)$ is strictly increasing in the variable $\alpha$ and one has

$$
v(p_c)=v(p_s) \iff \alpha=\frac23.
$$

\end{proof}

As a quantitative comparison, we consider the experimental data reported in \cite{GHL02}, with $\alpha=0.8$. We normalize $y$ to be equal to 1 (in the experiment $y=0.04$ dollars). In this case the cooperative equilibrium is \emph{supported between} 0.66 and 0.67. In \cite{GHL02} it has been reported that the average of contributions was 0.50, but the mode was 0.60 (6 out of 32 times) followed by 0.80 (5 out of 32 times). 
}
\end{example}

\begin{example}\label{ex:bargaining2}
{\rm We consider the finite version of Nash's bargaining problem as in Example \ref{ex:bargaining}. It is well known that the unique reasonable solution is $(50,50)$ and indeed a number of theories has been developed to select such a Nash equilibrium. For instance, in \cite{Na50b}, \cite{Ka-Sm75}, and \cite{Ka77}, the authors studied a set of additional axioms that guarantee that the unique solution of Nash bargaining problem is a 50-50 share. Other solutions, based on different solution concepts, have been recently proposed in \cite{Ha-Ro10} and \cite{Ha-Pa12}.

Now we show that also the cooperative equilibrium predicts a 50-50 share, if the two players have the same perception of gains.

\begin{proposition}\label{prop:bargaining}
If the two players have the same perception of money, that is, $f_1=f_2$, then the unique exact cooperative equilibrium is $(50,50)$.
\end{proposition}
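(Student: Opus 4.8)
The plan is to reduce the claim to value bookkeeping plus one induced-game computation. Since $P=\{1,2\}$ has exactly the two coalition structures $p_c=(\{1,2\})$ and $p_s=(\{1\},\{2\})$, and since Example \ref{ex:bargaining} has already identified the acceptable equilibria — the fair split $(50,50)$ for $p_c$ (this is the single place where $f_1=f_2$ enters, via the strict convexity/concavity argument), and $(100,0)$ and $(0,100)$ for the two singleton coalitions in $p_s$ — the remaining work is to compute $v_i(p_c)$ and $v_i(p_s)$ with the machinery of Section \ref{se:cooperative equilibrium eut}, decide which structure each player plays according to, and then solve the induced game.

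First I would compute $v_i(p_c)$. Uniqueness of the acceptable equilibrium gives $M(p_c)=\{(50,50)\}$, so the no-deviation term is $e_{i,\emptyset}(p_c)=g_i(50,50)=50$. The key observation is that $(50,50)$ admits no profitable unilateral deviation: $g_2(50,y)=y$ when $y\leq 50$ and $0$ otherwise, so $g_2(50,y)\leq 50$ for all $y$, with equality only at $y=50$; hence $D_2(p_c)=0$, and by the convention following the definition of $D_j$ this forces $\tau_{1,\{2\}}(p_c)=0$ and $\tau_{1,\emptyset}(p_c)=1$. Therefore $v_1(p_c)=e_{1,\emptyset}(p_c)=50$, and by symmetry $v_2(p_c)=50$.

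Next I would compute $v_i(p_s)$, and this is the step I expect to be the main obstacle, since it runs against first intuition. Combining the pieces of the two singleton acceptable equilibria produces the collision profile: $M(\{1\},p_s)=\{\delta_{100}\}$ (from $(100,0)$) and $M(\{2\},p_s)=\{\delta_{100}\}$ (from $(0,100)$), so $M(p_s)=\{(100,100)\}$, at which both demands exceed the surplus and $g_1=g_2=0$. One then checks that $g_i(x,100)=0$ for every $x$, so $D_j(p_s)=0$ and $v_i(p_s)=e_{i,\emptyset}(p_s)=0$. Thus $v_i(p_c)=50>0=v_i(p_s)$ for both players, so each plays according to $p_c$ and the equilibrium must be a Nash equilibrium of $\text{Ind}(\mathcal G,p_c)$.

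Finally I would analyze $\text{Ind}(\mathcal G,p_c)$, the subgame induced by $A=\{(\sigma_1,\sigma_2):g_1(\sigma)\geq 50,\ g_2(\sigma)\geq 50\}$. The decisive inequality is $g_1(x,y)+g_2(x,y)\leq 100$ pointwise, with equality exactly on the antidiagonal $x+y=100$; taking expectations gives $g_1(\sigma)+g_2(\sigma)\leq 100$, which together with the two constraints forces $g_1(\sigma)=g_2(\sigma)=50$ and concentrates $\sigma_1\otimes\sigma_2$ on $\{x+y=100\}$. Since a product measure supported on the antidiagonal must be a single point mass $(\delta_x,\delta_{100-x})$, and $g_1=50$ pins $x=50$, we obtain $A=\{(\delta_{50},\delta_{50})\}$. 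Hence both induced strategy sets are the singleton $\{\delta_{50}\}$, the induced game has the unique Nash equilibrium $(50,50)$, and this is the unique exact cooperative equilibrium, as claimed.
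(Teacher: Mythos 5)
Your proof is correct and takes essentially the same route as the paper's: it uses the acceptable equilibria from Example \ref{ex:bargaining} (which is indeed the only place $f_1=f_2$ is used), derives $v_i(p_c)=50>0=v_i(p_s)$ via the collision profile $(100,100)$, and concludes by identifying the induced game with the single profile $(50,50)$. The only difference is that you make explicit what the paper asserts tersely — the $D_j=0$ convention giving $\tau_{i,j}=0$, and the pointwise bound $g_1+g_2\leq 100$ plus the product-measure argument showing $\text{Ind}(\mathcal G,p_c)=\left\{(\delta_{50},\delta_{50})\right\}$ — which fills in details rather than changing the approach.
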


\begin{proof}
As we have already seen in Example \ref{ex:bargaining}, the cooperative partition $p_c$ has a unique acceptable profile of strategies, which is $\left(50,50\right)$. Observe that $\text{Dev}_j(p)=\emptyset$, for all $j$, and therefore $\text{Ind}(\mathcal G,p)$ is the game where both players can choose only the strategy $50$. Consequently, we have
$$
v_1(p_c)=v_2(p_c)=50.
$$
Now consider the selfish coalition structure $p_s=(\{1\},\{2\}$). This time the unique acceptable equilibria are
$$
\text{Acc}_{\{1\}}(\mathcal G_{p_s})=(100,0)\qquad\text{Acc}_{\{2\}}(\mathcal G_{p_s})=(0,100).
$$
Observing that $\text{Dev}(p_s)=\emptyset$, we then obtain
$$
v_1(p_s)=g_1(100,100)=0.
$$
Analogously, we obtain $v_2(p_s)=g_2(100,100)=0$. Therefore the value of the cooperative coalition structure is larger than the value of the selfish coalition structure  and, consequently, the set of exact cooperative equilibria of Nash bargaining problem coincides with the set of Nash equilibria of the induced game $\text{Ind}(\mathcal G,p_c)$. Since this induced game contains only one profile of strategies, which is $(50,50)$, this is then its unique exact cooperative equilibrium.
\end{proof}
}
\end{example}




We mentioned in the Introduction that there are other solution concepts that have been proposed in the last few years and we have discussed why believe that Renou-Schlag-Halpern-Pass's iterated regret minimization is the most promising of them: the others are either too rigid or inapplicable to one-shot games. Contrariwise, iterated regret minimization can explain deviations from Nash equilibria in several games. Nevertheless, as observed in \cite{Ha-Pa12}, it fails to predict human behavior for some other games, such as the Prisoner's dilemma, the public good game, and the Traveler's dilemma with punishment. We have already computed the cooperative equilibrium for the Prisoner's dilemma and the public good game and we now make a parallelism between iterated regret minimization and cooperative equilibrium for the Traveler's dilemma with punishment.

\begin{example}\label{ex:punish travelers}
{\rm Consider a variant of the Traveler's dilemma that has been proposed in \cite{Ha-Pa12}, Section 6. Let us start from the Traveler's dilemma in Example \ref{ex:traveler} where, this time, the strategy set is $\{2,3,\ldots,100\}$ for both players and the bonus-penalty is $b=2$. Suppose that we modify this variant of the Traveler's dilemma so as to allow a new action, called P (for \emph{punish}), where both players get 2 if they both play P, but if one player plays P and the other plays an action other than P, then the player who plays P gets $2$ and the other player gets $-96$. In this case $(P,P)$ is a Nash equilibrium and it is also the solution in terms of regret minimization. As observed in \cite{Ha-Pa12}, this is a quite unreasonable solution, since the intuition suggests that playing $P$ should not be rational. In fact, one can easily check that, from our point of view, this game is absolutely the same as the original Traveler's dilemma\footnote{Basically because strategies with very small payoff, such as $P$, do not enter in our computation of the value of the cooperative coalition.} and therefore it has got the same cooperative equilibria.}
\end{example}









\section{Towards cumulative prospect theory}\label{se:towards cpt}

In the previous section we have discussed a set of examples where the cooperative equilibrium under expected utility theory predicts human behavior satisfactorily well. On the other hand, since we are working with gain functions, it is natural to use cumulative prospect theory instead of expected utility theory. But before describing the cooperative equilibrium under cumulative prospect theory, we discuss a few examples where the passage from expected utility theory to cumulative prospect theory may explain observations that are not consistent with the cooperative equilibrium under expected utility theory. 

\begin{example}\label{ex:public goods}
{\rm We mentioned before that has been observed that contributions in the Public Goods game depend on the number of players in a puzzling way: they first decreases as the number of players increases, but then, when the number of players if sufficiently large, they increase again. This behavior is not predicted by the cooperative equilibrium under expected utility theory, which predicts that contributions decreases as the number of players increases. Nevertheless, this behavior is consistent with the cooperative equilibrium under cumulative prospect theory.

Indeed, given the $N$-player Public Goods game with marginal return $\alpha$, the prior probability that player $j$ abandons the coalition is

$$
\tau_{i,j}(p_c)=\frac{1-\alpha}{\alpha(N-1)}.
$$

Consequently, when $N$ is large enough, all the events ``\emph{$j$ abandons the coalition}'' have negligible probability. Now, one of the principles of cumulative prospect theory is that decision makers treat extremely unlikely events as impossible (see \cite{Ka-Tv79}, p.275) and therefore, a part from very risk averse people, most of the agents would actually replace this probability just by $0$. So the cooperative equilibrium is consistent with the tendency to cooperate that has been observed in large groups. 

}
\end{example}

\begin{example}\label{ex:generalized coordination game 2}
{\rm The following game has been proposed by J. Halpern in a private communication. Two players have the same strategy set $\{a,b,c\}$ and the gains are described by the matrix

$$
  \begin{array}{cccc}
     & a & b & c\\
    a & x,x & 0,0 & 0,y\\
    b & 0,0 & x,x & 0,y\\
    c & y,0 & y,0 & y,y \\
  \end{array}
$$
where $x>y>0$. In this case one finds $v(p_c)=v(p_s)=0$ and consequently, the set of exact cooperative equilibrium is equal to the set of Nash equilibria. Nevertheless, in this case it is very likely that if $y$ and $x$ are very close and much larger than $0$, then the two players should coordinate and play the safe strategy $c$. Also this behavior would be predicted by the cooperative equilibrium under cumulative prospect theory: the strategies $a$ and $b$ are deleted a priori since perceived too risky with respect to the safe strategy. 
}
\end{example}

\section{A brief introduction to cumulative prospect theory}\label{se:cpt}

The examples described in the previous sections give one more motivation to abandon expected utility theory and use cumulative prospect theory. Before starting the description of the cooperative equilibrium under cumulative prospect theory, we take this short section to give a short introduction to this theory.\\

By definition, a prospect $p=(x_{-m},p_{-m};\ldots;x_{-1},p_{-1};x_0,p_0;x_1,p_1;\ldots;x_n,p_n)$ yields outcomes\footnote{Prospect theory and cumulative prospect theory have been originally developed for monetary outcomes (see \cite{Ka-Tv79}, p.274, l.4), giving us one more motivation to abandon utility functions and work with gain functions. Kahneman and Tversky's choice to work with monetary outcomes is probably due to the second principle of their theory, as it will be recalled little later.} $x_{-m}<\ldots<x_{-1}<x_0=0<x_1<\ldots<x_n$ with probabilities $p_i>0$, for $i\neq0$, and $p_0\geq0$, that sum up to $1$.

Expected utility theory was founded by Morgenstern and von Neumann in \cite{Mo-vN47} to predict the behavior of a decision maker that must choose a prospect among some. Under certain axioms (see, for instance, \cite{Fi82}) Morgenstern and von Neumann proved that a decision maker would evaluate each prospect $p$ using the value
\begin{align}\label{eq:eut}
V(p)=\sum_{i=-m}^np_iu(x_i),
\end{align}
where $u(x_i)$ is the utility of the outcome $x_i$, and then she would choose the prospect(s) maximizing $V(p)$.

It has been first realized by M. Allais in \cite{Al53} that a human decision maker does not really follow the axioms of expected utility theory and, in particular, she evaluates a prospect using an evaluation procedure different from the one in (\ref{eq:eut}). A first attempt to replace expected utility theory with a theory founded on different axioms and able to explain deviations from \emph{rationality} was done in \cite{Ka-Tv79}, where Kahneman and Tversky founded the so-called prospect theory. This novel theory encountered two problems. First, it did not always satisfy stochastic
dominance, an assumption that many theorists were reluctant to give up. Second, it was not
readily extendable to prospects with a large number of outcomes. Both
problems could be solved by the rank-dependent or cumulative functional, first proposed
by Quiggin \cite{Qu82} for decision under risk and by Schmeidler\cite{Sc89} for decision under
uncertainty. Finally, Kahneman and Tversky were able to incorporate the ideas presented in \cite{Qu82} and \cite{Sc89} and developed their cumulative prospect theory in \cite{Tv-Ka92}. Prospect theory and cumulative prospect theory have been successfully applied to explain a large number of phenomena that expected utility theory was not able to explain, as the disposition effect \cite{Sh-St85}, asymmetric price elasticity \cite{Pu92},\cite{Ha-Jo-Fa93}, tax evasion \cite{Dh-No07}, as well as many problems in international relations \cite{Le92}, finance \cite{Th05}, political science \cite{Le03}, among many others\footnote{The two papers in prospect theory and cumulative prospect theory have more than 30000 citations.}.

The basic principles of cumulative prospect theory are the following.
\begin{enumerate}
\item[(P1)] Decision makers weight probabilities in a non linear manner. In particular, the evidence suggests that decision makers overweight low probabilities and underweight high probabilities.
\item[(P2)] Decision makers think in terms of gains and losses rather than in terms of their net assets\footnote{This principle is probably the one which forced Kahneman and Tversky to work with monetary outcomes and force us to work with gain functions.}.
\item[(P3)] Decision makers tend to be risk-averse with respect to gains and risk-acceptance with respect to losses\footnote{As a consequence, risk aversion is already taken into account and this is why we did not need to consider it explicitly in the definition of a game in explicit form.}.
\item[(P4)] \emph{Losses loom larger than gains}; namely, the aggravation that one experiences in losing a sum of money appears greater than the pleasure associated with gaining the same amount of money.
\end{enumerate}

The consequence of these principles is that decision makers evaluate a prospect $p$ using a value function
\begin{align}\label{eq:cpt}
V(p)=\sum_{j=-m}^n\pi_jv(x_j)
\end{align}
that is completely different from the one in (\ref{eq:eut}). To understand the explicit shape of the functions $v$ and $\pi$ is probably the most important problem in cumulative prospect theory. About the function $v$, it has been originally proposed in \cite{Tv-Ka92} to use the function
$$
v(x)=\left\{
  \begin{array}{ll}
    x^{\alpha}, & \hbox{if $x\geq0$;} \\
    -\lambda(-x)^{\beta}, & \hbox{if $x<0$.}
  \end{array}
\right.
$$
where experiments done in \cite{Tv-Ka92} gave the estimations $\alpha\sim\beta\sim0.88$ and $\lambda\sim2.25$. About the function $\pi$, the situation is much more intrigued: cumulative prospect theory postulates the existence of a strictly increasing surjective function $w:[0,1]\to[0,1]$ such that

$$
\pi_{-m}=w(p_{-m})
$$
$$
\pi_{-m+1}=w(p_{-m}+p_{-m+1})-w(p_{-m})
$$
$$
\vdots
$$
$$
\pi_{j}=w\left(\sum_{i=-m}^jp_i\right)-w\left(\sum_{i=-m}^{j-1}p_i\right)\qquad\qquad j<0
$$
$$
\pi_0=0
$$
$$
\pi_j=w\left(\sum_{i=j}^np_i\right)-w\left(\sum_{i=j+1}^{n}p_i\right)\qquad\qquad j>0
$$
$$
\vdots
$$
$$
\pi_{n-1}=w(p_{n-1}+p_n)-w(p_n)
$$
$$
\pi_n=w(p_n)
$$

A first proposal of such a function $w$ was made by Tversky and Kahneman themselves in \cite{Tv-Ka92} and it is
$$
w(p)=\frac{p^{\gamma}}{\left(p^{\gamma}+(1-p)^{\gamma}\right)^{\frac{1}{\gamma}}}
$$
where $\gamma$ has been estimated to belong to the interval $\left[\frac{1}{2},1\right)$ in \cite{Ri-Wa06}. Other functions $w$ have been proposed in \cite{Ka79}, \cite{Go-Ei87}, 
\cite{R87}, \cite{Cu-Sa89}, \cite{La-Ba-Wi92}, \cite{Lu-Me-Ch93}, \cite{He-Or94}, \cite{Pr98}, and \cite{Sa-Se98}. 

It is not our purpose to give too many details about the enormous literature devoted to understanding the evaluation procedure in cumulative prospect theory. Our purposes were indeed to give a brief introduction to the theory and stress how this theory implies the necessity to work with gain functions instead of utility functions. So we now pass to the description of the cooperative equilibrium for finite games in explicit form under cumulative prospect theory and taking into account altruism.

\section{Iterated Deletion: the set of playable strategies}\label{se:playable}

The cooperative equilibrium under cumulative prospect theory and taking into account altruism will be defined through two steps. In the first step we use the altruism functions $a_{ij}$ to eliminate the strategies that are not good for the collectivity. The second step is the prospect theoretical analogue of the procedure described in Section \ref{se:cooperative equilibrium eut}, applied to the subgame obtained after eliminating the strategies in the first step. 

In this section we describe the first step of the construction, that we call \emph{iterated deletion}. As well known, iterated deletion of strategies is a procedure which is common to most solution concepts (in Nash theory, one deletes dominated strategies; in iterated regret minimization theory, one deletes strategies which do not minimize regret; in Bernheim's and Pearce's rationability theory (\cite{Be84} and \cite{Pe84}), one deletes strategies that are not \emph{justifiable}  \cite{Os-Ru94}). However, the use of altruism to delete strategies seems new in the literature. This iterated deletion of strategies is based on a new notion of domination between strategies, that we call super-domination\footnote{A slightly stronger notion of domination between strategies has been independenlty introduced in \cite{Ha-Pa13}, under the name \emph{minimax domination}.}, which is motivated by the fact that human players do not eliminate weakly or strongly dominated strategies (as shown by the failure of the classical theory to predict human behavior in the Prisoner's and Traveler's Dilemmas).

Each step of our iterated deletion of strategies is made by two sub-reductions. The first sub-reduction is based on the following \emph{principle}:
\begin{itemize}
\item[(CS)] If $s_i\in S_{i}$ is a strategy for which there is another strategy $s_i'\in S_{i}$ which gives a certain larger gain (or a certain smaller loss) to player $i$ and does not harm too much the other players, then player $i$ will prefer the strategy $s_i'$ and will never ever play the strategy $s_i$. 
\end{itemize}
Thus, this principle states that every player is selfish unless the society gets a big damage. As we mentioned before, implicit in this principle there is a new notion of domination between strategies.

\begin{definition}\label{defin:sure gain}
{\rm Let $s_i,s_i'\in S_{i}$. We say that $s_i$ is \emph{super-dominated} by $s_i'$ and we write $s_i<_{i}s_i'$, if
\begin{enumerate}
\item 
 for all $s_{-i},s_{-i}'\in S_{-i}$, one has $g_{i}\left(s_i,s_{-i}\right)\leq g_i\left(s_i',s_{-i}'\right)$,
\item there are $s_{-i},s_{-i}'\in S_{-i}$ such that $g_{i}\left(s_i,s_{-i}\right)< g_i\left(s_i',s_{-i}'\right)$.
\end{enumerate} }
\end{definition}

Observe that super-domination is much stronger than the classical notion of weak domination. This makes sense since it has been observed that in many situations, as in the Traveler's dilemma, players do not eliminate weakly dominated strategies, while it is clear that a purely selfish player would delete a super-dominated strategy. On the other hand, there is no direct relation between super-domination and strong-domination, as shown by the following examples.

\begin{example}\label{ex:sure}
{\rm
Consider the following version of the Prisoner's dilemma
$$
  \begin{array}{ccc}
     & L & R \\
    U & 2,2 & 0,3 \\
    D & 3,0 & 1,1 \\
  \end{array}
$$
The strategy D strongly dominates U and the strategy R strongly dominates the strategy L. Nevertheless, there are no super-dominated strategies, since $g_1(D,R)<g_1(U,L)$ and $g_2(D,R)<g_2(U,L)$.}
\end{example}

\begin{example}\label{ex:sure gain}
{\rm Consider the two-person zero-sum game
$$
  \begin{array}{ccc}
     & L & R \\
    U & 0,0 & 10,-10 \\
    D & 1,-1 & 1,-1 \\
  \end{array}
$$
In this case L super-dominates R, but R is not strongly dominated by L, since $g_2(D,R)=g_2(D,L)$.}
\end{example}

We will see in Example \ref{ex:more iterations} that the notion of super-domination between strategies can be interesting in itself, since it allows to explain some phenomena that are not easy to capture making use of weakly and strongly dominated strategies.

Before coming back to the theory, we need to fix some terminology. Fix $\sigma_i\in\mathcal P(S_i)$, the \emph{fiber game} defined by $\sigma_i$ is the $(N-1)$-player game $\mathcal G_{\sigma_i}$ obtained by $\mathcal G$ assuming that player $i$ plays the strategy $\sigma_i$ surely. Formally, $\mathcal G_{\sigma_i}=\mathcal G(P\setminus\{i\},S_{-i},\mathfrak g,g_{\sigma_{i}},a_{-i},f_{-i})$, where $g_{\sigma_i}$ is the $(N-1)$-dimensional vector whose components are the functions $g_j(\sigma_i,\cdot)$, with $j\in P\setminus\{i\}$, $a_{-i}=(a_{jk})_{j,k\in P\setminus\{i\},j\neq k}$, $f_{-i}=(f_{j})_{j\in P\setminus\{i\}}$. Using a trick which is conceptually similar to the one used in \cite{Ha-Ro10}, we define the cooperative equilibrium by induction on the number of players. 

\begin{definition}\label{defin:one player}
{\rm The cooperative equilibria of a one-player game are all probability measures supported on the set of pure strategies that maximize the gain function and give rise to acceptable equilibria\footnote{As observed in Remark \ref{rem:barycenter}, when there are many such equilibria, it might make sense to consider only the barycenter.}.}
\end{definition}

Now we suppose that we have already defined the cooperative equilibrium for all $(N-1)$-player games and we define the cooperative equilibrium for all $N$-player games. We denote by $\text{Coop}(\mathcal G)$ the set of cooperative equilibria of a game $\mathcal G$.

Now, fix $i\in P$ and let $s,t\in S_{i}$, with $s<_{i}t$. If player $i$ is believed to play the strategy $t$, the other players would answer playing an \emph{equilibrium} of the fiber game $\mathcal G_t$. Since the fiber game has $N-1$ players, we may use the inductive hypothesis. We define the \emph{set of losers} $L_{i}(s,t)$ to be the set of players $j\in P$ such that
\begin{enumerate}
\item $g_j\left(s,\sigma_{-i}^{(s)}\right)>g_j\left(t,\sigma_{-i}^{(t)}\right), \text{ for all }\sigma_{-i}^{(t)}\in\text{Coop}(\mathcal G_t), \sigma_{-i}^{(s)}\in\text{Coop}(\mathcal G_s)$, and
\item $g_j\left(t,\sigma_{-i}^{(t)}\right)<g_i\left(t,\sigma_{-i}^{(t)}\right), \text{ for all }\sigma_{-i}^{(t)}\in\text{Coop}(\mathcal G_t)$. 
\end{enumerate}
In words, $L_{i}(s,t)$ is the set of players that have a certain disadvantage when player $i$ decides to play the strategy $t$ instead of her worse strategy $s$ (Condition (1)) and that are weaker than player $i$ when she plays her better strategy $s$ (Condition (2)).

Now, if player $i$ decides to renounce to play $t$ and accept to play $s$, then she renounce to a certain gain of $\inf\left\{g_i\left(t,\sigma_{-i}^{(t)}\right) : \sigma_{-i}^{(t)}\in\text{Coop}(\mathcal G_t)\right\}$, to accept a smaller gain. Her maximal loss is then:

$$
P_i(s,t):=\sup\left\{\inf\left\{g_i\left(t,\sigma_{-i}^{(t)}\right) : \sigma_{-i}^{(t)}\in\text{Coop}(\mathcal G_t)\right\}-g_i\left(s,\sigma_{-i}^{(s)}\right) : \sigma_{-i}^{(s)}\in\text{Coop}(\mathcal G_s)\right\}.
$$

On the other hand, the best that can happen to player $j\in L_i(s,t)$ if player $i$ decides to play her worse strategy $s$ is

$$
Q_j(s,t):=\sup\left\{g_j\left(s,\sigma_{-i}^{(s)}\right)-\inf\left\{g_j\left(t,\sigma_{-i}^{(t)}\right) : \sigma_{-i}^t\in\text{Coop}(\mathcal G_t)\right\} : \sigma_{-i}(s)\in\text{Coop}{\mathcal G_s}\right\}.
$$

Now, set

$$
P_i'(t):=\inf\left\{g_i\left(t,\sigma_{-i}^{(t)}\right) : \sigma_{-i}^{(t)}\in\text{Coop}(\mathcal G_t)\right\}.
$$
In words, this number is the certain gain that player $i$ would get if she decides to play her better strategy $t$. Now, set
$$
Q_j'(s):=\inf\left\{g_j\left(s,\sigma_{-i}^{(s)}\right) : \sigma_{-i}^{(s)}\in\text{Coop}(\mathcal G_s) \right\}.
$$
In words, this number is the certain gain that player $j$ would get if player $i$ decides to play her worse $s$. 

Therefore, we have reduced the problem of choosing $s$ or $t$ to the following problem: does player $i$ accept to renounce to $P_i(s,t)$ out of $P_i'(s,t)$ in order to give a gain of $Q_j(s,t)$ to player $j$, who already had a certain gain of $Q_j'(s,t)$? This is in fact a generalized dictator game. So, we set 
$$
A_{ij}(s,t)=a_{ij}\left(\frac{Q_j(s,t)}{P_i(s,t)},P_i'(t), Q_j'(s)\right),
$$
and we give the following definition.

\begin{definition}\label{defin:unplayable first type}
{\rm A strategy $s\in S_i$ is \emph{unplayable of the first type} for player $i$ if there is another strategy $t\in S_i$ such that 
\begin{itemize}
\item $s<_{i}t$
\item for all $j\in L_{i}(s,t)$, one has $P_i(s,t)>A_{ij}(s,t)$.
\end{itemize}
In this case we write $s<_i^It$.
 }
\end{definition}

\begin{example}
{\rm Consider the game with gain matrix
$$
  \begin{array}{ccc}
     & L & R \\
    U & 1,1 & 1,1 \\
    D & 1,1 & 2,1 \\
  \end{array}
$$
Observe that $U<_{1}D$. Moreover, $L_{1}(U,D)=\emptyset$ and therefore the second condition in Definition \ref{defin:unplayable first type} is true for trivial reasons. Consequently, the strategy $U$ is unplayable of the first type for the first player. This happens, roughly speaking, because the column-player, playing $D$, can have a gain without damaging the row-player. }
\end{example}

\begin{example}
{\rm A little less trivial example is given by the game represented by the following gain matrix
$$
  \begin{array}{ccc}
     & L & R \\
    U & 0,0 & 0,0 \\
    D & 1,-1 & 1,-1 \\
  \end{array}
$$
Assume $a_{12}(1,1,-1)<1$. Of course, $U<_{1}D$. Now, observe that $P_{1,2}(U,D)=1$ and that $A_{12}(s,t)=a_{12}(1,1,-1)$, thus the strategy U is unplayable of the first type for the vertical player. Roughly speaking, this happens because the vertical player, playing D, will get a certain gain giving a damage to the horizontal player that is small compared to her gain.}
\end{example}

Coming back to the theory, we would like to delete unplayable strategies of the first type. To this end, we need to prove a simple lemma. Given $s_i\in S_i$, let $\text{Maj}^I(s_i)=\left\{s_i'\in S_i : s_i <_i^I s_i'\right\}$.

\begin{lemma}\label{lem:first type maximal}
For all $i\in P$, there exists $s_i\in S_i$ such that $\text{Maj}^I(s_i)=\emptyset$.
\end{lemma}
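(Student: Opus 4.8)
The plan is to observe that the relation ``unplayable of the first type'' is a sub-relation of super-domination and that super-domination already possesses maximal elements on the finite set $S_i$, so that any super-domination-maximal strategy automatically has empty $\text{Maj}^I$. Fix $i\in P$. By the first bullet of Definition \ref{defin:unplayable first type}, $s<_i^I t$ forces $s<_i t$; hence $\text{Maj}^I(s_i)\subseteq\{s_i'\in S_i : s_i<_i s_i'\}$, and it suffices to produce an $s_i$ that is maximal for $<_i$, i.e. with no $t$ satisfying $s_i<_i t$. To analyze $<_i$ I would introduce the worst- and best-case payoffs $\underline g_i(s_i):=\min_{s_{-i}\in S_{-i}}g_i(s_i,s_{-i})$ and $\overline g_i(s_i):=\max_{s_{-i}\in S_{-i}}g_i(s_i,s_{-i})$, which are well defined since $S_{-i}$ is finite and non-empty. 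Condition (1) of Definition \ref{defin:sure gain} asks that $g_i(s,s_{-i})\le g_i(t,s_{-i}')$ for \emph{every} pair $(s_{-i},s_{-i}')$, which, by maximizing the left side and minimizing the right, is exactly equivalent to $\overline g_i(s)\le\underline g_i(t)$.

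The key device is the potential function $\Phi(s):=\underline g_i(s)+\overline g_i(s)$, and the heart of the argument is the claim that $s<_i t$ implies $\Phi(s)<\Phi(t)$. Monotonicity (weak) is immediate: from $\overline g_i(s)\le\underline g_i(t)\le\overline g_i(t)$ we get $\overline g_i(s)\le\overline g_i(t)$, and from $\underline g_i(s)\le\overline g_i(s)\le\underline g_i(t)$ we get $\underline g_i(s)\le\underline g_i(t)$, so $\Phi(s)\le\Phi(t)$. Once this is in place, choosing $s_i$ to maximize $\Phi$ over the finite set $S_i$ yields a strategy with no $t$ satisfying $s_i<_i t$ (such a $t$ would give $\Phi(t)>\Phi(s_i)$), hence $\text{Maj}^I(s_i)=\emptyset$, which is the lemma.

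The step I expect to be the only real obstacle is extracting strictness in $\Phi(s)<\Phi(t)$ from condition (2) of Definition \ref{defin:sure gain}, because of the boundary case $\overline g_i(s)=\underline g_i(t)$. Given condition (1), condition (2) (the existence of a strictly improving pair) is equivalent to: $\overline g_i(s)<\underline g_i(t)$, or $\underline g_i(s)<\overline g_i(s)$, or $\underline g_i(t)<\overline g_i(t)$ (the negation being that $g_i(s,\cdot)$ and $g_i(t,\cdot)$ are both constant and equal to the common value $\overline g_i(s)=\underline g_i(t)$). I would then split into these three cases. If $\overline g_i(s)<\underline g_i(t)$, both summands strictly increase since $\underline g_i(s)\le\overline g_i(s)<\underline g_i(t)$ and $\overline g_i(s)<\underline g_i(t)\le\overline g_i(t)$. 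If instead $\underline g_i(s)<\overline g_i(s)$, then $\underline g_i(s)<\overline g_i(s)\le\underline g_i(t)$ strictly raises the first summand while the second does not decrease. Finally, if $\underline g_i(t)<\overline g_i(t)$, then $\overline g_i(s)\le\underline g_i(t)<\overline g_i(t)$ strictly raises the second summand while the first does not decrease. In all three cases $\Phi(s)<\Phi(t)$, completing the claim and hence the proof. (Equivalently, one can phrase the same computation as checking that $<_i$ is a strict partial order, with this case analysis supplying transitivity and irreflexivity; the potential-function formulation simply packages finiteness and acyclicity into a single maximization.)
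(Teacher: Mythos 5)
Your proof is correct, and it follows a genuinely different route from the paper's, although both share the same opening move: the observation that $s<_i^I t$ forces $s<_i t$, so it suffices to find a $<_i$-maximal strategy. From there the paper argues by contradiction: if every strategy had a nonempty $\text{Maj}^I$, finiteness of $S_i$ would yield a cycle under $<_i^I$, hence under $<_i$; the paper then invokes transitivity of $<_i$ (asserted, not verified) to collapse the cycle to $s^{(1)}<_i s^{(n-1)}$ and $s^{(n-1)}<_i s^{(1)}$, which are incompatible. You instead argue directly: you characterize condition (1) of Definition \ref{defin:sure gain} as $\overline g_i(s)\le\underline g_i(t)$, show that the strict potential $\Phi=\underline g_i+\overline g_i$ increases strictly along $<_i$ (with a correct three-case analysis of where the strictness in condition (2) can come from, the only delicate point), and take a maximizer of $\Phi$. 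The trade-off: the paper's cycle argument is shorter on the page but leans on transitivity and asymmetry of $<_i$, which it does not check — and verifying them amounts to essentially the same $\min$/$\max$ computations you carry out. Your version is fully self-contained, is constructive (any $\Phi$-maximizer works), and your interval characterization of super-domination ($\overline g_i(s)\le\underline g_i(t)$ plus a non-degeneracy condition) is a clean reformulation that would also make the paper's asserted transitivity and asymmetry immediate, as you note in your closing parenthetical.
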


\begin{proof}
By contradiction, let $\text{Maj}^I(s_i)\neq\emptyset$, for all $s_i\in S_i$. Fix $s_i^{(1)}\in S_i$. An iteration of the property $\text{Maj}^I\neq\emptyset$ allows to construct a chain
$$
s_i^{(1)}<_i^Is_{i}^{(2)}<_i^I\ldots<_i^Is_{i}^{(n)}
$$
By finiteness of the set $S_i$, we may assume that at some point we get $s_{i}^{(n)}=s_{i}^{(1)}$, with $s_{i}^{(n-1)}\neq s_{i}^{(1)}$. Observe that the relation $<_i^I$ might not be transitive, but the underlying relation $<_i$ is transitive. Therefore, we have gotten
$$
s_i^{(1)}<_is_{i}^{(n-1)}\qquad\text{and}\qquad s_i^{(n-1)}<_is_i^{(1)}
$$
that contradict each other.
\end{proof}

Let UnPl$_i^{(1)}(\mathcal G)$ be the set of player $i$'s unplayable strategies of the first type and denote by Pl$_{i}^{(1)}(\mathcal G):=S_{i}\setminus$UnPl$_{i}^{(1)}(\mathcal G)$, that is well defined and non-empty by Lemma \ref{lem:first type maximal}. The notation Pl$_{-i}^{(1)}(\mathcal G)$ stands for the cartesian product of all the Pl$_{j}^{(1)}(\mathcal G)$'s but Pl$_i^{(1)}(\mathcal G)$. \\

Now we start the description of the second sub-restriction, that will be done through the definition of unplayable strategies of the second type. The \emph{principle} underlying this second restriction is somehow the dual principle of the one underlying the previous restriction:
\begin{itemize}
\item[(PA)] If $s\in S_{i}$ is a strategy for which there is another strategy $t\in S_{i}$ such that player $i$ has a little disadvantage, but the other players have a big advantage, then player $i$ will prefer the strategy $t$ in order to help the society.
\end{itemize}

As said earlier, the principle (CS) is a sort of controlled selfishness, whereas the principle (PA) sounds more like \emph{pure altruism}. We can formalize it in a similar way as we formalized (CS).  Indeed, we can use the number $P_{i}(s,t)$ and $A_{ij}(s,t)$ in the dual way.

\begin{definition}\label{defin:unplayable second type}
{\rm A strategy $t\in $Pl$_{i}^{(1)}(\mathcal G)$ is called \emph{unplayable of the second type} for player $i$ if there is another strategy $s\in$ Pl$_{i}^{(1)}(\mathcal G)$ such that 
\begin{enumerate}
\item $s<_{i}t$,
\item There exists $j\in L_{i}(s,t)$ such that $P_i(s,t)\leq A_{ij}(s,t)$.
\end{enumerate} }
\end{definition}

\begin{example}\label{ex:dictator}
{\rm Consider the standard dictator game $\text{Dict}(1,10,0)$, that is, a proposer offers a division of 10 dollars, which the responder has to accept. The standard perfect equilibrium analysis of this games is that the proposer should keep all the money, since the responder has no say. Nevertheless, in experiments has been reported that most \emph{proposers} offer a certain amount of money to the responder (see, for instance,  \cite{Fo-Ho-Sa-Se94}). Bolton and Ockenfels explained this anomalous behavior using equity in \cite{Bo-Oc00}. We can explain it using iterated deletion of strategies using altruism. Let us model the set of strategies of the proposer, for simplicity, by $S=\{0,1,\ldots,10\}$. It is clear that there is a chain of super-dominated strategies for the proposer: $0<_{\text{prop}}1<_{\text{prop}}2<_{\text{prop}}\ldots<_{\text{prop}}10$. Now, one can easily show that every strategy $s$ with $s<a_{\text{prop},\text{resp}}(1,10,0)$ is unplayable of the second type for the proposer. Therefore, cooperative equilibrium theory predicts that the proposer offers a fairer division because of altruism. Moreover, the larger is $a_{\text{prop},\text{resp}}(1,10,0)$, the larger is the offer. 
}
\end{example}

\begin{example}\label{ex:ultimatum2}
{\rm We have seen in Example \ref{ex:ultimatum} that the cooperative equilibrium without altruism of the Ultimatum game is that the proposer offers 0.25 and the responder accepts. Nevertheless, it has been reported that most of proposers actually propose a share closer to $0.5$. This can be explained taking into account altruism. Indeed, if we model the set of strategies of the proposer using the set $S=\{0.00,0.01,0.02,\ldots,1.00\}$, then in the induced game $\text{Ind}(G,p_c)$, the strategy $0.25$ is super-dominated for the proposer by $0.26$, which is super-dominated by $0.27$ and so forth. As in the previous example, some of these strategies are unplayable of the second type and therefore, altruism can explain why offers are tipically larger than 0.25.}
\end{example}

Let UnPl$_{i}^{(2)}(\mathcal G)$ be the set of player $i$'s unplayable strategies of the second type and denote by  Pl$_i^{(2)}(\mathcal G):=$Pl$_{i}^{(1)}(\mathcal G)\setminus$UnPl$_{i}^{(2)}(\mathcal G)$. This set is well defined and non-empty thanks to the obvious analogue of Lemma \ref{lem:first type maximal}. 

Now, we start an iteration of this procedure: we consider the subgame $\mathcal G_2$ of $\mathcal G$ defined by the strategy sets Pl$_{i}^{(2)}(\mathcal G)$ and we reduce again these strategy sets computing the unplayable strategies of the two types; in this way, we get other sets of \emph{playable} strategies Pl$_{i}^{(2)}(\mathcal G_2)$; and we start again the procedure. By finiteness of the strategy sets $S_i$, this iteration stabilizes, that is, at some step $k$, one has have Pl$_{i}^{(2)}(\mathcal G_{k})=$Pl$_{i}^{(2)}(\mathcal G_{k+1})$ and this set is clearly non-empty. We set Pl$_{i}:=$ Pl$_i^{(2)}(\mathcal G_k)$.

\begin{definition}\label{defin:playable strategies}
{\rm The set Pl$_{i}$ is called \emph{set of playable strategies} of player $i$.}
\end{definition}

Before starting the second step of the construction, that is, the prospect theoretical analogue of Section \ref{se:cooperative equilibrium eut}, we give more details about the game introduced in Example \ref{ex:sure gain}. Indeed, this game seems interesting from several viewpoints. First, it is an example where the procedure of elimination of unplayable strategies stabilizes after more than one step. Then, it is one more example where iterated regret minimization theory fails to predict the intuitively right behavior, whereas the cooperative equilibrium does apparently the right job. Finally, it is an example where super-dominated strategies turn out to be helpful to modify iterated regret minimization theory allowing prior beliefs and consequently obtaining the right prediction also under iterated regret minimization theory. 

\begin{example}\label{ex:more iterations}
{\rm Consider the same two-person zero-sum game as in Example \ref{ex:sure gain}, that is, the game with gain matrix
$$
  \begin{array}{ccc}
     & L & R \\
    U & 0,0 & 10,-10 \\
    D & 1,-1 & 1,-1 \\
  \end{array}
$$

Assume that $a_{12}(1,1,-1)<1$. Observe that L super-dominates R and that $L_2(R,L)=\emptyset$. Consequently, R is unplayable of the first type. On the other hand, in this first step U and D are not ordered and therefore, the first step of the iterated deletion leads to the subgame $\mathcal G_2$ where the vertical player still has both strategies U and D available, whereas the horizontal player has only the strategy L. Therefore, in the game $\mathcal G_2$, the strategy $U$ is unplayable of the second type for the column-player (since $a_{12}(1,1,-1)<1)$ and, consequently, one more application of deletion of unplayable strategies leads to the \emph{trivial game} where the vertical player has only the strategy D and the horizontal player has only the strategy L. Therefore, (D,L) is the unique cooperative equilibrium of this game. Observe that this is also a Nash equilibrium. The other Nash equilibrium is $\left(D,\frac{9}{10}L+\frac{1}{10}R\right)$, as one can easily check, which is quite unreasonable, since there is no reason why the horizontal player should play R: playing L she will certainly get at least the same as playing R. Therefore, the cooperative equilibrium coincides with the most reasonable Nash equilibrium. 

On the other hand, a direct application of the iterated regret minimization procedure predicts that the vertical player plays U surely. This is also quite unreasonable, because playing U makes sense only if the column-player plays R. This cannot happen, above all if the column-player understands that the row-player is going to play U. As suggested by Halpern in a private communication, one can fix this problem allowing prior beliefs, in a conceptually similar way as in \cite{Ha-Pa12}, Section 3.5: first one eliminates weakly dominated strategies, then applies iterated regret minimization. Nevertheless, this procedure is questionable on one point: it is not clear why one should eliminate weakly dominated strategies in this context and not in the Traveler's dilemma\footnote{If one eliminates weakly dominates strategies in the Traveler's dilemma before applying iterated regret minimization, one obtains the Nash equilibrium.}. One can fix this problem using super-domination. If one eliminates super-dominated strategies in the game under consideration before applying iterated regret minimization, one finds the \emph{right} solution (D,L), coherently with the classical theory and the cooperative equilibrium. Moreover this is perfectly coherent with the other examples discussed in \cite{Ha-Pa12} and in particular with the Traveler's dilemma: the Traveler's dilemma has many weakly dominated strategies, but none of them is super-dominated.

}
\end{example}


\section{The cooperative equilibrium under cumulative prospect theory}\label{se:cooperative equilibrium cpt}

In this section we finally define the cooperative equilibrium for games in explicit form $\mathcal G=\mathcal G(P,S,\mathfrak g,g,a,f)$ in complete generality. 

In the previous section we have restricted the sets of pure strategies and we have defined the sets of playable strategies $\text{Pl}_i$. We denote by $\text{Red}(\mathcal G)$ this \emph{reduced game}, that is, the subgame of $\mathcal G$ defined by the strategy subsets $\text{Pl}_i$. The cooperative equilibrium of $\mathcal G$ (under prospect theory and taking into account altruism) will be obtained by applying the construction described in Section \ref{se:cooperative equilibrium eut} to the reduced game $\text{Red}(\mathcal G)$ and making use of cumulative prospect theory. To this end, notice that the construction presented in Section \ref{se:cooperative equilibrium eut} depends on expected utility theory only on two points:

\begin{enumerate}
\item We have used expected utility theory to compute the value of the prospect

$$
(e_{i,J}(p),\tau_{i,J}(p))
$$

indexed by $J\subseteq P\setminus\{i\}$. Using cumulative prospect theory, the value that we denoted $v_i(p)$ should be replaced by its prospect theoretical analogue

\begin{align}
v_i^{\text{CPT}}(p)=\sum_{J\subseteq P\setminus\{i\}}v(e_{i,J}(p))\pi_{\tau_{i,J}(p)}.
\end{align}

Since the value $v(x)$ represents how the players perceive a gain of $x$, also the definition of the induced game should be modified: indeed we should allow only the profiles of strategies $\sigma$ such that $v(g_i(\sigma))\geq v_i^{\text{CPT}}(p)$.
Consequently, the two applications of the function $v$, the first in the computation of $v_i^{\text{CPT}}$ and the second in the definition of the induced game, are somehow inverse. Indeed, if $v$ were linear and increasing, the induced game would have been the same as the one obtained by setting $v(x)=x$. Now, we know from cumulative prospect theory that $v$ is strictly increasing. Approximating it by a linear function we can simplify a lot the definition setting $v(x)=x$. This explains why the examples in Section \ref{se:towards cpt} fit the experimental data very well: they have been conducted with relatively small monetary outcomes and there were no possible losses. Of course, it is predictable that in case of possible large gains and/or losses, this approximation will create problems.

\item The definition of the value of a coalition and then the definition of the cooperative equilibrium rely in the computation of Nash equilibria of the games $\mathcal G_p$ and $\text{Ind}(\mathcal G,p)$. The computation of Nash equilibria uses expected utility theory, precisely in the definition of the mixed extension of the gain functions. Unfortunately, the natural translation of Nash equilibrium in the language of cumulative prospect theory leads to define an object that might not exist (see \cite{Cr90} and, more generally, \cite{Fi-Pa10}). To avoid this problem we consider a solution concept which is a bit more general than Nash equilibrium, the so-called equilibrium in beliefs, introduced by Crawford in \cite{Cr90}. Crawford's equilibria in beliefs have the good property to exist in our context, contain all Nash equilibria, and reduce to Nash equilibria in many cases. The remainder of the section is devoted to this.
\end{enumerate}

Before recalling the definition of an equilibrium in beliefs, we need to do a preliminary step, that is writing the mixed extension of the gain functions in the language of cumulative prospect theory. Since notation will get complicated very soon, we start by an example.

\begin{example}\label{ex:from eut to pct}
{\rm Consider the (already reduced) game with gain matrix:
$$
  \begin{array}{ccc}
     & C & D \\
    C & 2,2 & 0,3 \\
    D & 3,0 & 1,1 \\
  \end{array}
$$
Assume that the column-player (player 1) plays the mixed strategy $\sigma_1=\frac{1}{8}C+\frac{7}{8}D$ and player $2$ plays the mixed strategy $\sigma_2=\frac{1}{4}C+\frac{3}{4}D$. Under expected utility theory, we would have
$$
g_1(\sigma_1,\sigma_2)=\sum_{x\in S_1}\sum_{y\in S_2}g_1(x,y)\sigma_1(x)\sigma_2(y).
$$
Let us compute step by step this number to put in evidence where and how expected utility theory must be replaced by cumulative prospect theory. Fix $\sigma_2$ as before and observe that we have a finite family of prospects, one for each pure strategy of the first player. In this example, they are:
$$
p^{(C,\sigma_2)}=\left(2,\frac{1}{4}; 0,\frac{3}{4}\right)\qquad\qquad\text{and}\qquad\qquad p^{(D,\sigma_2)}=\left(3,\frac{1}{4}; 1,\frac{3}{4}\right).
$$
Now, under expected utility theory (and this is the first point where expected utility theory is used), one computes the values of the two prospects, obtaining, in this particular example, the values
$$
V_1(C,\sigma_2)=2\cdot\frac{1}{4}+0\cdot\frac{3}{4}=\frac{1}{2}\qquad\text{and}\qquad V_1(D,\sigma_2)=3\cdot\frac{1}{4}+1\cdot\frac{3}{4}=\frac{3}{2}.
$$
Of course, these numbers are equal to the ones that are usually denoted by $g_1(C,\sigma_2)$ and $g_2(D,\sigma_2)$, respectively. Now, to compute the value usually denoted by $g_1(\sigma_1,\sigma_2)$, one first constructs one more prospect using the measure $\sigma_1$, that is
$$
p^{(\sigma_1,\sigma_2)}=\left(\frac{1}{2},\frac{1}{8}; \frac{3}{2}, \frac{7}{8}\right),
$$
and finally, again under expected utility theory, one computes the value of this prospect, obtaining the well known value $g_1(\sigma_1,\sigma_2)$.
}
\end{example}

We want to replace the classical values $g_i(\sigma_i,\sigma_{-i})$ with new values $V_i(\sigma_i,\sigma_{-i})$, obtained replacing expected utility theory with cumulative prospect theory. From the example, it is clear that, to compute $V_i(\sigma_i,\sigma_{-i})$ in cumulative prospect theory, we only need to compute first $V_i(s_i,\sigma_{-i})$, for all $s_i\in\text{Pl}_i$, using cumulative prospect theory on the prospects $p^{(s_i,\sigma_{-i})}$, and then compute $V_i(\sigma_1,\sigma_2)$ using cumulative prospect theory on the prospect $p^{(\sigma_i,\sigma_{-i})}$. To make this idea formal, recall that in cumulative prospect theory the outcomes of a prospect are supposed to be ordered in increasing way. It is then useful to associate to each prospect $p=(x_1,p_1;\ldots;x_n,p_n)$, with distinct outcomes\footnote{If this prospect does not contain the zero-payoff, we add it with probability zero.}  $x_i\in\mathbb R$, a permutation $\rho(p)$ that is just the permutation of the $x_i$'s such that $\rho(p)(x_i)<\rho(p)(x_{i+1})$, for all $i$. Now for all $(s_i,s_{-i})\in\text{Pl}_i\times\text{Pl}_{-i}$, we define 
$$
A_i^{(s_i,s_{-i})}:=\left\{s_i'\in\text{Pl}_{-i} : g_i(s_i,s_{-i})=g_i(s_i,s_{-i}')\right\}.
$$
For any fixed $s_i$, the sets $A_i$'s form a partition of $\text{Pl}_{-i}$. Choose a transversal $\mathcal T_{s_i}$ for this partition, that is, $\mathcal T_{s_i}$ is a subset of $\text{Pl}_{-i}$ constructed picking exacty one point for each set $A_i$. Now fix $(\sigma_i,\sigma_{-i})\in\mathcal P(\text{Pl}_i)\times\mathcal P(\text{Pl}_{-i})$ and define the prospect
\begin{align*}
p^{(s_i,\sigma_{-i})}=\left(g_i(s_i,s_{-i}),\sigma_{-i}\left(A_i^{(s_i,s_{-i})}\right)\right),
\end{align*}
where $s_{-i}$ runs over the transversal $\mathcal T_{s_i}$. Of course, this prospect does not depend on the particular transversal we fixed. Now, the outcomes of this prospect might not be ordered in increasing way. Therefore, before applying cumulative prospect theory to compute $V_i\left(s_i,p^{(s_i,\sigma_{-i})}\right)$ we must apply the permutation $\rho\left(p^{(s_i,\sigma_{-i})}\right)$. Consequently, with the notation as in Section \ref{se:cpt}, we obtain
\begin{align*}
V_i\left(s_i,p^{(s_i,\sigma_{-i})}\right)=\sum_{s_{-i}\in\mathcal T_{s_i}}\pi_{s_{-i}}v\left(\rho\left(p^{(s_i,\sigma_{-i})}\right)(g_i(s_i,s_{-i}))\right).
\end{align*}
To construct the second prospect $p^{(\sigma_i,\sigma_{-i})}$, we follow an analogous procedure. Let
$$
B_i^{(s_i,\sigma_{-i})}=\left\{s_i'\in\text{Pl}_i : V_i(s_i,\sigma_{-i})=V_i(s_i',\sigma_{-i})\right\}.
$$ 
The $B_i$'s form a partition of $\text{Pl}_i$. Let $\mathcal T_{\sigma_{-i}}$ be a transversal for this partition. We define the prospect
\begin{align*}
p^{(\sigma_i,\sigma_{-i})}=\left(V_i(s_i,\sigma_{-i}),\sigma_i\left(B_i^{(s_i,\sigma_{-i})}\right)\right),
\end{align*}
where $s_i$ runs over $\mathcal T_{\sigma_{-i}}$. Therefore we obtain
\begin{align*}
V_i(\sigma_i,\sigma_{-i})=\sum_{s_i\in\mathcal T_{\sigma_{-i}}}\pi_{s_i}v\left(\rho\left(p^{(\sigma_i,\sigma_{-i})}\right)\left(\sum_{s_{-i}\in\mathcal T_{s_i}}\pi_{s_{-i}}v\left(\rho\left(p^{(s_i,\sigma_{-i})}\right)\right)(g_i(s_i,s_{-i}))\right)\right).
\end{align*}

One is now tempted to define a Nash equilibrium of a game under cumulative prospect theory as a profile $(\sigma_1,\ldots,\sigma_N)$ of mixed strategies such that for all $i\in P$ and for all $\sigma_i'\in\mathcal P(S_i)$ one has $V_i(\sigma_i,\sigma_{-i})\geq V_i(\sigma_i',\sigma_{-i})$. As mentioned before, unfortunately, there are games without Nash equilibria in this sense. To avoid this problem, we use Crawford's trick to extend the set of Nash equilibria including the so-called equilibria in beliefs. To do that, first we recall the following classical definition.

\begin{definition}
{\rm Let $\mathcal D\subseteq\mathbb R^n$ be a convex set and let $\phi:\mathcal D\to\mathbb R$ be a function. The upper contour set of $\phi$ at $a\in\mathbb R$ is the set
$$
U_\phi(a)=\left\{x\in\mathcal D : \phi(x)\geq a\right\}.
$$
$g$ is called quasiconcave on $\mathcal D$ if $U_\phi(a)$ is a convex set for all $a\in\mathbb R$.}
\end{definition} 

The following definition appeared in \cite{Cr90}, Definition 3. In this definition the word \emph{game} is used to denote a classical finite game in normal form $\mathcal G=\mathcal G(P,S,u)$, where the utility functions are extended to the mixed strategies in a possibly non-linear manner.

\begin{definition}
{\rm The convexified version of a game is obtained from the game by replacing each player's preferences by the quasiconcave preferences whose upper contour sets are the convex hulls of his original upper contour sets, leaving other aspects of the game unchanged.}
\end{definition}

We now define Crawford's equilibria in beliefs through an equivalent condition proved by Crawford himself in \cite{Cr90}, Theorem 1.

\begin{definition}\label{defin:beliefs}
{\rm An equilibrium in beliefs is any Nash equilibrium of the convexified version of the game.}
\end{definition}

Crawford proved in \cite{Cr90}, Observation 1, that a Nash equilibrium is always an equilibrium in beliefs and, in Observation 2, that the set of equilibria in beliefs coincides with the set of Nash equilibria if the players have quasiconcave preferences.

We can now define the cooperative equilibria of a game in explicit form.

\begin{definition}\label{defin:cooperative equilibrium cpt}
{\rm The cooperative equilibria of a game in explicit form $\mathcal G=\mathcal G(P,S,\mathfrak g,g,a,f)$ are obtained applying to the reduced game $\text{Red}(\mathcal G)$ the procedure described in Section \ref{se:cooperative equilibrium eut}, replacing
\begin{itemize}
\item the function $g_i(\sigma)$ with the function $V_i(\sigma)$,
\item the notion of Nash equilibrium with the notion of equilibrium in beliefs,
\item the value function $v_i(p)$ in (\ref{eq:value}) with the one in (15).
\end{itemize}
}
\end{definition}

\begin{theorem}\label{th:existence}
Cooperative equilibria exist for all finite games in explicit form.
\end{theorem}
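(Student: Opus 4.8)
The plan is to argue by induction on the number of players $N$, mirroring the inductive structure of the definition, for which Definition~\ref{defin:one player} supplies the base case. For $N=1$ the game $\mathcal G$ is a one-player game: its gain function attains a maximum on the finite pure strategy set, and among the equilibria supported on these maximizers the disagreement $\text{Dis}$ is continuous on a compact set, so acceptable equilibria exist and $\text{Coop}(\mathcal G)\neq\emptyset$. Assume now that $\text{Coop}(\mathcal H)\neq\emptyset$ for every game $\mathcal H$ in explicit form with fewer than $N$ players, and let $\mathcal G$ have $N$ players.

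First I would check that the reduced game $\text{Red}(\mathcal G)$ is well defined with non-empty strategy sets. Every quantity entering the definition of unplayable strategies of either type — the loser sets $L_i(s,t)$, the numbers $P_i(s,t)$, $Q_j(s,t)$, $A_{ij}(s,t)$ — is expressed through $\text{Coop}(\mathcal G_s)$ and $\text{Coop}(\mathcal G_t)$ for fiber games, which have $N-1$ players and hence non-empty cooperative equilibrium sets by the inductive hypothesis; thus the relevant infima and suprema are taken over non-empty sets and the relations $<_i^I$ and its second-type analogue are well defined. By Lemma~\ref{lem:first type maximal} and its second-type counterpart each single reduction leaves a non-empty set of playable strategies, and by finiteness of the $S_i$ the iteration stabilises after finitely many steps. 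Hence $\text{Red}(\mathcal G)$ is a finite game in explicit form with non-empty strategy sets $\text{Pl}_i$.

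Next I would run the construction of Section~\ref{se:cooperative equilibrium eut} on $\text{Red}(\mathcal G)$ under cumulative prospect theory, verifying non-emptiness and compactness at every stage. The key preliminary observation is that the CPT-extended payoff $V_i$ is continuous: the outcomes of each prospect $p^{(s_i,\sigma_{-i})}$ and $p^{(\sigma_i,\sigma_{-i})}$ are fixed gain values, so the ranking permutation $\rho$ is constant while the probabilities depend continuously on $\sigma$, and since $v$ and $w$ (hence the weights $\pi$) are continuous with $w(0)=0$, $V_i$ is a composition of continuous maps. For each coalition structure $p$, the coalitional game $\mathcal G_p$ then has a non-empty compact set of equilibria in beliefs: by Definition~\ref{defin:beliefs} these are the Nash equilibria of the convexified game, whose strategy sets are non-empty compact convex simplices and whose payoffs are continuous and quasiconcave, so existence follows from Crawford's theorem~\cite{Cr90}, and the equilibrium set is closed in a compact set, hence compact. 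Minimising the continuous $\text{Dis}_{p_\alpha}$ over this set yields a non-empty compact set of acceptable equilibria; maximising the continuous coalitional value over it gives non-empty compact sets $\widetilde M(p_\alpha,p)$, hence non-empty compact $M(p_\alpha,p)$ and $M(p)$. Consequently the incentives $D_j(p)$ (attained on the compact set $\text{Dev}_j(p)$), the risks $R_j(p)$ (with the convention $R_j(p)=0$ when $T$ is empty, cf. Remark~\ref{rem:zero risk}), the prior probabilities $\tau_{i,J}(p)$, and the numbers $e_{i,J}(p)$ are all well-defined finite reals, so the value $v_i^{\text{CPT}}(p)=\sum_{J\subseteq P\setminus\{i\}}v(e_{i,J}(p))\pi_{\tau_{i,J}(p)}$ is defined.

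Finally, since there are only finitely many coalition structures, $v_i^{\text{CPT}}$ attains a maximiser $p$, and the induced game $\text{Ind}(\mathcal G,p)$ is non-empty because $v_i^{\text{CPT}}(p)$ is an aggregate of values actually attained by the gain functions, so some profile satisfies $V_i(\sigma)\geq v_i^{\text{CPT}}(p)$ for every $i$. The allowed mixed strategies form convex compact sets and $V_i$ is continuous, so applying Crawford's existence theorem once more to $\text{Ind}(\mathcal G,p)$ produces an equilibrium in beliefs, which is by definition a cooperative equilibrium of $\mathcal G$, completing the induction. The crux of the whole argument is exactly this appeal to equilibria in beliefs: under cumulative prospect theory $V_i$ need not be quasiconcave in $\sigma_i$, so $\mathcal G_p$ and $\text{Ind}(\mathcal G,p)$ may have no Nash equilibrium in the literal CPT sense, and it is precisely to bypass this failure that Crawford's concept is invoked; I expect the one genuinely delicate point to be the verification that $V_i$ is continuous, since both the applicability of Crawford's theorem and the compactness of all intermediate equilibrium sets rest on it.
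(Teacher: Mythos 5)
Your proof is correct and follows essentially the same route as the paper's: non-emptiness of the reduced game $\text{Red}(\mathcal G)$ via the iterated-deletion lemmas, then Crawford's equilibria in beliefs (existence through the convexified game, compactness through continuity) applied first to each coalitional game $\mathcal G_p$ and then to the induced game $\text{Ind}(\mathcal G,p)$. The differences are presentational rather than substantive: you make explicit the induction on the number of players (which the paper builds into the definition in Section~\ref{se:playable} and then simply cites) and the continuity of $V_i$ (which the paper asserts by appealing to the continuity of cumulative prospect theory preferences).
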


\begin{proof}
Let $\mathcal G=\mathcal G(P,S,\mathfrak g,g,a,f)$ be a finite game in explicit form. We have already proved in Section \ref{se:playable} that the iterated deletion of strategies leads to a well defined and non-empty subgame $\text{Red}(\mathcal G)$. We shall prove that the construction in Section \ref{se:cooperative equilibrium eut} can be applied to $\text{Red}(\mathcal G)$. 

Fix a coalition structure $p$ and let $\mathcal G_p$ be the game obtained by $\text{Red}(\mathcal G)$ grouping together the players in the same coalition, as in Equation (\ref{eq:grouping}). By Crawford's theorem (see \cite{Cr90}, Theorem 2), the set of equilibria in beliefs of $\mathcal G_p$ is not empty. Indeed, this is just the set of Nash equilibria of the convexified game. Now, since the preferences in cumulative prospect theory are described by a continuous function and since continuity is preserved by passing to the convexified version (see \cite{Ro70}, Theorem 17.2), it follows that the set of equilibria in beliefs of $\mathcal G_p$ is compact. Consequently, the sets $M(p_\alpha,p)$ in Equation (\ref{eq:maximal}) are non-empty and the definition of the induced game $\text{Ind}(\mathcal G,p)$ goes through. Observe that the induced game is not empty, since the value of a prospect is at most as the maximal outcome of the prospect, which is an infimum of values attained by the composed function $v\circ V_i$. Therefore, the set of $\sigma$'s such that $(v\circ V_i)(\sigma)\geq v^{\text{CPT}}_i(\sigma)$ is non-empty. Consequently, the set of mixed strategies of the induced game is a non-empty convex and compact subset of the set of mixed strategies of the original game $\mathcal G$. Since in the convexified version of a game the set of mixed strategies does not change, the convexified version of $\text{Ind}(\mathcal G,p)$ has a non-empty set of Nash equilibria (Indeed, observe that Nash's proof of existence of equilibria goes through also if only distinguished convex and compact subsets of mixed strategies are allowed). Applying Theorem 1 in \cite{Cr90}, it follows that the induced game $\text{Ind}(\mathcal G,p)$ has a non-empty set of equilibria in beliefs. Hence, Definition \ref{defin:exact cooperative equilibrium eut} defines a non-empty notion of equilibrium.

Consequently, Definition \ref{defin:cooperative equilibrium cpt} defines a non-empty notion of equilibrium.
\end{proof}

The following corollary follows straight from the construction.

\begin{corollary}\label{cor:predictive}
The exact cooperative equilibrium of a game $\mathcal G$ does not depend on the fairness functions and on the altruism parameters, if
\begin{enumerate}
\item $\mathcal G$ does not have any super-dominated strategies,
\item for every coalition structure $p$, the game $\mathcal G_p$ has a unique equilibrium in beliefs.
\end{enumerate}
\end{corollary}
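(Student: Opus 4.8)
The plan is to prove the two independence claims separately by tracing precisely where the altruism parameters $a_{ij}$ and the fairness functions $f_i$ enter the construction of the cooperative equilibrium, and then to show that each hypothesis renders the corresponding data irrelevant.

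First I would dispose of the altruism parameters. The only place the functions $a_{ij}$ appear is in the iterated deletion of Section \ref{se:playable}, through the quantities $A_{ij}(s,t)$ used in Definitions \ref{defin:unplayable first type} and \ref{defin:unplayable second type}. But both of these definitions require as a prerequisite that $s<_i t$, i.e.\ that $s$ be super-dominated by $t$. Under hypothesis (1) the game $\mathcal G$ has no super-dominated strategies, so no strategy is unplayable of either type; hence $\text{UnPl}_i^{(1)}(\mathcal G)=\text{UnPl}_i^{(2)}(\mathcal G)=\emptyset$ and $\text{Pl}_i=S_i$ for every $i$. Since nothing is deleted in the first round, the subgame $\mathcal G_2$ coincides with $\mathcal G$ and the iteration stabilizes immediately, giving $\text{Red}(\mathcal G)=\mathcal G$. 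As the parameters $a_{ij}$ are used nowhere else, the cooperative equilibrium cannot depend on them.

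Next I would treat the fairness functions. As the text remarks explicitly just after Definition \ref{defin:cooperative equilibrium}, the functions $f_i$ enter the construction at a single point: the disagreement functional $\text{Dis}_{p_\alpha}$ of Definition \ref{defin:agreement}, whose minimizers single out the acceptable equilibria $\text{Acc}_{p_\alpha}(\mathcal G_p)$ inside the equilibrium set of each coalition game $\mathcal G_p$. Under hypothesis (2) each $\mathcal G_p$ has a unique equilibrium in beliefs $\sigma^\ast$, so the equilibrium set is the singleton $\{\sigma^\ast\}$ and the minimization is vacuous; hence $\text{Acc}_{p_\alpha}(\mathcal G_p)=\{\sigma^\ast\}$ regardless of the choice of $f_i$. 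I would then observe that every subsequent object depends on $f_i$ only through $\text{Acc}_{p_\alpha}(\mathcal G_p)$: the sets $\widetilde M(p_\alpha,p)$ and $M(p)$, the incentives $D_j(p)$ and risks $R_j(p)$, the prior probabilities $\tau_{i,J}(p)$, the infima $e_{i,J}(p)$, the values $v_i(p)$ (resp.\ $v_i^{\text{CPT}}(p)$), and finally the induced game $\text{Ind}(\mathcal G,p)$ together with its equilibria in beliefs. Since $\text{Acc}_{p_\alpha}(\mathcal G_p)$ has been shown to be the $f_i$-independent singleton $\{\sigma^\ast\}$, the entire downstream computation, and hence the cooperative equilibrium, is independent of the fairness functions.

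The two parts together give the claim. I would note that the argument runs verbatim in both the expected-utility formulation of Section \ref{se:cooperative equilibrium eut} and the cumulative prospect theory formulation of Definition \ref{defin:cooperative equilibrium cpt}, since the latter merely replaces $g_i$ by $V_i$, Nash equilibria by equilibria in beliefs, and $v_i$ by $v_i^{\text{CPT}}$, none of which moves the location of the dependence on $f_i$ and $a_{ij}$. The only delicate point — and the likely main obstacle — is the bookkeeping verification that $f_i$ and $a_{ij}$ really occur nowhere outside $\text{Acc}_{p_\alpha}(\mathcal G_p)$ and the iterated deletion, respectively; this is immediate from the definitions but must be checked against each displayed formula for the argument to be airtight.
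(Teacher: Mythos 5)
Your proposal is correct and is essentially the paper's own argument: the paper gives no explicit proof (it says the corollary ``follows straight from the construction''), and the two observations you spell out — that hypothesis (1) makes the iterated deletion vacuous so $\text{Red}(\mathcal G)=\mathcal G$ and the $a_{ij}$ never enter, and that hypothesis (2) makes the minimization of $\text{Dis}_{p_\alpha}$ trivial so $\text{Acc}_{p_\alpha}(\mathcal G_p)$ is the $f_i$-independent singleton — are exactly the two points the construction is meant to make evident, the second being stated verbatim by the author right after Definition \ref{defin:cooperative equilibrium}. Your version merely does the bookkeeping (including the extension from Nash equilibria to equilibria in beliefs) that the paper leaves implicit.
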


\begin{remark}\label{ex:asymmetric matching pennies}
{\rm Also in this case we may define the quantal cooperative equilibrium under cumulative prospect theory and taking into account altruism: agent $i$ plays with probability $e^{v_i^{\text{CPT}}(p)}/\sum_pe^{v_i^{\text{CPT}}(p)}$ a quantal level-k solution of the induced game $\text{Ind}(\text{Red}(\mathcal G),p)$. Such quantal cooperative equilibrium explains deviations from Nash equilibrium that have been observed also in purely competitive games, as the asymmetric matching pennies experimented in \cite{Go-Ho01}, that is, the game with gains:

$$
  \begin{array}{ccc}
     & L & R \\
    U & 320,40 & 40,80 \\
    D & 40,80 & 80,40 \\
  \end{array}
$$

It was reported in \cite{Go-Ho01} that most of \emph{vertical players} played the strategy $U$ and most of the \emph{horizontal players} played the strategy $R$. Observe that the Nash equilibrium for the vertical player is the uniform measure on $\{U,D\}$, since the gains of the horizontal player are the same as in the matching pennies. We believe that this behavior ultimately relies in a mistake of the vertical players due to the illusion of a large gain and this mistake is predicted by the horizontal player. This interpretation is confirmed by the cooperative equilibrium. Indeed, the value of the cooperative coalition is easily seen to be equal to $40$ for both players and, therefore, exact cooperative equilibrium reduces to the Nash equilibrium and quantal cooperative equilibrium reduces to the quantal level-k solution. The latter one performs well in such a situation: if the vertical player makes the mistake to think that the horizontal player is level-0 and then she or he is indefferent between playing $L$ and $R$, then the vertical player would have a strong incentive to play the strategy $U$. At this point, the assumption that the horizontal player is level-2 implies that she or he best responds (up to a small mistake) to the strong deviation towards $U$, which is a strong deviation towards $R$.

}
\end{remark}

\section{Summary, conclusions and open problems}\label{se:conclusions}

Over the last decades it has been realised that all classical solution concepts for one-shot normal form games fail to predict human behavior in several strategic situations. 

The purpose of this paper was to attribute these failures to two basic problems, the use of utility functions and the use of solution concepts that do not take into account human attitude to cooperation. While the former problem could be theoretically overcome replacing utility functions by gain functions and applying cumulative prospect theory, the second problem needs a different analysis of the structure of a game. We founded this new analysis on a seemingly reasonable principle of cooperation.
\begin{itemize}
\item[(C)] Players try to forecast how the game would be played if they formed coalitions and then they play according to their best forecast. 
\end{itemize}

To make this idea formal, it has required some effort. In Section \ref{se:explicit form} we have observed that passing from utility functions to gain functions implies that we must take into account new phenomena, such as altruism and perception of gains. We have formalized these phenomena defining the so-called games in explicit form.
After an example describing informally the main idea, in Section \ref{se:cooperative equilibrium eut} we have formalized the principle of cooperation and we have defined the cooperative equilibrium for games in explicit form without using altruism parameters and cumulative prospect theory. The reason of this choice is that altruism and cumulative prospect theory play an active role only on a limited class of games. Indeed, in Section \ref{se:examples} we have shown that the cooperative equilibrium without altruism and cumulative prospect theory already performs well in a number of relevant games. In Section \ref{se:towards cpt} we have discussed a few examples where cumulative prospect theory starts playing an active role and, after a short introduction to cumulative prospect theory in Section \ref{se:cpt}, we have started to adapt the definition given of cooperative equilibrium given in Definition \ref{defin:exact cooperative equilibrium eut} in order to be applied to every game in explicit form and using cumulative prospect theory.  In Section \ref{se:playable} we have used altruism parameters to delete strategies that are not good for the collectivity. This iterated deletion of strategies leads to define a certain subgame. The study of this subgame (done in Section \ref{se:cooperative equilibrium eut} under expected utility theory and in Section \ref{se:cooperative equilibrium cpt} under cumulative prospect theory) contains all relevant new ideas of the paper, that are, the use of the principle of cooperation and the use of cumulative prospect theory: we have assumed that every players try to forecast how the game would be played if they formed coalition; we have used cumulative prospect theory to define a notion of value of a coalition and then, appealing to some Bernoulli-type principle, we have postulated that agents play according to the coalition with highest value.

As shown in the examples in Section \ref{se:examples}, the theory has many positive consequences: to the best of our knowledge, it is the first theory able to organize the experimental data collected for the Traveler's Dilemma, Prisoner's Dilemma, Nash bargaining problem, Bertrand competition, public good game, ultimatum game, and dictator game. These successful applications and the lack of examples where the cooperative equilibrium fails (qualitatively) to predict human behavior, make us optimistic about this direction of research. Nevertheless, we are perfectly aware that the theory is questionable in several points which deserve more attention in future researches. These points include:
\begin{enumerate}
\item To understand if there are other parameters to be taken into account in the definition of games in explicit form. In particular, there is some evidence that \emph{badness parameters} can play an important role in some situations, one of which is described in the following point. 
\item To understand what happens if the players do not agree in playing according to the same coalition structure. Indeed, the cooperative equilibrium works very well in all examples we have discussed since there is a unique coalition structure $p$ that maximizes the value of all players. What happens if different players have different coalition structures maximizing their own value? Do all players \emph{defect} and play according to the coalition structure generated by the maximizing coalition structures, that is, the coarsest coalition which is finer than all maximizing coalition structures? Or, do the players agree to play the fairest coalition structures? In this latter case, what happens if there are many fairest coalition structures? Do the players play uniformly among them?

The difficulty in understanding this point is due mainly to the lack of relevant examples where this situation happens. In fact, we are aware of only one example, where this situation is \emph{about to happen}. We construct this game taking inspiration by a similar game recently experimented in \cite{We-Ra}. Two players have the same strategy set $S_1=S_2=S=\{0,1,\ldots,30\}$. The gain functions are as follows:
$$
g_1(x,y)=\left\{
  \begin{array}{lll}
    30-x, & \hbox{if $x\geq y$} \\
    0, & \hbox{if $x<y$}\\
  \end{array}
\right. \qquad\text{and}\qquad
g_2(x,y)\equiv30.
$$
Let us compute the cooperative equilibrium of this game. The unique equilibrium of the cooperative coalition structure $p_c=\{1,2\}$ is $(0,0)$, where both players get $30$. Observe that no players have incentive to deviate from this equilibrium and consequently, the values of $p_c$ are
$$
v_1(p_c)=30\qquad\qquad\text{and}\qquad\qquad v_2(p_c)=30.
$$

Now consider the selfish coalition structure $p_s=(\{1\},\{2\})$. The value for the second player is again $v_2(p_s)=30$, whereas this time one gets $v_1(p_s)=15$. Indeed, this is one of the cases where the natural symmetry of the game implies that we can restrict the set $\widetilde M(\{2\},p_s)$ taking its barycenter. In other words, when player 2 plays according to $p_s$, she is indifferent among her choices and so she plays uniformly. Player 1's best reply to player 2's uniform measure is the uniform measure, that gives payoff 15. Since this is a Nash equilibrium, there are no possible deviations and so $v_2(p_s)=15$.

So in this case, the unique cooperative equilibrium is $(0,0)$. In other words, player 2 favors player 1 playing $0$ and player 1 knows that player 2 is going to favor her and so she plays $0$ as well. This seems a very natural solution but: \emph{Do humans really play $(0,0)$?} 

We tried to simulate this game with colleagues and friends and something interesting apparently came out. One friend, asked to play the game in the role of player 2, said: ``\emph{It depends. If player 1 is very rich, I would play $30$ for sure!}''. The most common question we were asked after explaining the game was: ``\emph{Do I know the other player?}''. After asking to imagine an anonymous situation, the most common answer (nine out of ten) was: ``\emph{Why should I hurt a person that I do not know? I would play $0$.}''. One person said: ``\emph{I don't care! I would pick a number randomly}''.

Of course, these cannot be considered as experimental data, but we believe that they represent however a light evidence that \emph{badness parameters} do exist. It is not yet clear to the author how to manage them from a general point of view and we will postpone the theorization to a new paper hopefully helped by more experimental data. However, we can say right now how these parameters would effect the play of this particular game. We guess that the badness parameters $b_{ij}$ are non-negative real numbers, where $b_{ij}=0$ represents the situation where player i is absolutely \emph{good} against player j, that is, player i favors player j whenever possible, and $b_{ij}=\infty$ represents the situation where player $i$ is absolutely bad against player $j$. As said, it is not yet clear which would be the exact mathematical definition and the exact effect of these parameters on a general game, but the idea is that in this particular version of the Ultimatum game, the second player plays according to the parameter $b_{21}$ and player 1 estimates a priori the parameter $b_{21}$ and plays a best reply to the strategy that player 2 would play if her badness parameters were equal to player 2's estimation.

\item The formula used in Equation (\ref{eq:value}) to compute the value of a coalition seems a quite reasonable one and it meets the experimental data quite well, but it is certainly only a first tentative. More thoughts, possibly supported by more experimental data, may help to understand the value of a coalition. The main point is probably:
\begin{itemize}
\item to understand whether the value should be computed taking into account also deviations towards \emph{safe strategies}.
\end{itemize}
Indeed, consider the two-player game with gain matrix

$$
  \begin{array}{ccc}
     & a & b \\
    a & 1,1 & 0,-k \\
    b & -k,0 & 10,10 \\
  \end{array}
$$

The cooperative equilibrium is $(b,b)$ independently on $k$. Is this reasonable or for $k$ large enough players prefer not to risk and play the safe strategy $(a,a)$?

\item The formula used in Equation (\ref{eq:value}) to compute the value of a coalition is questionable on another point. In the definition of the numbers $R_j(p)$, we have considered the first step of the reasoning: if player $j$ decides to abandon the coalition structure $p$, then another player, say $k$, may do the same either to follow selfish interests or because she or he is clever enough to anticipate player $j$ deviation. But, if player $j$ is also clever enough to anticipate player $k$'s deviation, then player $j$ may \emph{deviate from the deviation}, and so forth. We could continue this reasoning and define the risk $R_j(p)$ to be, roughly speaking, the maximal lost that player $j$ incurs when a profile of strategies that can be reached by a sequence of deviations is played. Of course, this definition would come at the price of a major technical difficulty, but it would be theoretically more appealing, since it would allow to construct a bridge from the cooperative equilibrium theory to another well studied behavioral model. We recall that $\tau_{i,J}(p)$ has been called \emph{prior probability}, since, despite being an apparently very precise evaluation of how player $i$ measures the event ``\emph{players in J abandon the coalition structure $p$}'', it is well possible that a specific player $i$, for personal reasons, evaluates this event in a completely different way. In particular, the number $\tau_{i,\emptyset}$ represents the probability that player $i$ assigns to the event that no players abandon the coalition. The types of players that are usually called, in economic literature, \emph{altruistic} (resp. \emph{selfish}) would then correspond to those players $i$ who compute the value of a coalition setting $\tau_{i,\emptyset}=1$ (resp. $\tau_{i,\emptyset}=0$), independently of the prior value of such a probability. The correspondence between selfish players and players who set $\tau_{i,\emptyset}=0$ fails using the formula in (\ref{eq:value}), since this formula with $\tau_{i,\emptyset}=0$ can still predict cooperation, even though in a smaller rate, as, for instance, in the Traveler's dilemma. 

\item The exact computation of the cooperative equilibrium is hard for several reasons. First because it goes through the computation of the equilibria in beliefs of several\footnote{As observed by J. Halpern in a private communication, it is
implausible that an agent would consider all coalitions.  In even 
moderately
large games, there are just too many of them.  She may consider some natural
coalitions (e.g., the coalition of all agents), but only a relatively 
small number.
Of course, a theory characterizing which coalitions would be considered is
not easy to come by.} (sub)games. These equilibria are computationally hard to find \cite{Da-Go-Pa06}. Second, because it uses cumulative prospect theory, that is computationally harder than expected utility theory. On one hand, the method that we have proposed is perfectly algorithmic and therefore it might be helpful to write a computer program to compute the cooperative equilibria and make easier the phase of test them on easy real-life situations. On the other hand, it would be important to investigate some computationally easier variant. Of course, quantal level-k theory can be seen as a computationally easier variant, but this theory has the serious issue that it would not be predictive, in the sense that one has to conduct experiments to estimate the error parameter. One could try to avoid this problem using the level-k theory (i.e., only bounded rationality).
\item Iterated deletion of strategies using altruism functions in Section \ref{se:playable} was certainly quite sketchy and it is likely that future researches will suggest a different procedure. In particular, the definition of unplayable strategies of the second type for player $i$ requires that only one particular player $j$ receives a large loss. It is possible that this condition is not sufficient to convince player $i$ to renounce to her better strategy, in case when the players in $P\setminus\{i,j\}$ receives a large gain. 
\item We have defined altruism functions operationally, meaning that one could theoretically compute them by conducting an experiment on the generalized dictator game. It would be important to find an operational way to define the fairness functions.
\end{enumerate}

Open problems include:
\begin{enumerate}
\item Many experiments with different purposes should be conducted. Indeed, an interesting fact is that cooperative equilibrium makes sometimes completely new predictions. A stream of experiments should be devoted to verify or falsify these predictions. For instance,
\begin{itemize}
\item Apparently, the cooperative equilibrium is the unique solution concept predicting an increasing rate of cooperation in the public good game, as the marginal return approaches $1$. It seems that this prediction has a partial confirmation from experimental data, but, as far as we know, only one experiment has been devoted to report this behavior, that is, \cite{IWW94}. Analogously, it seems that the cooperative equilibrium (under cumulative prospect theory) is the unique solution concept predicting or, at least, justifying a rate of cooperation in the public good game with a large number of players. Also in this case, we are aware of only one experimental study devoted to observe this unexpected behavior, that is, again, \cite{IWW94}.
\item Apparently, the cooperative equilibrium is the unique solution concept predicting a rate of cooperation in the Prisoner's dilemma depending on the particular gains. It seems that this prediction is partially confirmed by experimental data, but only on the repeated Prisoner's dilemma (see \cite{DRFN08} and \cite{Fu-Ra-Dr12}). Experiments with a one-shot parametrized Prisoner's dilemma should be conducted to verify or falsify this prediction.
\end{itemize}
Another stream of experiments should be devoted to answer some theoretical questions. At this first stage of research, we believe that the most important one is:
\begin{itemize}
\item to understand whether the value of a coalition structure should be computed taking into account also deviations towards \emph{safe strategies}.
\end{itemize}
\item Have a better understanding of the relation between Nash equilibria and cooperative equilibria (under expected utility theory) for two-person zero-sum games, when the players have the same perception of gains. Indeed, Nash equilibrium performs quite well for zero-sum games and it is possible that all deviations from Nash equilibrium can be explained only making use of cumulative prospect theory. Therefore, it would be important to understand if the cooperative equilibrium (under expected utility theory and assuming $f_1=f_2$) refines Nash equilibrium, in the sense that the set of exact cooperative equilibria is always a subset of the set of Nash equilibria. In this context, it would also be interesting to start from relevant classes of zero-sum games, as the group games, introduced and studied in \cite{Mo10}, \cite{Ca-Mo12}, \cite{Ca-Sc12}. Of course, also a counter-example would be very important to understand if and where the theory can be modified.
\item As stressed several times, the probability $\tau_{i,J}(p)$ is just a prior probability, in the sense that it is well possible that a particular player $i$ computes this probability in a completely different way. It would be important to understand the factors that may influence the evaluation of this probability. For instance, it is well known that individual-level rate of cooperation depends on family history, age, culture, gender, even university course \cite{Ma-Am81}, religious beliefs \cite{HRZ11}, and time decision \cite{RGN12}. The dream is to incorporate this factors into parameters to use to compute the probability $\tau_i$ at an individual-level. Particularly interesting would also be the study of this probability when players can talk each other or have any sort of contact (e.g., eye-contact). Indeed, these contacts can create phenomena of mental reading (see \cite{Wi-MN-GJ}) that we believe can be explained in terms of evaluation of the probability $\tau$.
\end{enumerate}

\end{document}